\newcommand{\cO}[1]{}
\newcounter{assum}
\newtheorem{remark}{Remark}
\newtheorem{deff}{Definition}
\pgfplotsset{width=10cm,compat=1.9}
     \newcommand{\jenvelope}{join envelope}
\definecolor{green}{HTML}{30AE17}
\title{Signaling with Commitment}
\author{Raphael Boleslavsky   \and Mehdi Shadmehr\thanks{Boleslavsky: Department of Business Economics \& Public Policy, Indiana University (e-mail: \texttt{rabole@iu.edu}); Shadmehr: Department of Public Policy and Department of Economics, University of North Carolina at Chapel Hill (e-mail: \texttt{mshadmeh@gmail.com}). We thank Nageeb Ali, Nemanja Antic, Heski Bar-Isaac, Mike Baye, Dan Bernhardt, Ben Brooks, Steven Callander, Rick Harbaugh, Alexander Jakobsen, Thomas Jungbauer, Navin Kartik, Elliot Lipnowski, John Maxwell, Paula Onuchic, Marilyn Pease, Luciano Pomatto, Doron Ravid, Santanu Roy, Fedor Sandomirskiy, Mark Whitmeyer, Kun Zhang, and seminar audiences at the Kelley School of Business, New York University (Stern), SAET 2023, SEA 2023, and Southern Methodist University. Boleslavsky thanks the Weimer Faculty Fellowship at Indiana University for financial support.}}
\date{\today}
\begin{document}

\begin{titlingpage}

\maketitle

\abstract{
We study the canonical signaling game, endowing the sender with commitment power: before learning the state, sender designs a strategy, which maps the state into a probability distribution over actions. We provide a geometric characterization of the sender's attainable payoffs, described by the topological join of the graphs of the interim payoff functions associated with different sender actions.  We then incorporate payoff irrelevant messages into the game, characterizing the attainable payoffs when sender commits to (i) both messages and actions, and (ii) only messages. By studying the tradeoffs between these model variations, we highlight the power of commitment to actions. We apply our results to the design of adjudication procedures and rating systems.

\vspace{.3in}

\noindent \textit{Keywords:} Signaling, Information Design, Legal Procedure, Financial Advice  

\vspace{.1in}

\noindent \textit{JEL:}
C72, D82, D83, G24, K4
}
\end{titlingpage}

\let\markeverypar\everypar
 \newtoks\everypar
 \everypar\markeverypar
 \markeverypar{\the\everypar\looseness=-2\relax}
\parskip=2.8pt

\newpage
\baselineskip=20.pt

\setcounter{page}{1}

\section{Introduction}

Signaling games, including Spence signaling \citep{S1973}, reputation games \citep{kreps/wilson:1982}, communication with lying costs \citep{K2009}, and burning money \citep{Austen-Smith/Banks:2001} arise naturally in various strategic environments, including labor economics, industrial organization \citep{N1974,MR1986,BR1991}, finance \citep{LP1977, R1977,DD1999,D2005}, macroeconomics \citep{Angeletos/Hellwig/Pavan:2006}, organizational economics \citep{Bar-Isaac/Deb:2020}, political economy \citep{B1991,AcemogluEtAl2013}, and cultural economics \citep{S1973b,C1988,Benabou/Tirole:2006}. We study a standard class of signaling games, with the novel feature that the sender can commit to his strategy before learning the state. Depending on the application, such commitment power could arise from the design of institutions, formal contracts, reputation incentives, algorithms, AI, or smart contracts.\footnote{A number of papers provide justification for the assumption that sender can commit to a communication strategy in the Bayesian Persuasion framework, many of which have natural analogs in our setting. \citet{BQ2020} derive conditions under which the desire to maintain influence in future interactions can generate commitment power for the sender. \citet{DPS2022} study how a formal contract allows sender to gain commitment power in the production of information. Moreover, as \citet{SBo15} and \citet{M2020} argue, by choosing the personnel composition of institutions, even autocrats can commit to different policies.} We call this strategic environment \textit{signaling with commitment}.

In particular, consider the standard signaling game described in \citet[pp. 324-5]{FT1991}. There is a sender (leader) and a receiver (follower). There is a state of the world $\omega\in\Omega$, and a common prior,  $\omega\sim \mu_0$ with full support.  The sender observes the state of the world and chooses an action $s\in S$. The receiver observes the sender's action $s$ and chooses an action $a\in A$.  The players' payoffs depend on the state and actions: let sender's payoff be $v(s,a,\omega)$ and receiver's be $u(s,a,\omega)$. The sender's strategy $\Pi$ is a set of probability distributions $\pi(\cdot|\omega)$ over actions $s\in S$, conditional on state $\omega\in\Omega$. Receiver's strategy is also a set of probability distributions $\sigma(\cdot|s)$ over actions $a\in A$, conditional on sender action $s\in S$. We build on the standard game by allowing sender to commit to his strategy $\Pi$ before the state is realized.

In this paper, we study signaling with commitment using the ``belief-based'' approach. In particular, we do not study the sender's strategy directly; instead, we focus on the receiver's posterior beliefs, which are generated by Bayes' rule applied to the sender's strategy. In other words, we treat the sender's strategy $\Pi=\{\pi(s|\omega)\,|\, s\in S,\, \omega\in \Omega\}$ as a signal structure or statistical experiment, which conveys information about the state $\omega$ to the receiver. The sender's realized action $s$ is treated as the realization of the signal or the outcome of the experiment. Thus, the problem of finding the sender's optimal strategy in a signaling game with commitment is replaced with the problem of designing the sender's optimal statistical experiment. In contrast to the information design literature \citep{KG2011, BM19, 
T2019}, in which the experiment only affects payoffs by revealing information about the state, in our setting the experiment's realization is the sender's action in the signaling game. It therefore directly affects the players' payoffs, beyond its effect on the receiver's belief and action.

We provide a geometric characterization of the sender's optimal strategy and payoff. Applying the belief-based approach, we graph the sender's payoff in the space of posterior beliefs.
Because the sender's payoff depends directly on the signal realization (his action) and on the posterior belief it induces, in the space of beliefs, the sender has a separate payoff function for each action $s\in S$. We show that the set of feasible payoffs for the sender can be characterized geometrically as the \textit{topological join}, ``join'' henceforth, of the graphs of these payoff functions.\footnote{If $A$ and $B$ are subsets of $\mathbb{R} ^n$, then the join of $A$ and $B$ is the set of all line-segments connecting a point in set $A$ to a point in set $B$ \citep{RS82}.} Though it can be represented as a particular convex combination of the sender's payoffs, in general the join is a subset of the convex hull of the sender's payoffs, is not a convex set, and its upper boundary cannot be found via concavification, contrasting with information design.

In signaling with commitment, the receiver's only source of information about the state is the sender's action.  For example, an organization that commits to procedures for adjudicating internal grievances may be barred from releasing additional information about individual cases. Similarly, a financial analyst who commits to a procedure by which he publicly rates assets may face sanctions for revealing additional information privately. More broadly, the sender's interaction with the receiver may be governed by specific rules, institutions, or technologies which prevent the sender from transmitting information beyond his action.

In some environments, such restrictions might be relaxed. To address this possibility, we study a benchmark with ``extended commitment,'' in which sender can provide the receiver with additional information. In particular,  we study commitment power in the cheap talk extension of the signaling game: we incorporate a large space $M$ of payoff-irrelevant messages, and allow sender to commit to a joint distribution of message and action $(m,s)\in M\times S$, conditional on the state.  Thus, commits to transmit information using the action, message, or any combination of the two.

We show that in the extended commitment benchmark, sender can achieve any payoff in the convex hull of the union of the payoff graphs, and the sender's optimal payoff lies on its upper boundary. Furthermore, we show that this is the highest payoff that sender can attain in a signaling game without changing the payoff functions, the prior belief, or the extensive form. By comparing extended commitment and signaling with commitment, we characterize settings in which the extended commitment payoff can be attained \textit{without} transmitting additional messages---commitment to actions alone is sufficient. 

We also show that in order to achieve the extended commitment payoff, commitment to actions is essential---in general, it cannot be achieved by committing to a message protocol alone. To do so, we analyze a ``pre-persuasion'' benchmark in which sender can commit to a message strategy but not actions. Specifically, we augment the signaling game with an initial persuasion stage. In this stage, the uninformed sender designs a statistical experiment that reveals information about the state. After the experiment is realized, the signaling game is played: sender privately learns the state, selects an action, and then receiver observes the action and responds. We construct the sender's highest attainable payoff. In this construction, the sender's payoff graphs must be restricted to include only combinations of beliefs and actions at which a pooling equilibrium exists. Any payoff inside the convex hull of these restricted graphs can be attained with pre-persuasion, and the highest payoff lies on its upper boundary. Because the payoff graphs must be restricted when constructing the convex hull, sender's payoff with pre-persuasion is generally lower than with extended commitment.

Throughout the paper, we illustrate our findings in an example. An organization designs an adjudication procedure that maps valid (invalid) grievances into a probability distribution over resolutions, either dismissing the case or a costly remedy. As mentioned above, confidentiality restrictions prevent the organization from revealing information about the merits of a grievance, beyond what is inferred from the resolution. Outside stakeholders (donors, advertisers, advocacy groups) observe the adjudication procedure and the resolution, and then decide whether to retaliate against the organization (e.g., withhold donations or advertising, stage protests). Stakeholders retaliate if they believe that a grievance has been mishandled, but their power is limited: the organization would rather dismiss the case and incur retaliation than implement a remedy. If the organization cannot commit to a procedure, then it dismisses all grievances and the stakeholders always retaliate. We show that if the organization can commit to its procedures, then it continues to dismiss all grievances when the prior belief is high or low, but for intermediate priors, it remedies valid grievances with positive probability. Removing confidentiality restrictions allows the organization to transmit additional information about the merits of a grievance. With this ability, the organization again dismisses all cases, suggesting a rationale for confidentiality restrictions beyond privacy concerns.

We also study an applications of our results to the design of a rating system, where the designer's payoff depends directly on the nominal rating. This direct preference for nominal ratings introduces a signaling concern which distinguishes the problem from standard information design: the designer's payoff includes terms resembling Spence signaling \citep{S1973b} and ``burned money'' \citep{Austen-Smith/Banks:2001,Kartik:2007}.  A preference for nominal ratings arises naturally in applications. For example, when designing grading policies, schools may internalize professors' distaste for inflating grades of undeserving students. Schools may also internalize the cost of student effort or the value of additional skills  that must be acquired to achieve a high nominal grade \citep{onuchic/Ray:2023}. The possibility of naive receivers, who interpret ratings according to their nominal meanings, also creates a preference for nominal ratings, as in  \citet{KOS2007} and \citet{Inderst/Ottaviani:2012a, Inderst/Ottaviani:2012b}. The existence of such receivers has been documented empirically in financial and labor markets \citep{DeFranco/Lu/Vasvari:2007, Malmendier/Shanthikumar:2007, HHH2023}.

To fix ideas, we focus on an environment in which an analyst designs a rating policy for a financial asset with the goal of increasing investment. Sophisticated investors observe both the rating policy and the realized rating when updating their beliefs; naive investors update as if the rating policy is honest. Thus, a favorable rating leads to greater investment than is warranted given its true information content, and an unfavorable rating leads to less. In addition, we assume that exaggeration is costly: assigning the high rating to a bad asset may  open the analyst to legal sanction or create psychological distress. The analyst's optimal rating policy with commitment  differs significantly from the ratings that arise without commitment. Showing that the optimal rating policy depends on the global shape  of the analyst's payoff, we characterize conditions under which the optimal rating \textit{understates} the asset value, becomes \textit{more} informative when there are more naive investors, and  \textit{inverts} the ratings' nominal meanings---a surprising result that is consistent with some puzzling empirical findings \citep{DeFranco/Lu/Vasvari:2007}. 

With extended commitment, the analyst  provides additional information to sophisticated investors, which is not observed (or processed) by the naive investors. In the US, a number of regulations impose strict restrictions on such private communication, with the intent of protecting investors that do not observe it. We show that such restrictions may bring unintended consequences: the ability to communicate privately with sophisticated investors changes the analyst's public rating policy, and, paradoxically, \textit{increases} the naive investors' payoffs when the prior belief about asset quality is low.

Our paper connects to the growing literature on Bayesian persuasion and information design. In \citet{KG2011}, sender and receiver play a cheap talk game, and sender commits to his message strategy---or equivalently, to a statistical experiment that generates information about the state. In this characterization, the sender's optimal payoff is the concave envelope of the sender's payoff graph in the space of posterior beliefs. This belief-based approach has been used by \citet{AC2016} to study political persuasion, by \citet[2018]{BC2015} and \citet{AK2020} to study competitive information production, and by \citet{ZZ2016} to study contest design. In concurrent work, \citet{KLT2024} develop  a belief-based approach for the study of signaling games without commitment. A number of papers use the belief-based approach to study costly information production or rational inattention \citep{S2003, CD2015}. In this literature, the cost of an information structure is posterior-separable \citep{Denti:2022}---it can be expressed as an expectation of a function of the realized posterior. \citet{PST2023} provide an axiomatic foundation for a class of such cost functionals. In contrast to these literatures, signaling with commitment can be viewed as an information design problem in which the realization of the experiment is directly payoff-relevant, beyond its effect on beliefs.

\subsection{An Example}\label{EX1} To communicate the key ideas, we begin with an example. An organization (sender, he) designs procedures or formal rules for addressing grievances. A grievance is either valid (e.g., a true violation of Title IX), which we denote $\omega=v$, or invalid, $\omega=f$ (e.g., a mistaken or false accusation). Both types of grievance arise exogenously within the organization at different rates. Consequently, a new grievance is believed to be invalid with probability $\mu_0$. After learning the details of the grievance and determining its type, the organization has two available actions, $S=\{a,d\}$. By selecting $a$,  the organization addresses the grievance and implements a costly remedy; by selecting $d$, it dismisses the grievance without redress. In either case, stakeholders---including potential donors, customers, advertisers, partners, or advocacy groups---observe the organization's decision and update their beliefs, $\mu_s=\Pr(\omega=f|s)$ for $s\in\{a,d\}$. Moreover, confidentiality restrictions prohibit the organization from communicating about the details of individual grievance cases. Thus, the organization's action is the stakeholders' only signal about the merits of the case. The stakeholders, as a unitary actor (receiver, she), then decide whether to retaliate against the organization, for example by withholding donations, boycotting, dissolving partnership, or protesting. In particular, stakeholders retaliate when they believe that the organization is likely to have mishandled the grievance. Thus, if the organization dismisses the grievance, stakeholders retaliate when they believe that the grievance is likely valid, $\mu_d<\theta_d$; if the organization addresses the grievance, then the stakeholders retaliate when they believe the case is likely to be invalid, $\mu_a>\theta_a$ (where $\theta_a$ and $\theta_d$ are exogenous). For concreteness, we focus on $\theta_a<\theta_d$, but the following analysis holds for all $\theta_a$, including $\theta_a>1$.

Both redress and retaliation are costly for the organization. The organization's payoff from redress ($s=a$) is 0, and its payoff from dismissal ($s=d$) is 1, provided that the stakeholders do not retaliate. Retaliation by stakeholders imposes a cost of $l\in(0,1)$ on the organization. With $l<1$, stakeholders have limited sway over the organization: the organization would rather dismiss a grievance and face the stakeholders' punishment than address it.

 The organization designs and commits to formal procedures for addressing grievances. These procedures  determine the probability $\pi(s|\omega)$ that action $s\in\{a,d\}$ is taken in state $\omega\in\{v,f\}$.
 Consequently, they determine the stakeholders' posterior belief induced by each realized action, $\mu_s\equiv\Pr(\omega=f|s)$, $s\in\{a,d\}$, and the unconditional probability $\tau(\mu_s)$ of each posterior belief.\footnote{In particular, $\tau(\mu_s)\equiv\mu_0\pi(s|f)+(1-\mu_0)\pi(s|v)$ and $\mu_s=\Pr(\omega=f|s)=\mu_0\pi(s|f)/\tau(\mu_s)$.} The law of iterated expectations implies $E_\tau[\mu_s]=\mu_0$. Furthermore, any combination of beliefs $\{\mu_a,\mu_d\}$ and probabilities $\{\tau(\mu_a),\tau(\mu_d)\}$ that satisfies the law of iterated expectations arises from some adjudication procedure $\pi(\cdot|\cdot)$.
 

 \begin{figure}
\begin{minipage}{0.5\textwidth}
\begin{tikzpicture}[scale=0.7]

\draw[line width=0.5pt] (0,-6.5) -- (0,4); 
\fill (8,-2) node[right] {\footnotesize{$\mu$}};
\draw[line width=0.5pt] (2,-2) -- (8,-2); 



\draw[line width=1pt,blue] (0,0)--(4,0);
\draw[line width=1pt,blue] (4,4)--(8,4);
\draw[line width=0.5pt,blue,dashed] (4,0)--(4,4);
\fill (6,4.5) node[right] {\footnotesize{$\widehat{v}(\cdot,d)$}};
\fill (4,-2) node[below] {\footnotesize{$\theta_d$}};
\draw[line width=0.5pt,black,dotted] (4,-2)--(4,0);
\fill (0,-2) node[left] {\footnotesize{$0$}};
\fill (0,0) node[left] {\footnotesize{$1-l$}};
\draw[line width=0.5pt,black,dotted] (4,4)--(0,4);
\fill (0,4) node[left] {\footnotesize{$1$}};
\fill (0,-6) node[left] {\footnotesize{$-l$}};
\draw[line width=0.5pt,black,dotted] (0,-6)--(2,-6);
\fill (2.5,-2) node[below] {\footnotesize{$\theta_a$}};



\draw[line width=1pt,red] (0,-2)--(2,-2);
\draw[line width=1pt,red] (2,-6)--(8,-6);
\draw[line width=0.5pt,red,dashed] (2,-6)--(2,-2);
 \fill (6,-5.5) node[right] {\footnotesize{$\widehat{v}(\cdot,a)$}};



\filldraw[color=blue, fill=blue] (4,4) circle (3pt);

\filldraw[color=red, fill=red] (2,-2) circle (3pt);



\end{tikzpicture}
\end{minipage}
 \begin{minipage}{0.5\textwidth}
\begin{tikzpicture}[scale=0.7]

\draw[line width=0.5pt] (0,-6.5) -- (0,4); 
\fill (8,-2) node[right] {\footnotesize{$\mu$}};
\draw[line width=0.5pt] (2,-2) -- (8,-2); 



\draw[line width=1pt,blue] (0,0)--(4,0);
\draw[line width=1pt,blue] (4,4)--(8,4);
\filldraw[color=blue, fill=white] (4,4) circle (3pt);
\draw[line width=0.5pt,blue,dashed] (4,0)--(4,4);
\fill (6,4.5) node[right] {\footnotesize{$\widehat{v}(\cdot,d)$}};
\fill (4,-2) node[below] {\footnotesize{$\theta_d$}};
\draw[line width=0.5pt,black,dotted] (4,-2)--(4,0);
\fill (0,-2) node[left] {\footnotesize{$\mu_a^*$}};
\fill (4,4) node[above] {\footnotesize{$\mu_d^*$}};

\fill (0,0) node[left] {\footnotesize{$1-l$}};
\draw[line width=0.5pt,black,dotted] (4,4)--(0,4);
\fill (0,4) node[left] {\footnotesize{$1$}};
\fill (0,-6) node[left] {\footnotesize{$-l$}};
\draw[line width=0.5pt,black,dotted] (0,-6)--(2,-6);
\fill (2.5,-2) node[below] {\footnotesize{$\theta_a$}};

\draw[line width=1pt,violet] (4,4)--(8/6,0);
\draw[line width=1pt,violet] (4,4)--(8,4);
\draw[line width=1pt,violet] (0,0)--(8/6,0);


\draw[line width=0.5pt,black,dashed] (0,-2)--(8/6,0);
\draw[line width=1pt,red] (0,-2)--(2,-2);
\draw[line width=1pt,red] (2,-6)--(8,-6);
\draw[line width=0.5pt,red,dashed] (2,-6)--(2,-2);
 \fill (6,-5.5) node[right] {\footnotesize{$\widehat{v}(\cdot,a)$}};

 \draw[line width=0.5pt,black,dashed] (2,-6)--(0,0);


\filldraw[color=red, fill=white] (0,-2) circle (3pt);
\filldraw[color=red, fill=white] (0,-2) circle (3pt);
\filldraw[color=blue, fill=white] (4,4) circle (3pt);
\filldraw[color=blue, fill=white] (0,0) circle (3pt);

\filldraw[color=red, fill=white] (0,-2) circle (3pt);
\filldraw[color=red, fill=red] (2,-2) circle (3pt);
\filldraw[color=red, fill=white] (2,-6) circle (3pt);

\fill[violet,opacity=0.05] (0,-2)--(0,0)--(8/6,0)--(4,4)--(8,4)--(8,-6)--(2,-6)--(0.7,-2)--(0,-2);
\fill (8/6,-2) node[below] {\footnotesize{$\widehat\mu$}};
 \draw[line width=0.5pt,black,dotted] (8/6,-2)--(8/6,0);
\end{tikzpicture}
\end{minipage}
\caption{\label{Fig:Adjud} Motivating Example.}
\end{figure}
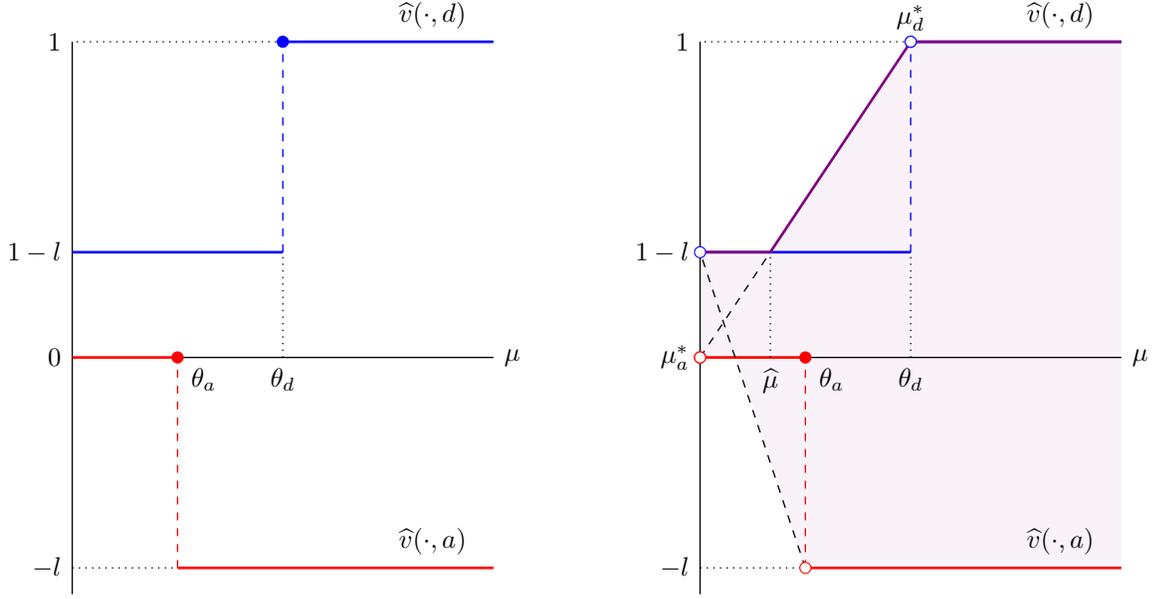

Building on these observations, the organization's optimal commitment can be analyzed graphically using Figure \ref{Fig:Adjud}. On the horizontal axis, we have $\mu$, the stakeholder's posterior belief that the grievance is invalid after observing the organization's action. The blue step function is the graph of $\widehat{v}(\mu,d)=1-\mathcal{I}(\mu<\theta_d)l$, which is the organization's expected payoff if it dismisses the case $(s=d)$ and the stakeholder's updated belief is $\mu$.\footnote{The organization expects 1 if the stakeholders do not retaliate, (i.e., if $\mu\geq \theta_d$) and $1-l$ otherwise.} Similarly, the red step function, $\widehat{v}(\mu,a)=-\mathcal I(\mu>\theta_a)l$, is the organization's expected payoff, if it addresses the case $(s=a)$ and the stakeholder's updated belief is $\mu$.\footnote{Here, the organization expects 0 if there is no retaliation, $\mu\leq\theta_a$ and $-l$ otherwise.}  

As described above, the organization's strategy induces a distribution over posteriors $\{\mu_a,\mu_d\}$, with a mean equal to the prior $\mu_0$. The organization's ex ante expected payoff of such a strategy, is then the expectation of $\widehat{v}(\mu_s,s)$ according to the same distribution over posteriors. Graphically, this payoff can be found by drawing a line segment connecting the point on the red curve above or below $\mu_a$ with the point on the blue curve above or below $\mu_d$, and then evaluating the height of the line segment at $\mu_0$. We then vary the organization's strategy over all possible $\{\mu_a,\mu_d\}$ to increase this height as much as possible, as illustrated in the right panel. Carrying out this procedure, we find that the largest payoff that the organization can attain with prior $\mu_0$ is the height of the purple curve above $\mu_0$. Furthermore, at any prior the shaded purple region represents all payoffs that the organization could attain by varying its procedures.

As Figure \ref{Fig:Adjud} illustrates, when the prior is high or low, $\mu_0\geq\theta_d$ or $\mu_0\leq \widehat\mu$, the organization always dismisses all grievances. In contrast, 
for moderate priors, $\mu_0\in(\widehat{\mu},\theta_d)$, the organization sometimes commits to address a valid grievance with positive probability, despite the fact that the worst payoff from dismissing a grievance is larger than the best payoff from addressing it---the highest value on the red curve is strictly below the lowest value on the blue one. This is evident in Figure \ref{Fig:Adjud}, where moderate priors are optimally spread into posteriors $\{\mu^*_a=0,\mu^*_d=\theta_d\}$.

Intuitively, if valid grievances are too unlikely under the prior, $\mu_0\geq \theta_d$, then the organization does not need to reveal information in order to deter retaliation, and thus dismissing all grievances is optimal. In contrast, for $\mu_0\leq\theta_d$, the organization has an incentive to reveal information in order to deter retaliation. However, in order to increase $\mu_d$, the organization must commit to address valid grievances with positive probability. Because addressing a grievance is costly, deterring retaliation in this way is only worthwhile if valid grievances are relatively unlikely, $\mu_0\in(\widehat{\mu},\theta_d)$. If valid grievances are too likely under the prior, $\mu_0<\widehat{\mu}$, it is better in expectation to dismiss all grievances and incur the wrath of stakeholders rather than paying the costs of redress. 

In this example, the organization's ability to commit to its procedures expands stakeholders influence over the organization. Without commitment the organization dismisses all grievances (recall $1-l>0$). With commitment, however, valid grievances are addressed with positive probability at moderate priors.

One key feature of this characterization is worth highlighting: the upper envelope representing the largest attainable payoff (in purple) is \textit{not} the concave envelope of the organization's payoffs, as it would be in Bayesian Persuasion---indeed it is not even concave. In this example, the concave envelope connects two points on the blue curve (evidently $(0,0)$ and $(\theta_d,1)$). Because the only source of information for the stakeholders is the organization's action, and different actions correspond to different payoff functions, when constructing the line segments that allow us to evaluate the organization's payoff we can only connect a point on the blue curve to a point on the red curve.
If a payoff constructed by connecting two points on the blue curve were feasible, then dismissing a grievance would generate two different posterior beliefs for the stakeholder. This is possible only if some  additional source of information exists, beyond the organization's action. We consider an extension with this feature in Section \ref{ACREC}. As we will see, when the stakeholders have limited sway ($l<1$), relaxing privacy restrictions  and allowing the organization to provide additional information to stakeholders leads to the dismissal of all grievances.

\section{Theory}

 \paragraph{Preliminaries.} To keep the exposition as straightforward as possible, we focus on a finite state space $|\Omega|<\infty$ and sender action space $|S|<\infty$. Furthermore, $A$ is a compact metric space, and both sender and receiver payoffs are continuous in $a$.\footnote{$|A|<\infty$ automatically satisfies these conditions.} A belief $\mu\in\Delta(\Omega)$ is a probability distribution over states, and $\mu(\omega)$ is the probability of state $\omega$ under belief $\mu$. By $\tau\in\Delta(\Delta(\Omega))$ we denote a probability distribution over beliefs, and $\tau(\mu)$ is the probability of belief $\mu$ in distribution $\tau$. Players have a common prior $\mu_0$ with full support on $\Omega$.

\paragraph{Belief Systems and Sender Strategies.}  A belief system $\mathcal{B}\equiv\{\mu_s,\tau(\mu_s)\}_{s\in S}$, is a set of beliefs and associated probabilities, $\tau(\mu_s)$, indexed by the sender's set of available actions. The probabilities in a belief systems must be exhaustive, i.e., $\sum_{s\in S}\tau(\mu_s)=1$, but $\tau(\mu_s)=0$ is allowed. Following \citet{KG2011}, we say that sender strategy $\Pi=\{\pi(s|\omega)\}_{(s,\omega)\in S\times\Omega}$ induces belief system $\mathcal{B}$ if and only if
\begin{align*}
	\tau(\mu_s)=\sum_{\omega\in\Omega}\mu_0(\omega)\pi(s|\omega),\quad\quad \mu_s(\omega)=\frac{\mu_0(\omega)\pi(s|\omega)}{\tau(\mu_s)}.
\end{align*}
If strategy $\Pi$ induces belief system $\mathcal{B}$, then $\mu_s$ is  the posterior belief associated with sender's action $s\in S$, and the probability $\tau(\mu_s)$ is the probability that action $s$ is selected under the sender's strategy. In other words, $\tau(\mu_s)$ is  the probability that the realization of the posterior belief is $\mu_s$. 

\begin{remark}
If certain actions in $S$ are never chosen, i.e. $\tau(\mu_s)=0$, then the associated posterior is undefined. In this case, we allow $\mu_s$ to take on any value. Such realizations of the posterior have no effect on the sender's expected payoff, and, given the sender's commitment power, they play no role in the analysis.	
\end{remark}

As is well-known, a belief system induced by a sender strategy must be Bayes-Plausible: $E_{\tau}[\mu]=\sum_{s\in S}\tau(\mu_s)\mu_s=\mu_0$. Following the argument of \citet{KG2011}, it is straightforward to show that any Bayes-Plausible belief system can  be induced by a corresponding sender strategy. Therefore, rather than designing the sender's strategy, we can consider the sender designing a belief system directly, with the constraint that it is Bayes-Plausible.

\begin{prop}\label{BS1} A belief system $\mathcal{B}$ is Bayes-Plausible if and only if it is induced by some sender strategy $\Pi$.
\end{prop}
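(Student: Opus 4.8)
The plan is to prove both directions of the biconditional, reducing the claim to the standard Bayesian-persuasion observation of \citet{KG2011}, adapted to the fact that here the signal realizations are indexed by the sender's action set $S$ rather than by an abstract message space.

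First I would establish the ``only if'' direction, which is the easy one. Suppose $\mathcal{B}=\{\mu_s,\tau(\mu_s)\}_{s\in S}$ is induced by some strategy $\Pi=\{\pi(s|\omega)\}$. Then by the defining relations I simply compute
\begin{align*}
\sum_{s\in S}\tau(\mu_s)\,\mu_s(\omega)
=\sum_{s\in S}\tau(\mu_s)\,\frac{\mu_0(\omega)\pi(s|\omega)}{\tau(\mu_s)}
=\mu_0(\omega)\sum_{s\in S}\pi(s|\omega)=\mu_0(\omega),
\end{align*}
for every $\omega\in\Omega$, where the last equality uses that $\pi(\cdot|\omega)$ is a probability distribution over $S$. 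Summing over $\omega$ gives $E_\tau[\mu]=\mu_0$, i.e.\ Bayes-Plausibility. The only subtlety is handling actions with $\tau(\mu_s)=0$: by the Remark such terms contribute nothing to the sum (the product $\tau(\mu_s)\mu_s(\omega)$ is zero regardless of the undefined $\mu_s$), so they can be dropped without affecting the computation.

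For the ``if'' direction I would follow \citet{KG2011} and construct an explicit strategy from a given Bayes-Plausible belief system. Given $\mathcal{B}$ with $\sum_{s}\tau(\mu_s)\mu_s=\mu_0$, I define, for each state $\omega$ with $\mu_0(\omega)>0$ (full support guarantees this for all $\omega$),
\begin{align*}
\pi(s\mid\omega)=\frac{\tau(\mu_s)\,\mu_s(\omega)}{\mu_0(\omega)}.
\end{align*}
I then verify this is a valid strategy: each $\pi(s|\omega)\ge 0$, and $\sum_{s}\pi(s|\omega)=\frac{1}{\mu_0(\omega)}\sum_s\tau(\mu_s)\mu_s(\omega)=\frac{\mu_0(\omega)}{\mu_0(\omega)}=1$ precisely by Bayes-Plausibility. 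Finally I check that this $\Pi$ induces $\mathcal{B}$, namely that $\sum_\omega\mu_0(\omega)\pi(s|\omega)=\tau(\mu_s)$ and that the posterior formula recovers $\mu_s$; both are direct substitutions.

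The main obstacle here is not mathematical depth but careful bookkeeping around degenerate cases. Specifically, when $\tau(\mu_s)=0$ the posterior $\mu_s$ is unconstrained, so the formula for $\pi(s|\omega)$ assigns zero probability to such $s$ regardless, and the induced-posterior check for that $s$ is vacuous (consistent with the Remark that such realizations may take any value). I would state these conventions explicitly so the verification is clean. Since full support of $\mu_0$ rules out division by zero in the definition of $\pi(s|\omega)$, no further regularity is needed, and the proof is essentially a restatement of the Kamenica--Gentzkow argument with the signal space identified with $S$.
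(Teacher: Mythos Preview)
Your proposal is correct and follows essentially the same argument as the paper's proof: both directions use the identical construction $\pi(s|\omega)=\tau(\mu_s)\mu_s(\omega)/\mu_0(\omega)$ and the same verification via Bayes-Plausibility, with your version adding a bit more explicit bookkeeping on the $\tau(\mu_s)=0$ case. One small slip: you have the ``if'' and ``only if'' labels swapped (induced $\Rightarrow$ Bayes-Plausible is the ``if'' direction of the stated biconditional), though the mathematics is unaffected.
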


\begin{remark}
For Proposition \ref{BS1}, it is important that all sender actions $s\in S$ are available in all states, $\omega\in \Omega$. If sender cannot select certain actions in certain states, then some Bayes-Plausible belief systems cannot be induced by a sender strategy. While important to the analysis, the assumption that all actions are available in all states is without loss of generality. To analyze a setting in which action $s$ is unavailable in $\omega$, one simply imposes a large cost on the sender for such a choice. For a given prior distribution $\mu_0$, a sufficiently large penalty ensures that selecting $s$ in state $\omega$ with positive probability is suboptimal.

\end{remark}

\paragraph{Receiver's Response.} The receiver selects an action after observing the sender's action $s$, which is associated with posterior belief $\mu_s$. The receiver's optimal action(s) solve
\begin{align*}
	\widehat{a}(\mu_s,s)=\arg\max_{a\in A} E_{\mu_s}[u(s,a,\omega)].
\end{align*}
In the environment we study, for each $(\mu_s,s)$ the receiver's problem has at least one solution. Furthermore, whenever the receiver's problem has multiple solutions, the sender's expected payoff  is maximized by at least one of them. We focus on an equilibrium in which (one of) the sender's preferred action(s) is chosen in this situation.\footnote{In particular, among the receiver's optimal actions $a\in\arg\max E_{\mu_s}[u(s,a,\omega)]$, we select one that maximizes sender's expected payoff, $E_{\mu_s}[v(s,a,\omega)]$.}

\paragraph{Sender's Problem.} Equilibrium with sender commitment requires that the belief system (and associated sender strategy) designed by the sender maximize the sender's ex ante payoff. To formulate the sender's problem,  first consider the sender's utility at the interim stage, after the sender's  action has been realized, but before the receiver responds. At this stage, $\mu_s$ is the belief about the state and $\widehat{a}(\mu_s,s)$ is the receiver's response. The sender's interim utility is
\begin{align*}
	\widehat{v}(\mu_s,s)=E_{\mu_s}[v(s,\widehat{a}(\mu_s,s),\omega)].
\end{align*}
From this expression, we see that the signal realization (sender action) affects the sender's interim payoff in two ways, beyond its effect on the posterior belief. Keeping the posterior belief $\mu_s$ constant, changes in $s$ can affect the interim utility directly through the sender's payoff (reflected in the first argument of $v(\cdot)$), or through its effect on the receiver's response (reflected in the second argument of $\widehat{a}(\cdot,\cdot)$).

The sender's problem is therefore to design a belief system, $\mathcal{B}=\{\mu_s,\tau(\mu_s)\}_{s\in S}$ in order to maximize the sender's ex ante expected payoff, subject to the constraint that the belief system is Bayes-Plausible,
\begin{align*}
(\text{Sender's Problem})\quad&\max_{\mathcal{B}} E_\tau[\hat v(\mu_s,s)]\quad
\text{subject to}\quad\ \sum_{s\in S}\ \mu_s \tau(\mu_s)=\mu_0.
\end{align*}
The geometric characterization of the solution is based on the \textit{topological join} (``join'') of sets \citep{RS82}.

\begin{deff}(Join of Sets). If $X_i\subset\mathbb{R}^n$ for $i=1,...,k$, then their join is \begin{align*}\text{join}(X_1,...,X_k)\equiv \Big\{\sum_{i=1}^k\lambda_ix_i\,|\,x_i\in X_i,\,\lambda_i\geq 0,\, \sum_{i=1}^k\lambda_i=1\Big\}.\end{align*}
\end{deff}
The join of the sets $(X_1,...,X_k)$ is  generated by forming all possible convex combinations, using at most one point from each set $X_i$. For two sets $(X_1,X_2)$, it is the union of the sets themselves, along with all line segments connecting a point in $X_1$ to a point in $X_2$. With more than two sets, we can imagine first joining $X_1$ and $X_2$, then joining $X_3$ to the resulting set, and so on. In Figure \ref{Fig:Adjud}, the join of the graphs of the sender's payoff functions $\widehat{v}(\cdot,a)$ and $\widehat{v}(\cdot,d)$ is shaded purple.

In general, the $join(X_1,...,X_k)$ is a subset of the convex hull of the union  $\bigcup_{i=1}^kX_k$, which we denote by $con(X_1,...,X_k)$. Indeed, the convex hull is formed by taking all possible convex combinations of points in $\bigcup_{i=1}^k X_i$, without the requirement that the convex combination includes at most one point from each set. These two coincide in the special case where the convex hull happens to satisfy this additional requirement. Furthermore, unlike the convex hull, the join is not necessarily a convex set.  We return to the connection between the convex hull of the union and the join in Section \ref{ACREC}.


\begin{deff}(Join Envelope). If $X_i\subset\Delta(\Omega)\times \mathbb{R}$ for $i=1,...,k$, then their \jenvelope{} is a function of $\mu\in\Delta(\Omega)$ \begin{align*}V^{jo}(\mu|X_1,...,X_k)\equiv \sup\Big\{z\,|\,(\mu,z)\in join(X_1,...,X_k)\Big\}.\end{align*}
\end{deff}
The \jenvelope{} is the supremum of the join,  in much the same way as the concave envelope, $V^{co}(\mu|X_1,...,X_k)$, is the supremum of the convex hull. Though formally the \jenvelope{} and concave envelope are defined as the supremum, in the environment that we study in this paper, the supremum is attained in the set and can be replaced by maximum.\footnote{In other settings where the maximum is not attained in the set, our characterization can be applied to the join's closure, delivering a close approximation.} In Figure \ref{Fig:Adjud}, the purple curve represents the \jenvelope{} of the graphs of the sender's payoff function.

\begin{deff}(Interim Payoff Graphs). Sender's interim payoff graph for action $s$ is the following subset of $\Delta(\Omega)\times \mathbb{R}$, $\widehat{v}_s\equiv\{(\mu,\widehat{v}(\mu,s))\,|\,\mu\in\Delta(\Omega)\}.$

\end{deff}

Figure \ref{Fig:Adjud} hints at the connection between the sender's problem, the interim payoff graphs, their join, and the \jenvelope{}, which we make precise in the following proposition.

\begin{prop}\label{sigcom}(Signaling With Commitment). In sender's problem,
\begin{enumerate}
\item[(i)]  sender can attain payoff $z$ if and only if $(\mu_0,z)\in join(\widehat{v}_s)_{s\in S}$.

\item[(ii)]  for $(\mu_0,z)\in join(\widehat{v}_s)_{s\in S}$, some $\{\lambda_s\}_{s\in S}$ exist such that $\lambda_s\geq 0$, $\sum_{s\in S}\lambda_s=1$, and $(\mu_0,z)=\sum_{s\in S}\lambda_s(\mu_s,\widehat{v}(\mu_s,s))$. Sender attains $z$ with belief system $\{\mu_s,\tau(\mu_s)=\lambda_s\}_{s\in S}$.
	
\item[(iii)] sender's largest attainable payoff with prior $\mu_0$ is  $V^{jo}(\mu_0|(\widehat{v}_s)_{s\in S})$.	
 
\end{enumerate}
\end{prop}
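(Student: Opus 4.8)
The plan is to read parts (i) and (ii) as a direct dictionary between Bayes-Plausible belief systems and points of the join, and then obtain (iii) as a corollary of (i) together with an attainment argument. The key observation is that the two coordinates of a point in $join(\widehat{v}_s)_{s\in S}$ encode exactly the Bayes-Plausibility constraint and the sender's objective: a generic point of the join has the form $\sum_{s\in S}\lambda_s(\mu_s,\widehat{v}(\mu_s,s))$ with $\lambda_s\geq 0$ and $\sum_s\lambda_s=1$, its first coordinate is $\sum_s\lambda_s\mu_s$, and its second is $\sum_s\lambda_s\widehat{v}(\mu_s,s)$. Thus the two defining equations of join membership are precisely ``Bayes-Plausibility'' and ``expected interim payoff equals $z$.''

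For (i) and (ii) I would prove both implications at once. For the forward direction, suppose the sender attains $z$: then some Bayes-Plausible belief system $\{\mu_s,\tau(\mu_s)\}_{s\in S}$ yields $E_\tau[\widehat{v}(\mu_s,s)]=z$, and setting $\lambda_s=\tau(\mu_s)$ the Bayes-Plausibility condition reads $\sum_s\lambda_s\mu_s=\mu_0$ while the payoff reads $\sum_s\lambda_s\widehat{v}(\mu_s,s)=z$, so $(\mu_0,z)=\sum_s\lambda_s(\mu_s,\widehat{v}(\mu_s,s))\in join(\widehat{v}_s)_{s\in S}$. Conversely, given $(\mu_0,z)$ in the join, extract the weights $\{\lambda_s\}$ and beliefs $\{\mu_s\}$ witnessing membership; the first coordinate shows $\{\mu_s,\tau(\mu_s)=\lambda_s\}_{s\in S}$ is Bayes-Plausible, Proposition \ref{BS1} supplies an inducing strategy, and the second coordinate shows its payoff equals $z$. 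This is exactly the construction asserted in (ii). The allowance $\tau(\mu_s)=0$ matches the fact that a zero weight leaves the corresponding $\mu_s$ free, consistent with the earlier Remark on undefined posteriors, and the ``one point from each graph'' structure of the join mirrors the ``one belief per action'' structure of a belief system (this is exactly why the join, not the convex hull, is the correct object).

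Part (iii) is then immediate from (i) and the definition of the \jenvelope{}, \emph{except} for the claim that the supremum is attained, which is the one genuinely technical step. Here I would first argue that each $\widehat{v}(\cdot,s)$ is upper semi-continuous in $\mu$: since $A$ is compact and $u(s,\cdot,\omega)$ is continuous, the receiver's optimal-action correspondence $\widehat{a}(\cdot,s)$ is upper hemicontinuous with compact values, so along any convergent sequence $\mu_n\to\mu$ a sender-preferred maximizer $a_n$ has a subsequential limit $a^*\in\widehat{a}(\mu,s)$, whence $\limsup_n\widehat{v}(\mu_n,s)=E_\mu[v(s,a^*,\omega)]\leq\widehat{v}(\mu,s)$ by the sender-preferred tie-breaking rule and joint continuity of $E_\mu[v(s,a,\omega)]=\sum_\omega\mu(\omega)v(s,a,\omega)$ in $(\mu,a)$. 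Because $\widehat{v}$ is bounded (continuous $v$ on compact $A$, finite $S$ and $\Omega$), the objective $\sum_s\lambda_s\widehat{v}(\mu_s,s)$ is upper semi-continuous on the compact, nonempty feasible set $\{((\mu_s)_{s\in S},(\lambda_s)_{s\in S})\in\Delta(\Omega)^{|S|}\times\Delta(S):\sum_s\lambda_s\mu_s=\mu_0\}$, and a maximizer exists by the Weierstrass theorem for upper semi-continuous functions. This places the optimal $(\mu_0,z)$ in the join and identifies the largest attainable payoff with $V^{jo}(\mu_0|(\widehat{v}_s)_{s\in S})$.

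The main obstacle is exactly this attainment step. Parts (i)--(ii) are a transparent restatement once the join is written out; but making (iii) a genuine maximum---as the running example with its discontinuous step-function payoffs requires---hinges on establishing upper semi-continuity of $\widehat{v}$ from the tie-breaking rule and then invoking compactness. The delicate point in the upper semi-continuity argument is handling beliefs where the receiver's best response is multi-valued, which is precisely where the sender-preferred selection does the work.
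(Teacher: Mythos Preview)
Your argument for parts (i) and (ii) is correct and essentially identical to the paper's: both unwind the definition of the join and observe that the first and second coordinates encode, respectively, Bayes-Plausibility and the expected interim payoff, so that the weights $\lambda_s$ serve directly as $\tau(\mu_s)$ (and conversely). The paper invokes Proposition~\ref{BS1} implicitly rather than explicitly, but the content is the same.

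For part (iii) you go further than the paper. The paper simply writes that (iii) ``follows immediately from the definition of \jenvelope{},'' relying on the earlier textual assertion that in this environment the supremum is attained and can be replaced by a maximum; no proof of attainment is given. Your upper-semicontinuity and compactness argument fills that gap and is correct in substance: Berge's theorem gives upper hemicontinuity of the receiver's best-response correspondence, the sender-preferred tie-breaking rule upgrades this to upper semicontinuity of $\widehat{v}(\cdot,s)$, boundedness of $\widehat{v}$ on the compact feasible set $\Delta(\Omega)^{|S|}\times\Delta(S)$ (intersected with the closed Bayes-Plausibility constraint) then delivers a maximizer. One small expositional point: the line $\limsup_n\widehat{v}(\mu_n,s)=E_\mu[v(s,a^*,\omega)]$ should read as an inequality obtained after passing to a subsequence achieving the $\limsup$ and a further subsequence along which $a_n\to a^*$; as written it elides that step. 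This does not affect correctness. In short, your proof is the paper's proof plus an attainment argument the paper asserts but does not supply.
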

To simplify notation, we abbreviate $V^{jo}(\mu_0|(\widehat{v}_s)_{s\in S})$ to $V^{jo}(\mu_0)$.

\paragraph{Robustness Interpretation of the Join.} In certain environments, the join has a noteworthy interpretation: it characterizes the set of Perfect Bayesian Equilibria that are robust to sender's unobserved commitment power. In particular, consider a signaling game in which sender may be an arbitrary commitment type. With probability $p$, the sender's action is the realization of an \textit{exogenous} commitment strategy, $\pi_C(\cdot|\omega)$. With complementary probability, the action is freely chosen by the sender after observing the state. Sender and receiver know the probability of the commitment type and the commitment strategy, but receiver cannot determine whether the sender's realized action was drawn from the commitment strategy or was sender's free choice. An analyst observes this signaling game and is aware that sender may be a commitment type. However, the analyst cannot directly observe the probability $p$, or the commitment strategy. Nevertheless, the analyst would like to determine the set of  sender payoffs (beliefs, and strategies) that can arise in Perfect Bayesian Equilibria across all possible values of these unobservables. With the advent of algorithms that offer product advice, trade in markets, or otherwise engage in signaling activity, such an exercise is particularly relevant.

Assuming the receiver's optimal action is unique at each $(\mu,s)$, Proposition \ref{sigcom} provides an answer. With this assumption, in any PBE sender's interim payoff is $\widehat{v}(\mu_s,s)$.\footnote{If receiver had multiple best responses, then in equilibrium, receiver might select an action that is not sender-preferred. Consequently, the sender's interim payoff may differ from $\widehat{v}(\mu_s,s)$.} Furthermore, a PBE of the signaling game with the commitment type induces a Bayes-Plausible belief system. By Proposition \ref{sigcom}, the equilibrium payoff must belong to the join. Conversely, any point in the join can be generated by a Bayes-Plausible belief system. If the sender is always a commitment type whose strategy induces this belief system, then the beliefs, receiver responses, and the commitment strategy form a PBE.

\subsection{Extended Commitment}\label{ACREC}
In the main model, the sender can transmit information to the receiver only through his action. In the motivating example, the organization is legally barred from communicating with stakeholders about the merits of specific grievances; only the decision to dismiss or address the grievance is observed. To understand the implications of such restrictions, in this section we study commitment power in the cheap talk extension of the signaling game. In particular, we augment the strategy set $S$ with a large message space $M$, and we allow sender to commit to joint distributions $\pi_E(\cdot,\cdot|\omega)$ over messages and actions $(m,s)\in M\times S$ conditional on state $\omega\in \Omega$. Thus, sender can transmit information using his action, message, or any combination of the two. The players' payoffs depend only on $(s,a,\omega)$---messages do not directly affect payoffs. We refer to this environment as ``extended commitment.'' 

 Though the extended commitment benchmark allows sender to transmit a single public message, we show that  allowing the sender to design the game's information structure along with his strategy does not increase his payoff further.
 \begin{prop}\label{infodesign}(Extended Commitment vs. Designing Information Structure). Suppose sender can design message spaces $M_\sigma,M_\rho$ and a joint distribution $\pi_{I}(\cdot,\cdot,\cdot|\omega)$ over private messages and a public action $(m_\sigma,m_\rho,s)\in M_\sigma\times M_\rho\times S$, conditional on state $\omega$. Any payoff that sender can achieve in such a setting can also be achieved in the extended commitment benchmark.
 \end{prop}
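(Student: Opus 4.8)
The plan is to show that every information structure available in the richer design problem reduces, with no change in the sender's payoff, to a joint distribution over a single receiver-observed message and the public action $s$, which is precisely an extended-commitment strategy. The argument rests on two reductions: eliminating the sender's private message $m_\sigma$, and recognizing that a message observed only by the receiver is payoff-equivalent to the ``public'' message $m$ of the extended-commitment benchmark.

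First I would fix an arbitrary design $\pi_I(\cdot,\cdot,\cdot\mid\omega)$, together with the induced receiver best response, and write the sender's payoff as an expectation of $v(s,\widehat a,\omega)$. Since payoffs depend only on $(s,a,\omega)$ and the receiver conditions only on the pair $(m_\rho,s)$ that she actually observes, the sender's private draw $m_\sigma$ can enter only through its correlation with $(m_\rho,s,\omega)$. The key step is to marginalize $m_\sigma$ out: define the candidate extended-commitment strategy $\pi_E(m_\rho,s\mid\omega)$ to be the $(m_\rho,s)$-marginal of $\pi_I(\cdot,\cdot,\cdot\mid\omega)$. Because the design is fully committed---the action $s$ is a designed component of the joint distribution, not something the sender chooses after seeing $m_\sigma$---no incentive constraint ties $s$ to $m_\sigma$, so this marginalization leaves the joint law of $(s,m_\rho,\omega)$ intact, and hence leaves unchanged both the receiver's posterior at each observed $(m_\rho,s)$ and her best response.

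Next I would set $M=M_\rho$ and read $\pi_E$ as an extended-commitment strategy whose committed public message equals $m_\rho$. Here I must argue that the public-versus-private distinction is immaterial: with a single receiver, the only player whose behavior reacts to the message is the receiver, since the sender's action is committed and does not respond to anything the sender observes. Whether the message is additionally seen by the sender therefore has no effect on the joint law of $(s,a,\omega)$. Because $\pi_E$ preserves that joint law exactly, the receiver faces an identical decision problem at each observed $(m,s)$, so the same sender-preferred selection among her best responses is available, and the sender's payoff under $\pi_E$ equals the payoff under $\pi_I$. Feasibility of $\pi_E$ is immediate, as $M$ is taken to be a large message space and so contains $M_\rho$.

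The main obstacle is making precise the claim that $m_\sigma$ is payoff-irrelevant, since one is tempted to worry that a private sender signal, correlated with the state and with $s$, might enable richer contingent behavior. The resolution, which I would flag as the crux, is that under full commitment there is no residual sender choice for $m_\sigma$ to inform: every contingency the designer might wish to implement is already encoded directly in the joint distribution of $(m_\rho,s)$ given $\omega$, so passing to that marginal forfeits nothing. Everything else---the equality of posteriors, best responses, and expected payoffs---is then a direct bookkeeping check on the preserved joint distribution.
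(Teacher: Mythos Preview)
Your proposal is correct and matches the paper's own proof almost exactly: both arguments marginalize out $m_\sigma$ by noting that the receiver's best response $\widehat a(m_\rho,s)$ depends only on what she observes, then set $M=M_\rho$ and identify the resulting $(m_\rho,s)$-marginal with an extended-commitment strategy $\pi_E$. The paper writes this out as the factorization $\pi_I=\pi_\rho(m_\rho,s\mid\omega)\pi_\sigma(m_\sigma\mid m_\rho,s,\omega)$ and sums over $m_\sigma$, while you phrase it as ``take the marginal,'' but these are the same step; your additional remark that the public/private label on $m_\rho$ is immaterial once the sender's action is committed is a helpful clarification that the paper leaves implicit.
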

Thus, the extended commitment benchmark delivers the highest payoff that can be achieved in a signaling game without changing the payoff functions, the prior belief, or the extensive form.

We apply the belief-based approach to characterize the set of attainable payoffs. First, note that each public realization $(m,s)$ is associated with a posterior belief $\mu\in\Delta(\Omega)$. Thus, each sender strategy $\Pi_E$ induces a finite joint distribution $\tau$ of the posterior belief and sender action over $G\equiv \Delta(\Omega)\times S$, and all strategies that induce the same joint distribution $\tau$ are outcome-equivalent. In the usual way, this joint distribution can be decomposed into a marginal for belief $\mu$, denoted $\tau_m$, and a distribution for the action $s\in S$ conditional on $\mu$, denoted $\tau_c$, i.e., $\tau(\mu,s)=\tau_m(\mu)\tau_c(s|\mu)$. From the law of iterated expectations, the marginal distribution of the posterior belief must be Bayes-Plausible:  $\sum\tau_m(\mu)\mu=\mu_0$, where the summation is over  the support of $\tau_m$. Thus, Bayes-Plausibility of $\tau_m$ is a necessary condition for joint distribution $\tau$ to be induced by some sender strategy. Indeed, this is also sufficient, as we show in the following result.

\begin{prop}(Inducible Joint Distribution).\label{inducible} With extended commitment, 
\begin{enumerate}
\item[(i)]  joint distribution $\tau$ is induced by some sender strategy if and only if the marginal distribution of the posterior belief $\tau_m$ is Bayes-Plausible.
\item[(ii)] if joint distribution $\tau$ can be induced by some sender strategy, then it can also be induced by a strategy in which observing the sender's action $s$, does not convey any additional information, beyond what is learned from the sender's message $m$.
\end{enumerate}
\end{prop}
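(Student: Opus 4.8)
The plan is to prove the two directions of part (i) and then leverage the construction to obtain part (ii). For the ``only if'' direction, suppose a strategy $\Pi_E$ induces joint distribution $\tau$. Each public realization $(m,s)$ generates a posterior $\mu$ by Bayes' rule; marginalizing the induced distribution over beliefs and invoking the law of iterated expectations gives $\sum \tau_m(\mu)\mu = E[E[\mathbf{1}_\omega \mid (m,s)]] = \mu_0$, so $\tau_m$ is Bayes-Plausible. This direction is routine and parallels the standard argument behind Proposition \ref{BS1}.

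The substance is the ``if'' direction, where I would construct, from any Bayes-Plausible marginal $\tau_m$ together with an arbitrary conditional $\tau_c(s \mid \mu)$, a strategy $\Pi_E$ that induces the full joint $\tau(\mu,s) = \tau_m(\mu)\tau_c(s \mid \mu)$. The key idea is that the large payoff-irrelevant message space $M$ does all the work of transmitting the belief, freeing the action to be assigned however we wish. Concretely, I would first use $\tau_m$ and its Bayes-Plausibility, via the argument of \citet{KG2011} (available through Proposition \ref{BS1}), to obtain a message-generating experiment that induces exactly the distribution $\tau_m$ over posteriors: assign to each belief $\mu$ in the support of $\tau_m$ a distinct message $m_\mu \in M$, with $\Pr(m_\mu \mid \omega) = \tau_m(\mu)\,\mu(\omega)/\mu_0(\omega)$. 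Since $M$ is large, distinct beliefs get distinct messages, so the message alone pins down the posterior. Then, crucially, I would choose the action conditionally \emph{independently} of the state given the message: set $\pi_E(m_\mu, s \mid \omega) = \Pr(m_\mu \mid \omega)\,\tau_c(s \mid \mu)$, where the second factor does not depend on $\omega$. Verifying that this candidate $\Pi_E$ induces $\tau$ amounts to checking two things: that the posterior after the public pair $(m_\mu, s)$ equals $\mu$, and that the joint probability of $(\mu,s)$ equals $\tau_m(\mu)\tau_c(s \mid \mu)$.

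The anticipated main obstacle is precisely the posterior computation: I must confirm that conditioning on the pair $(m_\mu, s)$ yields the same posterior $\mu$ as conditioning on $m_\mu$ alone. This is where the conditional independence of the action from the state is essential---because $\tau_c(s \mid \mu)$ is state-independent, Bayes' rule gives
\begin{align*}
\Pr(\omega \mid m_\mu, s) = \frac{\mu_0(\omega)\,\Pr(m_\mu \mid \omega)\,\tau_c(s \mid \mu)}{\sum_{\omega'}\mu_0(\omega')\,\Pr(m_\mu \mid \omega')\,\tau_c(s \mid \mu)} = \frac{\mu_0(\omega)\,\Pr(m_\mu \mid \omega)}{\sum_{\omega'}\mu_0(\omega')\,\Pr(m_\mu \mid \omega')} = \mu(\omega),
\end{align*}
so the action $s$ carries no additional information about $\omega$ once $m_\mu$ is known. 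The marginal probability of observing $(m_\mu, s)$ then factors as $\tau_m(\mu)\,\tau_c(s\mid\mu)$, matching $\tau$.

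Finally, part (ii) is essentially a corollary of this construction rather than a separate argument. The strategy $\Pi_E$ built above already has the property that, conditional on the message $m_\mu$, the action $s$ is drawn independently of the state. Hence observing $s$ after $m_\mu$ conveys no information about $\omega$ beyond what $m_\mu$ reveals---which is exactly the claim of part (ii). I would therefore state part (ii) by simply pointing back to the conditional-independence feature of the constructed $\Pi_E$, noting that since the ``if'' direction of part (i) is already witnessed by such a strategy, any inducible $\tau$ is inducible by one in which the action is informationally redundant given the message.
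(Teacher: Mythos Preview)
Your proposal is correct and follows essentially the same approach as the paper: both invoke \citet{KG2011} (via Proposition \ref{BS1}) to induce $\tau_m$ using messages alone, then append actions drawn from $\tau_c(\cdot\mid\mu)$ independently of the state, and both obtain part (ii) directly from this conditional-independence construction. The only cosmetic difference is that the paper's ``only if'' direction is slightly more explicit in partitioning realizations $(m,s)$ by the posterior they induce before applying the law of iterated expectations, whereas you summarize that step; the substance is the same.
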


Intuitively, whenever the marginal distribution of the belief, $\tau_m$, is Bayes-Plausible, the joint distribution $\tau$ can be induced as the result of a two-step procedure. Design a sender strategy $\pi(\cdot|\omega)$ that induces $\tau_m$ using messages alone; this can always be done if $\tau_m$ is Bayes-Plausible.  In the first step, generate a realization from this strategy. If realized posterior belief $\mu$ is drawn in the first step, then select an action according to conditional distribution $\tau_c(\cdot|\mu)$ in the second step. By construction, the joint probability of a belief and action produced in this manner is $\tau_m(\mu)\tau_c(s|\mu)=\tau(\mu,s)$. Furthermore, the distribution of the action depends only on the realized posterior belief in the first stage, which itself depends only on the realized \textit{message}. Conditional on the message, the action conveys no information about the state. In this sense, extended commitment allows information provision to be uncoupled from the choice of action.

Building on Proposition \ref{inducible}, part (i), with extended commitment, sender's problem is to design a joint distribution of belief and actions $\tau$, that maximizes his expected payoff  subject to the constraint that the marginal distribution of the belief $\tau_m$ is Bayes-Plausible. 
\begin{align*}
    \max_{\tau\in\Delta(G)}\sum_{(\mu,s)\in\text{sp}[\tau]}\tau(\mu,s)\widehat{v}(\mu,s)\quad\text{subject to}\quad\sum_{\mu\in\text{sp}[\tau_m]}\tau_m(\mu)\mu=\mu_0,
\end{align*}
where $\text{sp}[\cdot]$ denotes the support of the distribution.

To solve the sender's problem, note first that any attainable payoff for the sender at prior $\mu_0$ can be written as a convex combination,
\begin{align*}
	(\mu_0,V(\mu_0))=\Big(\sum_{\mu\in\text{sp}[\tau_m]}\tau_m(\mu)\mu,\sum_{(\mu,s)\in \text{sp}[\tau]}\tau(\mu,s)\widehat{v}(\mu,s)\Big)=\sum_{\mu\in \text{sp}[\tau_m]}\tau_m(\mu)\Big(\mu,\sum_{s\in S}\tau_c(s|\mu)\widehat{v}(\mu,s)\Big).
\end{align*}
Next, note that $\sum_{s\in S}\tau_c(s|\mu)\widehat{v}(\mu,s)$ is  a convex combination of the sender's interim payoff functions evaluated at posterior belief $\mu$. Graphically, it is a convex combination of sender's payoff functions in the ``vertical dimension,'' where the belief is fixed and the action $s$ varies. From Proposition \ref{inducible}, sender can freely select the  conditional distribution of actions $\tau_c(\cdot|\mu)$ at each posterior belief.\footnote{The decomposition of the joint distribution $\tau$ into a conditional $\tau_c(\cdot|\mu)$ and a marginal $\tau_m$, and the expression for the sender's payoff in the text also apply in the main model. However, in the main model, sender faces additional constraints on the conditional distributions $\tau_c(\cdot|\mu)$. In particular, in signaling with commitment, each action $s$ in sender's chosen belief system generates a single belief, $\mu_s$. Thus, if $\tau_c(s|\mu)>0$ for some $\mu$, then $\tau_c(s|\mu')=0$, for all $\mu'\neq\mu$. } Thus, at a given posterior belief $\mu$, varying $\tau_c(\cdot|\mu)$ allows the sender to achieve any payoff between the highest and lowest payoff graphs at that belief. Simultaneously, by varying the Bayes-Plausible distribution of the posteriors $\tau_m$, sender can generate any convex combination of these ``vertical payoffs'' across posteriors (i.e., ``horizontally''). By varying the distribution of the posterior $\tau_m$ and the conditional distribution of actions $\tau_c(\cdot|\mu)$ together, sender can achieve any payoff that is in the convex hull of the union of the payoff graphs. Obviously, under the sender's optimal strategy, for each realization of the posterior belief, sender selects the action(s) that generate the highest payoff at that belief; we refer to such actions as ``belief-optimal.''
\begin{deff} (Belief-Optimal Actions). Sender action $s\in S$ is belief-optimal at $\mu$, if and only if $\widehat{v}(\mu,s)\geq \widehat{v}(\mu,s')$ for all $s'\in S$.
\end{deff}
In other words, a belief-optimal action at $\mu$ is the best available action for sender, if both he and receiver share posterior belief $\mu$---crucially, a belief-optimal action at $\mu$ may not be optimal if sender knows the realized state $\omega$.

We have thus proved the following proposition.  

\begin{prop}(Extended Commitment).\label{infact} With extended commitment, 
\begin{enumerate}
\item[(i)] sender can achieve any payoff inside the convex hull of the union of his payoff graphs, $con((\widehat{v}_s)_{s\in S})$, and his maximum payoff at prior $\mu$ is their concave envelope $V^{co}(\mu)$.

\item[(ii)] in sender's optimal strategy, if $\tau(\mu,s)>0$, then $s$ is belief-optimal at $\mu$.

\end{enumerate}	
\end{prop}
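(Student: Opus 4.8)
The plan is to characterize the set of attainable payoffs as exactly $con((\widehat{v}_s)_{s\in S})$ by proving two inclusions, then read off the maximum as the concave envelope, and finally establish part (ii) by an exchange argument. Throughout I rely on Proposition \ref{inducible}(i), which tells me that the attainable outcomes are precisely the joint distributions $\tau\in\Delta(G)$ whose belief-marginal $\tau_m$ is Bayes-Plausible, together with the decomposition $\tau(\mu,s)=\tau_m(\mu)\tau_c(s|\mu)$ and the induced payoff
\begin{align*}
	(\mu_0,V)=\sum_{\mu\in\text{sp}[\tau_m]}\tau_m(\mu)\Big(\mu,\ \overline{v}(\mu)\Big),\qquad \overline{v}(\mu)\equiv\sum_{s\in S}\tau_c(s|\mu)\,\widehat{v}(\mu,s),
\end{align*}
already derived in the text.

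For the inclusion ``attainable $\subseteq con$,'' I would observe that each $(\mu,\overline{v}(\mu))$ is a convex combination of the points $(\mu,\widehat{v}(\mu,s))\in\widehat{v}_s$, and hence lies in $con((\widehat{v}_s)_{s\in S})$; the displayed expression then exhibits $(\mu_0,V)$ as a further convex combination of such points, so it too lies in the convex hull. For the reverse inclusion ``$con\subseteq$ attainable,'' I would take any $(\mu_0,z)\in con((\widehat{v}_s)_{s\in S})$ and use Carath\'eodory in the finite-dimensional space $\Delta(\Omega)\times\mathbb{R}$ (finite since $|\Omega|<\infty$) to write it as a finite convex combination $\sum_j\alpha_j(\mu_j,\widehat{v}(\mu_j,s_j))$. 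Grouping the terms by their belief coordinate defines $\tau_m(\mu)=\sum_{j:\mu_j=\mu}\alpha_j$ and $\tau_c(s|\mu)=\tau_m(\mu)^{-1}\sum_{j:\mu_j=\mu,\,s_j=s}\alpha_j$; matching the first coordinate gives $\sum_\mu\tau_m(\mu)\mu=\mu_0$, i.e.\ $\tau_m$ is Bayes-Plausible, so by Proposition \ref{inducible}(i) the resulting $\tau$ is inducible, and its payoff equals the second coordinate $z$. Combining the inclusions, the attainable set is exactly $con((\widehat{v}_s)_{s\in S})$, and the largest $z$ with $(\mu_0,z)$ in this set is by definition the concave envelope $V^{co}(\mu_0)$; that the supremum is attained follows from the earlier observation in the paper, so it is a maximum.

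For part (ii) I would argue by improvement. Suppose $\tau^*$ is optimal and, for contradiction, $\tau^*(\mu,s)>0$ for some $(\mu,s)$ at which $s$ is not belief-optimal, so there is $s'$ with $\widehat{v}(\mu,s')>\widehat{v}(\mu,s)$. Forming $\tau'$ from $\tau^*$ by transferring the mass $\tau^*(\mu,s)$ from $(\mu,s)$ to $(\mu,s')$ leaves the belief-marginal $\tau_m$ unchanged---hence still Bayes-Plausible and inducible---while strictly raising the objective $\sum_{(\mu,s)}\tau(\mu,s)\widehat{v}(\mu,s)$, contradicting optimality of $\tau^*$. Therefore any optimal $\tau$ puts positive weight only on belief-optimal $(\mu,s)$ pairs.

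The main obstacle is the reverse inclusion: one must verify that regrouping a Carath\'eodory representation genuinely yields a valid joint distribution whose \emph{marginal} (not the joint) is Bayes-Plausible, which is exactly the content that Proposition \ref{inducible} makes available---it is the decoupling of the action from the information that lets an arbitrary convex combination across graphs, including combinations placing several actions at a single belief, be realized as a bona fide sender strategy. A secondary point to handle with care is attainment of the supremum, since the interim payoff graphs need not be closed (as the step functions in the motivating example show); here I would lean on the paper's standing claim that in the environments under study the supremum is achieved.
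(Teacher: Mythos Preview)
Your proposal is correct and follows essentially the same approach as the paper: the paper's argument (given in the text immediately preceding the proposition) uses the same decomposition $(\mu_0,V)=\sum_\mu\tau_m(\mu)(\mu,\overline{v}(\mu))$ for the forward inclusion, the same ``freely choose $\tau_c(\cdot|\mu)$ and $\tau_m$'' idea for the reverse inclusion, and the same exchange observation for part (ii). Your version is simply more explicit---invoking Carath\'eodory and spelling out the regrouping and the contradiction argument---where the paper sketches informally; there is no substantive difference in method.
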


 \begin{figure}
\begin{minipage}{0.5\textwidth}
\begin{tikzpicture}[scale=0.7]

\draw[line width=0.5pt] (0,-6.5) -- (0,4); 
\fill (8,-2) node[right] {\footnotesize{$\mu$}};
\draw[line width=0.5pt] (2,-2) -- (8,-2); 



\draw[line width=1pt,blue] (0,0)--(4,0);
\draw[line width=1pt,blue] (4,4)--(8,4);
\filldraw[color=blue, fill=white] (4,4) circle (3pt);
\draw[line width=0.5pt,blue,dashed] (4,0)--(4,4);
\fill (6,4.5) node[right] {\footnotesize{$\widehat{v}(\cdot,d)$}};
\fill (4,-2) node[below] {\footnotesize{$\theta_d$}};
\draw[line width=0.5pt,black,dotted] (4,-2)--(4,0);
\fill (0,-2) node[left] {\footnotesize{$\mu_a^*$}};
\fill (4,4) node[above] {\footnotesize{$\mu_d^*$}};

\fill (0,0) node[left] {\footnotesize{$1-l$}};
\draw[line width=0.5pt,black,dotted] (4,4)--(0,4);
\fill (0,4) node[left] {\footnotesize{$1$}};
\fill (0,-6) node[left] {\footnotesize{$-l$}};
\draw[line width=0.5pt,black,dotted] (0,-6)--(2,-6);
\fill (2.5,-2) node[below] {\footnotesize{$\theta_a$}};

\draw[line width=1pt,violet] (4,4)--(8/6,0);
\draw[line width=1pt,violet] (4,4)--(8,4);
\draw[line width=1pt,violet] (0,0)--(8/6,0);


\draw[line width=0.5pt,black,dashed] (0,-2)--(8/6,0);
\draw[line width=1pt,red] (0,-2)--(2,-2);
\draw[line width=1pt,red] (2,-6)--(8,-6);
\draw[line width=0.5pt,red,dashed] (2,-6)--(2,-2);
 \fill (6,-5.5) node[right] {\footnotesize{$\widehat{v}(\cdot,a)$}};

 \draw[line width=0.5pt,black,dashed] (2,-6)--(0,0);


\filldraw[color=red, fill=white] (0,-2) circle (3pt);
\filldraw[color=blue, fill=white] (4,4) circle (3pt);
\filldraw[color=blue, fill=white] (0,0) circle (3pt);

\filldraw[color=red, fill=white] (2,-6) circle (3pt);

\fill[violet,opacity=0.05] (0,-2)--(0,0)--(8/6,0)--(4,4)--(8,4)--(8,-6)--(2,-6)--(0.7,-2)--(0,-2);
\fill (8/6,-2) node[below] {\footnotesize{$\widehat\mu$}};
 \draw[line width=0.5pt,black,dotted] (8/6,-2)--(8/6,0);
\end{tikzpicture}
\end{minipage}
 \begin{minipage}{0.5\textwidth}
\begin{tikzpicture}[scale=0.7]

\draw[line width=0.5pt] (0,-6.5) -- (0,4); 
\fill (8,-2) node[right] {\footnotesize{$\mu$}};
\draw[line width=0.5pt] (2,-2) -- (8,-2); 



\draw[line width=1pt,blue] (0,0)--(4,0);
\draw[line width=1pt,blue] (4,4)--(8,4);
\filldraw[color=blue, fill=white] (4,4) circle (3pt);
\draw[line width=0.5pt,blue,dashed] (4,0)--(4,4);
\fill (6,4.5) node[right] {\footnotesize{$\widehat{v}(\cdot,d)$}};
\fill (4,-2) node[below] {\footnotesize{$\theta_d$}};
\draw[line width=0.5pt,black,dotted] (4,-2)--(4,0);
\fill (0,0) node[left] {\footnotesize{$\mu_B$}};
\fill (4,4) node[above] {\footnotesize{$\mu_G$}};
\fill (0,-2) node[left] {\footnotesize{$0$}};

\draw[line width=0.5pt,black,dotted] (4,4)--(0,4);
\fill (0,4) node[left] {\footnotesize{$1$}};
\fill (0,-6) node[left] {\footnotesize{$-l$}};
\draw[line width=0.5pt,black,dotted] (0,-6)--(2,-6);
\fill (2.5,-2) node[below] {\footnotesize{$\theta_a$}};

\draw[line width=1pt,green] (4,4)--(8,4);


\draw[line width=1pt,red] (0,-2)--(2,-2);
\draw[line width=1pt,red] (2,-6)--(8,-6);
\draw[line width=0.5pt,red,dashed] (2,-6)--(2,-2);
 \fill (6,-5.5) node[right] {\footnotesize{$\widehat{v}(\cdot,a)$}};

 \draw[line width=0.5pt,black,dashed] (2,-6)--(0,-2);
 
\draw[line width=1pt,green] (0,0)--(4,4);


\filldraw[color=blue, fill=white] (4,4) circle (3pt);
\filldraw[color=blue, fill=white] (0,0) circle (3pt);

\filldraw[color=red, fill=white] (0,-2) circle (3pt);
\filldraw[color=red, fill=white] (2,-6) circle (3pt);

\fill[green,opacity=0.05] (0,-2)--(0,0)--(4,4)--(8,4)--(8,-6)--(2,-6)--(0,-2);

\end{tikzpicture}
\end{minipage}
\caption{\label{motex2} Motivating Example Continued.}
\end{figure}

Figure \ref{motex2} illustrates. The left panel depicts the set of attainable payoffs in the motivating example with commitment to actions only. The right panel illustrates the set of attainable payoffs with extended commitment. From the preceding discussion, we may imagine that the sender designs a Bayes-Plausible distribution of posteriors, $\tau_m$. For each realization of the posterior belief $\mu$ in the support, sender then selects the graph $\widehat{v}_s$ at which the payoff is evaluated, thereby assigning an action $s$ to realization $\mu$. By varying the distribution of posterior beliefs and the actions associated with each realization, the sender can achieve any payoff in the convex hull of the union of the sender's payoff graphs, shaded in green. The green curve is the concave envelope of the payoff graphs, evidently higher than the join envelope. While confidentiality restrictions that prevent organizations from communicating information about the details of individual grievances can be justified on the grounds of privacy protections, Figure \ref{motex2} highlights an additional rationale for such restrictions. As is evident in the right panel, with extended commitment the organization always selects belief-optimal action $d$. In contrast, with confidentiality restrictions,  the organization addresses valid grievances with positive probability at moderate prior beliefs. Thus, removing the restriction may reduce the probability that valid grievances are addressed. This is particularly concerning if the organization does not fully account for the benefits of addressing valid grievances.

Together Propositions \ref{infodesign} and \ref{infact} imply that in a signaling game, the highest payoff that sender can attain for fixed payoff functions and prior can be achieved using a combination of two familiar instruments, \textit{persuasion} and \textit{delegation}. Imagine that in a standard signaling game, sender has the capability to persuade: before learning the state, sender can design a statistical experiment that generates public information about the state. Given the optimal joint distribution in the extended commitment benchmark, $\tau$, sender can induce the marginal distribution of the posterior $\tau_m$, using the statistical experiment alone. Simultaneously, sender delegates authority over the action to an agent, whose payoffs are identical to his own. Crucially, the agent's only source of information about the state is the statistical experiment designed by the sender, which ensures that the agent prefers the belief-optimal action at each realized posterior.\footnote{In order to attain the extended commitment payoff, the action at each realized posterior $\mu$ must be belief-optimal (Proposition \ref{infact}, (ii)). In other words, it must be optimal given only \textit{public information}, summarized by the realized belief $\mu$. If the agent observes additional information about the state before deciding, then he may not wish to select the belief-optimal action following bad news; see Section \ref{sec:exben}.} By generating the public belief and the action in this manner, sender can recreate the optimal joint distribution from the benchmark with extended commitment. 

\subsection{Pre-Persuasion}
How important is the sender's ability to commit to actions in the extended commitment benchmark? To address this question, we introduce a benchmark in which the sender can commit to reveal information but cannot commit to future actions. As in the extended commitment benchmark, we augment the signaling game by introducing a large space $M$ of payoff-irrelevant messages. Before learning the state, sender commits to a statistical experiment, which determines the probability distribution of messages conditional on the state. After the experiment is publicly realized, sender privately observes the state and selects an action. Receiver observes the statistical experiment, its realization, and sender's action, and then selects a response. This benchmark is a standard signaling game augmented with an initial (Bayesian) persuasion stage---we therefore refer to this benchmark as ``pre-persuasion.'' Though we motivate this benchmark as a point of comparison to extended commitment, it may also be of independent interest.

To construct the set of attainable payoffs in the pre-persuasion benchmark, we first construct the \textit{equilibrium payoff graph.} At each belief $\mu$, a (possibly empty) set of Perfect Bayesian Equilibria exists in the signaling game. To make the pre-persuasion benchmark as powerful as possible, we do not impose any refinements and assume sender-preferred equilibrium selection.\footnote{ When PBE refinements are imposed or equilibrium selection is not favorable to the sender, our characterization provides an upper bound for the sender's payoffs in  pre-persuasion environments.} Each equilibrium can be represented as a Bayes-Plausible belief system---each sender action $s\in S$ induces a posterior belief for the receiver (possibly an off-path belief) $\mu_s$, and these beliefs must average to the prior. Sender's expected equilibrium payoff is therefore a convex combination of the interim payoff graphs, which lies weakly below the \jenvelope{} at $\mu$. We define $\widehat{v}_{eq}(\mu)$ to be the largest payoff that sender can attain in a PBE of the signaling game when the prior belief is $\mu$, and $\widehat{v}_{eq}$ to be its graph in $\Delta(\Omega)\times \mathbb{R}$. If no PBE exists at $\mu$, then we set $\widehat{v}_{eq}(\mu)=-\infty$.

Applying the logic of Bayesian persuasion, sender can attain any payoff in the convex hull of $\widehat{v}_{eq}$ by generating a Bayes-Plausible distribution of posterior beliefs in the persuasion stage.
Let $V^{pp}(\cdot)$ denote the concave envelope of the equilibrium payoff graph
\begin{align*}
V^{pp}(\mu)\equiv  \sup\Big\{z\,|\,(\mu,z)\in con(\widehat{v}_{eq})\Big\}.
\end{align*}
To simplify the exposition, we focus on settings in which the supremum is attained within the set (e.g., $|A|<\infty$). In such environments,  $V^{pp}(\cdot)$ is sender's highest attainable payoff in the pre-persuasion benchmark.

In the proposition that follows, pooling equilibria play an important role. For clarity, we define pooling PBE explicitly, allowing for prior beliefs that do not have full support.

\begin{deff} A pooling equilibrium exists at $(\mu,s)\in \Delta(\Omega)\times S$ if and only if a belief system $\mathcal{B}$ exists such that (1) $\mu_s=\mu$, $\tau(\mu_s)=1$, and (2) for all $\omega$ such that $\mu(\omega)>0$ and $s'\in S$,
\begin{align*}
    v(s,\widehat{a}(\mu,s),\omega)\geq v(s',\widehat{a}(\mu_{s'},s'),\omega).
\end{align*}
\end{deff}
Condition (1) imposes consistency of beliefs---it requires that when the on-path action $s$ is observed beliefs do not update from $\mu$, without imposing any restriction on the rest of the belief system (the off-path beliefs $\mu_{s'}, s'\neq s$).\footnote{We require receiver's response to be optimal at off-path beliefs. That is, receiver's response is sequentially rational at each information set given the receiver's (possibly off-path) beliefs and the sender's action. } Condition (2) is sender's incentive constraint: it requires that for all states in the support of belief $\mu$, action $s$ is optimal for sender, accounting for the receiver's responses. 

To provide a simpler characterization of $V^{pp}(\cdot)$, we define two related objects. Let the \textit{pooling set}, $P$ be the following subset of $(\mu,s)\in \Delta(\Omega)\times S$,
\begin{align*}
P\equiv\{(\mu,s)\,|\, (\mu,s)\in \Delta(\Omega)\times S\, \cap\, \text{a pooling equilibrium exists at }(\mu,s)\}.
\end{align*}
That is, the pooling set contains all pairs of beliefs and actions at which a pooling equilibrium exists. Building on this, we define the \textit{pooling payoff graphs}, $\widehat{v}_{s}^{P}$.
\begin{deff}(Pooling Payoff Graphs). The pooling payoff graph for action $s$ is the following subset of $\Delta(\Omega)\times \mathbb{R}$: $\widehat{v}^{P}_s\equiv\{(\mu,\widehat{v}(\mu,s))\,|\,\mu\in\Delta(\Omega)\text{ and }(\mu,s)\in P\}.$
\end{deff}
These graphs represent the sender's equilibrium payoffs when focusing only on the game's pooling equilibria. Though pooling equilibria are only a subset of all equilibria, in the following proposition we show that considering only pooling equilibria is sufficient to determine the pre-persuasion payoff.

\begin{prop}(Pre-Persuasion).\label{pre-per} In the pre-persuasion benchmark, sender can achieve any payoff in the convex hull of the union of the pooling payoff graphs, $con((\widehat{v}^P_s)_{s\in S})$, and his maximum payoff at prior $\mu$ is their concave envelope,
\begin{align*}
    V^{pp}(\mu)=\sup\Big\{z\,|\,(\mu,z)\in con((\widehat{v}^P_{s})_{s\in S})\Big\}.
\end{align*}
\end{prop}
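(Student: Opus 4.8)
The plan is to prove the two halves of the statement separately --- that every payoff in $con((\widehat v^P_s)_{s\in S})$ is attainable, and that no pre-persuasion payoff can exceed the corresponding concave envelope --- and to connect them through the persuasion reduction already in hand. Recall that applying the logic of Bayesian persuasion in the initial stage shows the sender can attain any payoff in $con(\widehat v_{eq})$ and that $V^{pp}(\mu)$ is the concave envelope of the equilibrium payoff graph $\widehat v_{eq}$. Hence it suffices to prove that $\widehat v_{eq}$ and the family of pooling payoff graphs $(\widehat v^P_s)_{s\in S}$ generate the \emph{same} concave envelope; the proposition then follows.

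First I would dispatch the achievability claim and the easy direction. If $(\mu,z)\in con((\widehat v^P_s)_{s\in S})$, write $z=\sum_i\lambda_i\widehat v(\mu_i,s_i)$ with $\sum_i\lambda_i\mu_i=\mu$ and each $(\mu_i,s_i)\in P$. Because $\{\mu_i\}$ averages to $\mu$, the sender commits in the persuasion stage to an experiment that realizes posterior $\mu_i$ with probability $\lambda_i$; at each realization he plays the pooling equilibrium guaranteed at $(\mu_i,s_i)$, whose on-path payoff is exactly $\widehat v(\mu_i,s_i)$, yielding total payoff $z$. This proves attainability of every point of $con((\widehat v^P_s))$ and, in particular, $V^{pp}(\mu)\ge\sup\{z:(\mu,z)\in con((\widehat v^P_s))\}$. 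The same observation shows each pooling graph lies weakly below $\widehat v_{eq}$: a pooling equilibrium at $(\mu,s)$ is a genuine PBE of the signaling game with prior $\mu$ delivering $\widehat v(\mu,s)$, so $\widehat v(\mu,s)\le\widehat v_{eq}(\mu)$, and taking concave envelopes preserves this inequality.

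The main work, and the main obstacle, is the reverse inequality $\widehat v_{eq}(\mu)\le\sup\{z:(\mu,z)\in con((\widehat v^P_s))\}$, i.e.\ that \emph{every} equilibrium payoff --- not only those from pooling equilibria --- is dominated by a convex combination of pooling payoffs. The key step is a lemma: \textbf{for any PBE at prior $\mu$ with induced belief system $\{\mu_s,\tau(\mu_s)\}$, each on-path pair $(\mu_s,s)$ with $\tau(\mu_s)>0$ lies in the pooling set $P$.} To prove the lemma I would exhibit a pooling equilibrium at $(\mu_s,s)$ by \emph{transplanting} the original equilibrium: keep $\mu_s$ as the on-path belief (condition (1)) and assign the off-path beliefs $\mu_{s'}$ and receiver responses $\widehat a(\mu_{s'},s')$ of the original equilibrium to the deviations $s'\ne s$. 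Since $\mu_s(\omega)>0$ holds exactly when $\pi(s|\omega)>0$, i.e.\ when $s$ is played with positive probability in state $\omega$, the sender's incentive constraint in the original PBE reads $v(s,\widehat a(\mu_s,s),\omega)\ge v(s',\widehat a(\mu_{s'},s'),\omega)$ for all such $\omega$ and all $s'$ --- precisely condition (2) for a pooling equilibrium at $(\mu_s,s)$. Granting the lemma, the sender's payoff in any PBE at $\mu$, namely $\sum_s\tau(\mu_s)\widehat v(\mu_s,s)$, is a convex combination of points $(\mu_s,\widehat v(\mu_s,s))\in\widehat v^P_s$ whose belief coordinates average to $\mu$ by Bayes-plausibility; hence it lies in $con((\widehat v^P_s))$ and is bounded by its concave envelope. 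Taking the supremum over PBE gives the pointwise bound on $\widehat v_{eq}$, and since the right-hand side is already concave, passing to the concave envelope yields $V^{pp}(\mu)\le\sup\{z:(\mu,z)\in con((\widehat v^P_s))\}$, which combined with the previous paragraph closes the argument. The delicate points to verify are that the off-path beliefs of the original equilibrium are legitimately reusable (they are, as pooling equilibria impose no restriction on off-path beliefs beyond receiver sequential rationality) and that the definition's explicit allowance for non-full-support priors is exactly what lets the lemma go through when $\mu_s$ is degenerate in some coordinates.
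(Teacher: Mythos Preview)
Your proposal is correct and follows essentially the same route as the paper: the paper isolates the identical key lemma (that each on-path pair $(\mu_s,s)$ of any PBE lies in the pooling set, proved by transplanting the original off-path beliefs and responses), then uses it to show $\widehat v_{eq}\subseteq con((\widehat v^P_s)_{s\in S})$, while the easy direction comes from noting that pooling equilibria are themselves PBE. The only cosmetic slip is the phrase ``$\mu_s(\omega)>0$ holds exactly when $\pi(s|\omega)>0$''---the biconditional also requires $\mu(\omega)>0$---but you only use the forward implication, which is valid.
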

The key step of the proof shows that $\widehat{v}_{eq}\subseteq con(\cup_{s\in S}\widehat{v}^P_s)$. That is, any point on the equilibrium payoff graph can be represented as a convex combination of points arising from \textit{pooling} equilibria.\footnote{The converse is not true---a convex combination of points on the pooling payoff graphs might not be on the equilibrium payoff graph.} In other words, the extreme points of the equilibrium payoff graph must arise from pooling equilibria. It follows that any convex combination of points on the equilibrium payoff graph (i.e., any point in $con(\widehat{v}_{eq})$) is, itself, a convex combination of  points arising from pooling equilibria (i.e., it is also in $con(\cup_{s\in S}\widehat{v}_s^P)$). 

This characterization is significant for several reasons. At a practical level, it allows the analyst to focus only on pooling equilibria, which simplifies study of the pre-persuasion benchmark. Conceptually, this characterization also gives insight into the tradeoffs between signaling with commitment (where the sender can commit to actions but cannot send messages), pre-persuasion (where the sender can commit to send messages but cannot commit to actions), and extended commitment (where the sender can commit to both). Indeed, in signaling with commitment, sender can connect any point on one interim payoff graph to a point on a different graph, but he cannot connect two points on the same graph. With pre-persuasion, sender can only connect points on the pooling payoff graphs (a subset of the interim payoff graphs), but he can connect any such point to any other. With extended commitment, sender is not restricted in either dimension: he is free to connect any point on the interim payoff graphs to any other. We elaborate further on these comparisons in the next section.

\subsection{Is Extended Commitment Beneficial?}\label{sec:exben}
To study how messages and actions interplay in more detail, we compare extended commitment with the more restrictive settings we have analyzed. First, we characterize environments in which signaling with commitment and extended commitment deliver the same payoff: in such environment's the sender's ability to provide additional information beyond what is inferred from the realized action doesn't benefit the sender. Second, we compare signaling with commitment with the pre-persuasion benchmark, in which sender can design a statistical experiment that reveals public information ahead of a standard signaling game. We characterize environments in which the extended commitment benchmark does strictly better; in such environments, the ability to commit to actions is valuable to sender even if he has the ability to persuade.

\paragraph{Extended Commitment vs. Signaling with Commitment.} Although sender generally exploits both commitment to actions and the communication protocol in the extended commitment benchmark, in certain environments, sender can achieve the same payoff with commitment to actions only. In other words, given the prior belief and the players' payoff functions, signaling with commitment is enough for sender to achieve the extended commitment payoff. Intuitively, in signaling with commitment, the realized action plays two roles---it transmits information about the state, and it directly affects payoffs. As we have seen, these two roles are generally at odds, and the sender can benefit by uncoupling them. However, in a class of environments which we characterize below, both of these roles can be fulfilled simultaneously by realized actions without any adverse consequences for the sender.

 To see how this works, it is helpful to think of sender choosing his strategy in the following way. Imagine that sender uses a public message to generate a Bayes-Plausible posterior belief distribution. Once the belief is realized, sender selects his optimal strategy in the ``signaling with commitment'' game, with the realized belief $\mu$ playing the role of the prior. Consistent with Proposition \ref{sigcom}, sender can achieve any payoff in the join of the payoff graphs, and his highest attainable payoff is their \jenvelope{} $V^{jo}(\mu)$. Thus, when sender is designing the posterior belief distribution initially, he anticipates payoff function $V^{jo}(\cdot)$. Therefore, by designing a message protocol in the first stage, sender can achieve any payoff in the convex hull of the join, and his highest attainable payoff is the concave envelope of the join itself. In other words, with extended commitment, sender's optimal payoff can be found by concavifying the join of the sender's payoff graphs. We show this formally in Lemma \ref{infact4} in the Appendix. Part (i) of the following proposition follows from familiar arguments.

\begin{prop}\label{infact2}(Extended Commitment vs. Signaling With Commitment). 
\begin{enumerate}
\item[(i)]
At all prior beliefs, sender's equilibrium payoff in signaling with commitment is identical to his payoff with extended commitment, if and only if the \jenvelope{} of sender's payoff graphs is concave on $\Delta(\Omega)$. 
\item[(ii)] If the sender's interim payoff function $\widehat{v}(\cdot,s)$ is concave on $\Delta(\Omega)$ for all $s\in S$, then the \jenvelope{} is concave  on $\Delta(\Omega)$. 
\end{enumerate}

\end{prop}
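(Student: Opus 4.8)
The plan is to handle the two parts by different means: part (i) follows formally once the relationship between the two values is in hand, while part (ii) requires a direct convexity argument. For part (i), I would invoke Lemma \ref{infact4}, which identifies the extended-commitment value $V^{co}$ as the concave envelope (concavification) of the \jenvelope{} $V^{jo}$. Recall that signaling with commitment delivers $V^{jo}$ (Proposition \ref{sigcom}) while extended commitment delivers $V^{co}$ (Proposition \ref{infact}), so the two payoffs coincide at every prior precisely when $V^{jo}=V^{co}$ pointwise. The equivalence then reduces to the elementary fact that a function agrees with its own concave envelope if and only if it is concave: if $V^{jo}$ is concave it is a concave majorant of itself, so its (smallest) concave envelope cannot lie strictly above it and $V^{co}=V^{jo}$; conversely, if $V^{co}=V^{jo}$ everywhere then $V^{jo}$ inherits concavity from $V^{co}$, which is concave by construction. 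This is the ``familiar argument'' referenced in the text.

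For part (ii), I would show directly that the \jenvelope{} of concave interim payoffs is concave. Unwinding the definition of the join,
\[
V^{jo}(\mu)=\sup\Big\{\textstyle\sum_{s\in S}\lambda_s\,\widehat{v}(\mu_s,s)\ \Big|\ \sum_{s\in S}\lambda_s\mu_s=\mu,\ \lambda_s\ge 0,\ \sum_{s\in S}\lambda_s=1,\ \mu_s\in\Delta(\Omega)\Big\},
\]
and, under the paper's standing assumption, the supremum is attained. Fix priors $\mu^0,\mu^1$ and $\alpha\in(0,1)$, set $\mu^\alpha=\alpha\mu^0+(1-\alpha)\mu^1$, and take optimal join representations $\{\lambda_s^0,\mu_s^0\}_{s}$ and $\{\lambda_s^1,\mu_s^1\}_{s}$ achieving $V^{jo}(\mu^0)$ and $V^{jo}(\mu^1)$. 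I would then assemble a single admissible representation of $\mu^\alpha$ by setting $\lambda_s^\alpha=\alpha\lambda_s^0+(1-\alpha)\lambda_s^1$ and, whenever $\lambda_s^\alpha>0$, $\mu_s^\alpha=(\alpha\lambda_s^0\mu_s^0+(1-\alpha)\lambda_s^1\mu_s^1)/\lambda_s^\alpha$ (taking $\mu_s^\alpha$ arbitrary otherwise). A routine check gives $\sum_{s}\lambda_s^\alpha=1$ and $\sum_{s}\lambda_s^\alpha\mu_s^\alpha=\alpha\mu^0+(1-\alpha)\mu^1=\mu^\alpha$, so this is a valid join representation at $\mu^\alpha$.

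The crux, and the step I expect to be the main obstacle, is respecting the join's ``at most one point per graph'' restriction: naively superimposing the two optimal representations would place two distinct points, $\mu_s^0$ and $\mu_s^1$, on the same graph $\widehat{v}_s$, which a join forbids. The merge above collapses them into the single belief $\mu_s^\alpha$, and concavity of $\widehat{v}(\cdot,s)$ is exactly what ensures no payoff is lost. Since $\mu_s^\alpha$ is the convex combination of $\mu_s^0,\mu_s^1$ with weights $\alpha\lambda_s^0/\lambda_s^\alpha$ and $(1-\alpha)\lambda_s^1/\lambda_s^\alpha$, concavity yields $\lambda_s^\alpha\,\widehat{v}(\mu_s^\alpha,s)\ge \alpha\lambda_s^0\,\widehat{v}(\mu_s^0,s)+(1-\alpha)\lambda_s^1\,\widehat{v}(\mu_s^1,s)$; summing over $s$ and using optimality of the two representations gives $V^{jo}(\mu^\alpha)\ge \alpha V^{jo}(\mu^0)+(1-\alpha)V^{jo}(\mu^1)$, the desired concavity. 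The only bookkeeping is the degenerate case $\lambda_s^\alpha=0$, which forces $\lambda_s^0=\lambda_s^1=0$ (as $\alpha\in(0,1)$) so that the corresponding term vanishes on both sides and is harmless.
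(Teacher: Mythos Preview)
Your proof is correct. Part~(i) is identical to the paper's argument: invoke Lemma~\ref{infact4} to recognize $V^{co}$ as the concavification of $V^{jo}$, and then use the standard fact that a function coincides with its concave envelope if and only if it is concave.

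For part~(ii), both you and the paper rely on the same core device---merging two points drawn from the same concave graph $\widehat{v}(\cdot,s)$ into their weighted average does not lower the value---but you apply it differently. The paper argues by contradiction toward $V^{jo}=V^{co}$: it starts from a convex-hull representation of $V^{co}(\mu)$ that places positive weight on two points of the same graph and merges them, claiming a contradiction with optimality (really, iterating the merge collapses the representation into a join representation, yielding $V^{jo}(\mu)=V^{co}(\mu)$). You instead prove concavity of $V^{jo}$ directly by taking optimal join representations at two priors and merging them action by action into a single join representation at the convex combination. Your route is slightly cleaner: the merge is done once per action rather than iterated, and it sidesteps the mild imprecision in the paper's ``contradiction'' (the merged value is only weakly larger, so one must iterate rather than obtain an immediate contradiction). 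Substantively the two are equivalent, since by part~(i) concavity of $V^{jo}$ and $V^{jo}=V^{co}$ are the same conclusion.
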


Part (i) of the Proposition provides a necessary and sufficient condition for the payoffs under extended commitment and signaling with commitment to coincide at \textit{all} prior beliefs. In other words, if this condition holds, then a sender who can commit to actions (signaling with commitment) does not benefit from the ability to transmit additional messages (extended commitment), regardless of the prior information about the state. Part (ii) of the proposition provides a straightforward sufficient condition. Although (ii) requires that each interim payoff graph is concave, it does not rule out information provision by the sender,  as illustrated in the application to nominal ratings (Proposition \ref{spenceRA}). 

 \paragraph{Extended Commitment vs. Pre-Persuasion.} Building on our characterizations of the extended commitment and pre-persuasion benchmarks, we can identify settings in which the extended commitment payoff $V^{co}(\cdot)$ is strictly larger than the pre-persuasion payoff $V^{pp}(\cdot)$. In such settings, sender's ability to commit to actions is  essential for him to attain the extended commitment payoff. 

\begin{prop}\label{prop:preper}(Extended Commitment vs. Pre-persuasion).  Let $T^{co}\subseteq\Delta(G)$ be the set of joint distributions of beliefs and actions that is optimal in the extended commitment benchmark for prior belief $\mu_0$. The extended commitment payoff is achieved in the pre-persuasion benchmark for prior belief $\mu_0$, if and only if there exists a $\tau^{co}\in T^{co}$ with the following property: the support of $\tau^{co}$ is a subset of the pooling set $P$, i.e., for each $(\mu,s)$ such that $\tau^{co}(\mu,s)>0$, a pooling equilibrium exists at $(\mu,s)$.

\end{prop}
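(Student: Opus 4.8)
The plan is to prove both directions by passing between the optimal extended-commitment distribution and convex-combination representations of the point $(\mu_0,V^{co}(\mu_0))$, using that $V^{co}$ and $V^{pp}$ are concave envelopes of unions of graphs (Propositions \ref{infact} and \ref{pre-per}). Two preliminary observations organize everything. First, since each pooling payoff graph is a subset of the corresponding interim payoff graph, $\widehat{v}^{P}_s\subseteq\widehat{v}_s$, we have $con((\widehat{v}^{P}_s)_{s\in S})\subseteq con((\widehat{v}_s)_{s\in S})$, hence $V^{pp}(\mu)\le V^{co}(\mu)$ at every $\mu$; thus ``the extended commitment payoff is achieved in pre-persuasion at $\mu_0$'' is exactly the equality $V^{pp}(\mu_0)=V^{co}(\mu_0)$. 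Second, for any feasible $\tau$ with Bayes-Plausible marginal, Bayes-Plausibility together with the definition of the payoff yields the identity $\bigl(\mu_0,\sum_{(\mu,s)}\tau(\mu,s)\widehat{v}(\mu,s)\bigr)=\sum_{(\mu,s)\in\text{sp}[\tau]}\tau(\mu,s)\,(\mu,\widehat{v}(\mu,s))$; for $\tau\in T^{co}$ the left side is $(\mu_0,V^{co}(\mu_0))$, exhibiting it as a convex combination of the graph points $(\mu,\widehat{v}(\mu,s))$ indexed by $\text{sp}[\tau]$.

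For the ``if'' direction, suppose some $\tau^{co}\in T^{co}$ has $\text{sp}[\tau^{co}]\subseteq P$. Then every graph point in the representation above lies in $\widehat{v}^{P}_s$, because $(\mu,s)\in P$ means precisely $(\mu,\widehat{v}(\mu,s))\in\widehat{v}^{P}_s$. Hence $(\mu_0,V^{co}(\mu_0))$ is a convex combination of points of $\bigcup_{s}\widehat{v}^{P}_s$, i.e.\ $(\mu_0,V^{co}(\mu_0))\in con((\widehat{v}^{P}_s)_{s\in S})$. By Proposition \ref{pre-per} this forces $V^{pp}(\mu_0)\ge V^{co}(\mu_0)$, and combined with the pointwise inequality $V^{pp}\le V^{co}$ we conclude $V^{pp}(\mu_0)=V^{co}(\mu_0)$.

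For the ``only if'' direction, assume $V^{pp}(\mu_0)=V^{co}(\mu_0)$. Since the supremum defining $V^{pp}$ is attained under the standing assumption (e.g.\ $|A|<\infty$), the point $(\mu_0,V^{co}(\mu_0))=(\mu_0,V^{pp}(\mu_0))$ lies in $con((\widehat{v}^{P}_s)_{s\in S})$, so by Carath\'eodory's theorem it is a finite convex combination $\sum_i\lambda_i(\mu_i,\widehat{v}(\mu_i,s_i))$ with $(\mu_i,s_i)\in P$, $\lambda_i\ge 0$, $\sum_i\lambda_i=1$. Bundling mass by action, I would define $\tau^{co}(\mu,s)=\sum_{i:\,(\mu_i,s_i)=(\mu,s)}\lambda_i$. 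Matching first coordinates gives $\sum_{(\mu,s)}\tau^{co}(\mu,s)\mu=\sum_i\lambda_i\mu_i=\mu_0$, so the marginal of $\tau^{co}$ is Bayes-Plausible and $\tau^{co}$ is feasible for extended commitment (Proposition \ref{inducible}(i)); matching second coordinates gives payoff $\sum_i\lambda_i\widehat{v}(\mu_i,s_i)=V^{co}(\mu_0)$, so $\tau^{co}$ attains the extended-commitment optimum and therefore $\tau^{co}\in T^{co}$. By construction $\text{sp}[\tau^{co}]\subseteq\{(\mu_i,s_i)\}\subseteq P$, the desired distribution.

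I expect the only genuine work to be bookkeeping rather than a conceptual obstacle: verifying that the convex-combination representation extracted from the concave envelope can be re-packaged as a bona fide \emph{feasible and optimal} joint distribution for extended commitment---that is, that both Bayes-Plausibility of the marginal and optimality of the value survive the bundling step---and, conversely, that membership in $P$ is equivalent to the relevant graph point sitting on the pooling graph $\widehat{v}^{P}_s$. The one technical point to handle cleanly is attainment of the suprema, which I would dispatch by invoking the paper's standing assumption that the envelopes are attained and Carath\'eodory's theorem, keeping the representation finite so that $\tau^{co}$ is a well-defined distribution on the finite-dimensional space $G$.
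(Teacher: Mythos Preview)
Your proposal is correct and follows essentially the same approach as the paper's proof: both directions hinge on translating between an optimal joint distribution and a convex-combination representation of $(\mu_0,V^{co}(\mu_0))$ over pooling payoff graph points, together with the containment $\widehat{v}^{P}_s\subseteq\widehat{v}_s$ to get $V^{pp}\le V^{co}$. Your explicit invocation of Carath\'eodory and the bundling step are slightly more detailed than the paper's treatment, but the logic is identical.
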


Though it is a powerful instrument, in general, persuasion alone does not allow sender to attain the extended commitment payoff, $V^{co}(\cdot)$. To see the idea, suppose that the optimal joint distribution with extended commitment is unique, and let  $\tau^{co}_m$ be the associated marginal distribution of the belief. Imagine that in the persuasion stage, sender designs a statistical experiment that induces $\tau^{co}_m$. To achieve the extended commitment payoff, the sender's action at each realized posterior $\mu$ must be belief-optimal (Proposition \ref{infact}, (ii)). In other words, sender must select an action that is optimal given only \textit{public information}, summarized by the realized belief $\mu$. However, in the pre-persuasion benchmark, sender learns the state privately before selecting an action, and subsequent play must be an equilibrium of the signaling game. If a pooling equilibrium on the belief-optimal action exists at each $\mu\in\text{sp}[\tau_m^{co}]$, then sender can recreate the extended commitment payoff with this simple persuasion strategy. However, if such a pooling equilibrium fails to exist for some $\mu\in\text{sp}[\tau_m^{co}]$, then this point is not available to the sender in the pre-persuasion benchmark, and he cannot attain the extended commitment payoff.

\subsection{Example: Beer-Quiche Game}
Our findings can be illustrated in a version of the beer-quiche game. The tough sender's payoff is 1 if he chooses $B$ (beer) and 0 if he chooses $Q$ (quiche). The wimpy sender's gains 1 from selecting quiche and 0 from selecting beer. Furthermore, wimpy sender's payoff is reduced by $c>1$ if the receiver selects $F$ (fight). Receiver's payoff is 0 if she chooses $A$ (leave alone), is $k\in(0,1/2)$ if she chooses $F$ and sender is wimpy, and is $-(1-k)$ if she chooses $F$ and sender is tough. Let $\mu$ be the belief that the sender is tough. 

To construct the interim payoff graphs, note that the receiver's expected payoff of $F$ at belief $\mu$ is  $(1-\mu)k-\mu (1-k)=k-\mu$, and thus receiver's optimal response is $\widehat{a}(\mu_s,s)=F$ if and only if $\mu<k$.\footnote{When indifferent, we assume that receiver chooses $A$, which is sender-preferred.}  Sender's interim payoffs are therefore,
\begin{align*}
    \widehat{v}(\mu,B)=\mu-c(1-\mu)\mathcal{I}(\mu<k)\quad\quad
    \widehat{v}(\mu,Q)=(1-\mu)(1-c\mathcal{I}(\mu<k)).
\end{align*}

Without commitment, the tough type of sender always selects beer in equilibrium. Thus, an equilibrium in which sender pools on quiche exists if and only if sender is known to be wimpy, $\mu=0$. It is also immediate that a separating equilibrium does not exist in this game for any prior belief.\footnote{If the receiver expects sender to select quiche if and only if he is wimpy, then he bullies when sender selects quiche and leaves sender alone when sender selects beer. By selecting quiche, wimpy sender obtains payoff $1-c<0$, but he could obtain $0$ by deviating to beer.} Thus, without commitment only two types of equilibrium are possible when $\mu>0$: pooling on beer, and a semi-separating equilibrium in which the tough type selects beer and the wimpy type mixes. It is straightforward to verify that the pooling equilibrium exists for $\mu\geq k$ and the semi-separating equilibrium exists for $0<\mu\leq k$. Noting that the payoff in the pooling equilibrium is $\mu$ and the payoff in the semi-separating equilibrium is $\mu+(1-\mu)(1-c)$, we have 
\begin{align*}
    \widehat{v}_{eq}(\mu)=\mu+\mathcal{I}(\mu< k)(1-\mu)(1-c).
\end{align*}
The graphs $\widehat{v}_Q$ (red), $\widehat{v}_B$ (blue), and $\widehat{v}_{eq}$ (brown) are illustrated in the left panel of Figure \ref{BeerQuiche}. Note that $\widehat{v}_{eq}$ (brown) coincides with $\widehat{v}_B$ (blue) above $k$. In the right panel $V^{jo}(\cdot)$ is drawn in purple, while $V^{co}(\cdot)$ is drawn in green for $\mu<k$ (for $\mu\geq k$ these two coincide).  

\begin{figure}
 \begin{minipage}{0.5\textwidth}
\begin{tikzpicture}[scale=0.45]

\pgfmathsetmacro{\c}{1.1}
\pgfmathsetmacro{\k}{(0.4)}
\pgfmathsetmacro{\kbl}{\k-\c*(1-\k)}
\pgfmathsetmacro{\kql}{(1-\k)*(1-\c)}
 \pgfmathsetmacro{\oql}{(1-\c)}
 \pgfmathsetmacro{\kqr}{1-\k}
 \pgfmathsetmacro{\keq}{1-\c+(\c)*\k}

\draw[line width=1pt,blue] (0,-10*\c)--(10*\k,10*\kbl);
\draw[line width=1pt,blue] (10*\k,10*\k)--(10,10);

\draw[line width=1pt,red] (0,10*\oql)--(10*\k,10*\kql);
\draw[line width=1pt,red] (10*\k,10*\kqr)--(10,0);
\filldraw[color=red, fill=red] (4,6) circle (3pt);

 \draw[line width=1pt,brown,solid] (0,10*\oql)--(10*\k,10*\keq);
 \fill (10*\k,10*\keq-1) node[right] {\footnotesize{$\widehat{v}_{eq}(\cdot)$}};

\draw[line width=0.5pt] (0,-11) -- (0,10); 
\draw[line width=0.5pt] (0,0) -- (10,0); 


\fill (10,0) node[right] {\footnotesize{$\mu$}};

 \draw[line width=0.5pt,black,dotted] (10*\k,10*\kbl)--(10*\k,10*\kqr);
 \draw[line width=0.5pt,black,dotted] (0,4)--(4,4);
  \draw[line width=0.5pt,black,dotted] (4,6)--(4,4);
  \draw[line width=0.5pt,black,dotted] (0,10)--(10,10);
  \draw[line width=0.5pt,black,dotted] (4,6)--(0,6);
 \fill (10,10) node[right] {\footnotesize{$\widehat{v}(\cdot,B)$}};
\fill (4.5,0) node[below] {\footnotesize{$k$}};
\fill (0,4) node[left] {\footnotesize{$k$}};
\fill (0,6) node[left] {\footnotesize{$1-k$}};
\fill (0,-10*\c) node[left] {\footnotesize{$-c$}};
\fill (0,10*\oql) node[left] {\footnotesize{$1-c$}};
\fill (0,10) node[left] {\footnotesize{$1$}};

\fill (10,-1) node[right] {\footnotesize{$\widehat{v}(\cdot,Q)$}};
\filldraw[color=blue, fill=blue] (4,4) circle (3pt);


\end{tikzpicture}
\end{minipage}
 \begin{minipage}{0.5\textwidth}
\begin{tikzpicture}[scale=0.45]

\pgfmathsetmacro{\c}{1.15}
\pgfmathsetmacro{\k}{(0.4)}
\pgfmathsetmacro{\kbl}{\k-\c*(1-\k)}
\pgfmathsetmacro{\kql}{(1-\k)*(1-\c)}
 \pgfmathsetmacro{\oql}{(1-\c)}
 \pgfmathsetmacro{\kqr}{1-\k}
 \pgfmathsetmacro{\keq}{1-\c+(\c)*\k}
 \pgfmathsetmacro{\j}{\k/(2*(1-\k))}
 \pgfmathsetmacro{\jh}{-\c+\j*(\c+1-\k)/(\k)}
 

\draw[line width=1pt,violet,solid] (10*\k,10*\kqr)--(10,10);

\draw[line width=0.5pt,blue] (0,-10*\c)--(10*\k,10*\kbl);
\draw[line width=.5pt,blue] (10*\k,10*\k)--(10,10);

\draw[line width=.5pt,red] (0,10*\oql)--(10*\k,10*\kql);
\draw[line width=.5pt,red] (10*\k,10*\kqr)--(10,0);
\filldraw[color=red, fill=red] (4,6) circle (3pt);

\draw[line width=.5pt,brown,solid] (0,10*\oql)--(10*\k,10*\keq);
\fill (10*\k,10*\keq-1) node[right] {\footnotesize{$\widehat{v}_{eq}(\cdot)$}};

\draw[line width=1pt,violet,solid] (10*\k,10*\kqr)--(10,10);
\draw[line width=1pt,violet,solid] (10*\k,6)--(10*\j,10*\jh);
\draw[line width=1pt,violet,solid] (0,10*\oql)--(10*\j,10*\jh);
\draw[line width=1pt,green,solid] (0,10*\oql)--(10*\k,10-10*\k);

\draw[line width=0.5pt,black,dashed] (0,-10*\c)--(10*\j,10*\jh);
\draw[line width=0.5pt,black,dashed] (10*\k,10*\k)--(10*\j,10*\jh);

 \draw[line width=0.5pt,black,dotted] (10*\j,-0.4)--(10*\j,10*\jh);
 \fill (10*\j,-0.5) node[below] {\footnotesize{$\widetilde{\mu}$}};

\draw[line width=0.5pt] (0,-10*\c) -- (0,10); 
\draw[line width=0.5pt] (0,0) -- (10,0); 


\fill (10,0) node[right] {\footnotesize{$\mu$}};

 \draw[line width=0.5pt,black,dotted] (10*\k,10*\kbl)--(10*\k,10*\kqr);
 \draw[line width=0.5pt,black,dotted] (0,4)--(4,4);
  \draw[line width=0.5pt,black,dotted] (4,6)--(4,4);
  \draw[line width=0.5pt,black,dotted] (0,10)--(10,10);
  \draw[line width=0.5pt,black,dotted] (4,6)--(0,6);
 \fill (10,10) node[right] {\footnotesize{$\widehat{v}(\cdot,B)$}};
\fill (4.5,0) node[below] {\footnotesize{$k$}};
\fill (0,4) node[left] {\footnotesize{$k$}};
\fill (0,6) node[left] {\footnotesize{$1-k$}};
\fill (0,-10*\c) node[left] {\footnotesize{$-c$}};
\fill (0,10*\oql) node[left] {\footnotesize{$1-c$}};
\fill (0,10) node[left] {\footnotesize{$1$}};

\fill (10,-1) node[right] {\footnotesize{$\widehat{v}(\cdot,Q)$}};
\filldraw[color=blue, fill=blue] (4,4) circle (3pt);
\filldraw[color=red, fill=red] (4,6) circle (3pt);


\end{tikzpicture}\end{minipage}
\caption{\label{BeerQuiche} Beer-Quiche Game.}
\end{figure}

 We make a few observations based on inspection of Figure \ref{BeerQuiche}. (1) $V^{jo}(\cdot)$ lies strictly above the equilibrium payoff graph $\widehat{v}_{eq}$. By implication, the ability to commit to a strategy is valuable for the sender. (2) With commitment, the sender's strategy exhibits ``strategic inversion.'' In particular, in the middle interval $\mu\in[\widetilde{\mu},k]$, the optimal belief system has $\{\mu_B=0,\mu_Q=k\}$. By implication, the tough sender commits to always select $Q$, while the wimpy sender mixes between $Q$ and $B$. Thus, with positive probability, both types select the action that the other type prefers in the absence of signaling concerns.\footnote{Without commitment, the wimpy type mixes, but the tough type always selects $B$. Thus, the equilibrium without commitment features exaggeration or bluffing by the wimpy type. Because the tough always selects beer, however, a strategic inversion does not arise.} (3) The \jenvelope{} is convex on $[0,k]$. Consistent with Proposition \ref{infact2}, extended commitment is valuable for the sender. (4) The extended commitment payoff cannot be achieved without commitment to actions (i.e., in the pre-persuasion benchmark). Indeed, for  prior beliefs in $[0,k]$, the unique optimal joint distribution with extended commitment is supported on $\{(\mu=0,Q); (\mu=k,Q)\}$. Consistent with Proposition \ref{prop:preper}, to achieve the extended commitment payoff in the pre-persuasion benchmark, a pooling equilibrium on action $Q$ must exist at $\mu=k$. Such an equilibrium does not exist---the probability of the tough type is non-zero, and without commitment, the tough type always selects $B$. (5) Building on the previous point, the right panel of Figure \ref{BeerQuiche} shows that $V^{pp}(\cdot)<V^{co}(\cdot)$. Recalling that the brown line represents $\widehat{v}_{eq}$ for $\mu\leq k$ and the blue line represents it for $\mu\geq k$, the concave envelope $V^{pp}(\cdot)$ interpolates $(0,1-c)$ and $(k,k)$ for $\mu\leq k$, and follows the blue line for $\mu\geq k$. Thus, the pre-persuasion payoff is always strictly smaller than the extended commitment payoff. (6) Noting that  the pooling payoff graph $\widehat{v}_Q^P$ consists of a single point $(0,1-c)$ and that the pooling payoff graph $\widehat{v}_B^P$ is constructed from $\widehat{v}_B$ by restricting attention only to $\mu\geq k$, it is also evident that $V^{pp}(\cdot)$ is the concave envelope of the pooling payoff graphs, consistent with Proposition \ref{pre-per}.

\section{Application: Rating Design}

We study the design of an optimal certification, rating, or grading system, endowing the designer with a preference for the rating's \textit{nominal} realization. We focus on two channels by which the nominal grade or rating matters, beyond the information it reveals. First, the designer benefits when the rating is favorable, an important feature in a variety of settings. For example, financial analysts who issue optimistic ratings are more likely to be promoted than their more-accurate peers \citep{Hong/Kubik:2003}. Similarly, professors may be able to improve their course evaluations by assigning higher grades, which could improve their career prospects \citep{Johnson:2003}. Certification intermediaries may similarly wish to present favorable findings to preserve business relationships with clients. The existence of naive receivers, who take ratings or grades at face-value, implies that a favorable rating can produce a  stronger response than is warranted, based solely on its information content.\footnote{\citet{Inderst/Ottaviani:2012a, Inderst/Ottaviani:2012b} study models of financial advice with naive consumers. \citet{KOS2007} incorporate credulity into a model of strategic communication. Comparing individual and institutional investors, \citet{Malmendier/Shanthikumar:2007} and \citet{DeFranco/Lu/Vasvari:2007} provide evidence that individual investors tend to interpret analyst recommendations naively. In the context of hiring, \citet{HHH2023} find that some employers react credulously to changes in undergraduate GPA at the time of hiring.} Second, exaggeration may also impose costs. Financial analysts who issue misleading reports risk legal sanction. Certifiers who recommend unsuitable or dangerous products risk litigation. Professors may have a distaste for inflating the grades of undeserving students. More broadly, the designer may have an aversion to lying, particularly in a self-serving manner.

To fix ideas, consider a financial analyst who rates an  asset. The state $\omega\in\{b,g\}$ is the asset's quality, and the prior is $\mu_0=\Pr(\omega=g)$.  The analyst can assign two possible ratings to the asset, $s\in\{H,L\}$. Before learning the state, he designs an evaluation procedure or disclosure rule, $\pi(s|\omega)$, which specifies the probability that rating $s$ is issued given state $\omega$. We call $\pi(\cdot|\cdot)$  a rating policy. After setting the policy, the analyst privately observes quality and issues a public rating. 

The rating is observed by a continuum of investors with total mass 1, each of whom has a single infinitesimal unit of capital.\footnote{Allowing for investors who are not infinitesimal makes no difference to the analysis.}  Investors may be either sophisticated or naive. Sophisticated investors observe both the rating policy $\pi(\cdot|\cdot)$ and the realized rating when updating their beliefs. In contrast, naive investors interpret ratings literally: $s=H$ means $\omega=g$, and $s=L$ means $\omega=b$. The fraction of naive investors is $\nu\in(0,1)$. After observing the rating and updating beliefs, each investor $i\in[0,1]$ draws an independent outside option $\theta_i\sim F(\cdot)$, where $F$ is twice continuously differentiable with support $[0,1]$. Investors then decide whether to allocate capital to the asset or their outside options.\footnote{One can interpret $E[\theta]$ as the asset price, and $\theta_i-E[\theta]$ as investor $i$'s idiosyncratic benefit or loss from investing, e.g., tax motives, liquidity needs, or warm glow. To streamline the model and focus on the rating design, we abstract from endogenous price adjustments, though these could be incorporated within our framework. In some cases, (e.g., IPO) financial assets sell for fixed prices.} If an investor allocates his capital to a good asset ($\omega=g$), then his payoff is 1, to a bad asset, 0. If instead the investor takes the outside option, his payoff is $\theta_i$. Therefore, if sophisticated investors believe $\mu=\Pr(\omega=g)$, then aggregate investment is $(1-\nu) F(\mu)+\nu \mathcal{I}(s=H)$.

The analyst would like to increase investment in the asset. This incentivizes him to exaggerate the rating to exploit the naive investors. This incentive   is tempered by the possibility of sanctions and psychological costs. In particular, when the high rating ($s=H$) is assigned to a bad asset ($\omega=b$), the analyst pays a (expected) cost $k>0$. Normalizing the analyst's payoffs by $1/(1-\nu)$, we have the following interim payoffs,
\begin{align*}
    &\widehat{v}(\mu,H)=F(\mu)+(b-c(1-\mu))\\
    &\widehat{v}(\mu,L)=F(\mu),
\end{align*}
where $b\equiv \nu/(1-\nu)$ and $c\equiv k/(1-\nu)$. Formally, the payoffs with $b=0$ and $c>0$ correspond to a \citet{S1973} model, in which sender has a signaling action available whose cost depends on the state of the world. With $b>0$ and $c=0$, payoffs correspond to a money burning model, where one action is ``purely dissipative'' \citep{Austen-Smith/Banks:2001,Kartik:2007}. Note that identical payoffs would arise in the absence of naive investors ($\nu=0$) if $b$ was interpreted as a career concern.

With commitment, the analyst's optimal rating policy is determined by the interaction of two incentives. First, the concavity or convexity of $F(\cdot)$ determines whether the analyst wishes to reveal or conceal information from sophisticated investors. When $F(\cdot)$ is concave (convex), concealing (revealing) information increases the aggregate investment   from the sophisticated segment of the market. Second, the analyst would like to manipulate the naive investors to boost their investment, but he must be wary of a high rating in the bad state. In particular, when $b>c$, the analyst benefits (on net) from boosting naive investment in the bad state even if it means he incurs the cost. In contrast, when $c>b$, the boost to naive investment in the bad state is not enough to offset the cost, and the analyst has an incentive not to exaggerate.  Thus, the incentive to reveal or conceal information from sophisticated and naive investors may reinforce or oppose each other. Furthermore, these incentives determine the shape and position of the analyst's interim payoff graphs. In particular, both $\widehat{v}(\cdot,H)$ and $\widehat{v}(\cdot,L)$ have identical concavity to $F(\cdot)$, and their relative positions are determined by the sign of $b-c$. When $b>c$, the graph of $\widehat{v}(\cdot,H)$ lies above $\widehat{v}(\cdot,L)$. In the reverse case, the graph of $\widehat{v}(\cdot,H)$ crosses $\widehat{v}(\cdot,L)$ once, and this crossing is from below---see Figure \ref{conceff}.

 First, consider concave $F(\cdot)$. If $b>c$, then $\widehat{v}_H$ lies strictly above $\widehat{v}_L$. With concavity, it is optimal for the analyst to pool on the high rating. However, when $c>b$, the payoff graphs cross. Although each graph is concave when viewed in isolation, the crossing creates non-concavity around the point of intersection. Thus, the analyst benefits by joining the graphs near the crossing, as illustrated in Figure \ref{conceff}. Proposition \ref{spenceRA} follows from  inspection of Figure \ref{conceff}.
 
 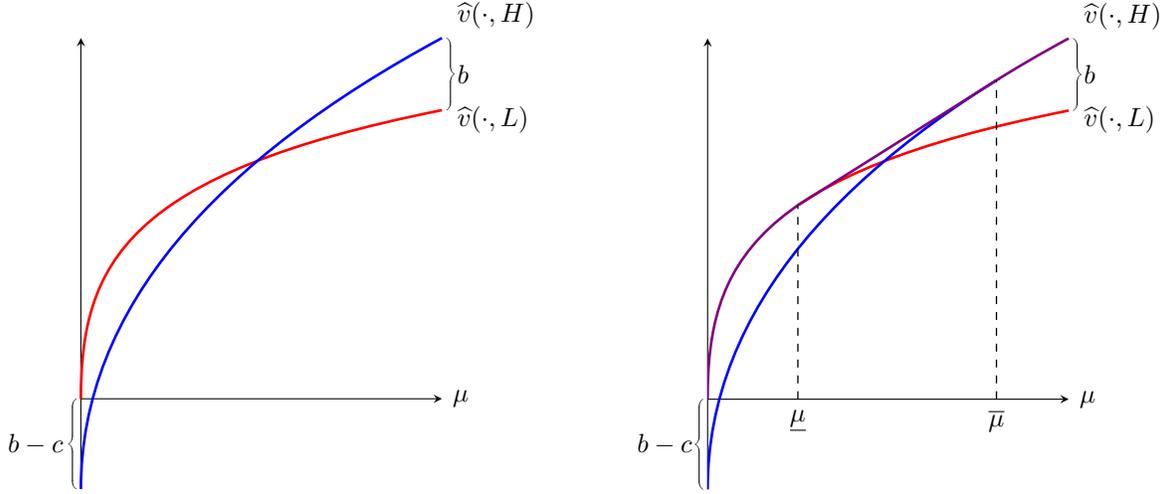
\begin{figure}
\begin{minipage}{0.5\textwidth}
\begin{center}
\begin{tikzpicture}[scale=0.6]

\draw[-stealth] (0,-2) -- (0,8); 
\draw[-stealth] (0,0) -- (8,0); 

\fill (8,0) node[right] {\footnotesize{$\mu$}};
\fill (8.1,8.5) node[right] {\footnotesize{$\widehat{v}(\cdot,H)$}};
\fill (8.1,6.2) node[right] {\footnotesize{$\widehat{v}(\cdot,L)$}};

\begin{scope}
\draw[line width=1pt,red] (0,0) .. controls (0,3) and (1,5) .. (8,6.4);
\end{scope}




\draw[line width=1pt,blue] (0,-2) .. controls (0,2) and (3,5.3) .. (8,8);





\fill (8.1,7.2) node[right] {\footnotesize{$b$}};

  \draw[decorate,decoration={calligraphic brace,mirror,amplitude=3pt}] (8.1,6.4) -- (8.1, 8);

  \draw[decorate,decoration={calligraphic brace,mirror,amplitude=3pt}] (-0.1,0) -- (-0.1, -2);
  \fill (-0.15,-1) node[left] {\footnotesize{$b-c$}};
  
\end{tikzpicture} \captionsetup{font=footnotesize}
\end{center}
\end{minipage}
\begin{minipage}{0.5\textwidth}
\begin{center}
\begin{tikzpicture}[scale=0.6]
\draw[-stealth] (0,-2) -- (0,8); 
\draw[-stealth] (0,0) -- (8,0); 

\fill (8,0) node[right] {\footnotesize{$\mu$}};
\fill (8.1,8.5) node[right] {\footnotesize{$\widehat{v}(\cdot,H)$}};
\fill (8.1,6.2) node[right] {\footnotesize{$\widehat{v}(\cdot,L)$}};

\begin{scope}
\clip(2,4.3) rectangle (8,8);
\draw[line width=1pt,red] (0,0) .. controls (0,3) and (1,5) .. (8,6.4);
\end{scope}

\begin{scope}
\clip(0,0) rectangle (2,4.35);
\draw[line width=1pt,violet] (0,0) .. controls (0,3) and (1,5) .. (8,6.4);
\end{scope}

\begin{scope}
\clip(0,-2) rectangle (6.5,7.07);
\draw[line width=1pt,blue] (0,-2) .. controls (0,2) and (3,5.3) .. (8,8);
\end{scope}

\begin{scope}
\clip(6.1,7.07) rectangle (8,8);
\draw[line width=1pt,violet] (0,-2) .. controls (0,2) and (3,5.3) .. (8,8);
\end{scope}


  \draw[line width=1pt,violet] (2,4.3)--(6.4,7.07);




\fill (8.1,7.2) node[right] {\footnotesize{$b$}};

  \draw[decorate,decoration={calligraphic brace,mirror,amplitude=3pt}] (8.1,6.4) -- (8.1, 8);

  \draw[decorate,decoration={calligraphic brace,mirror,amplitude=3pt}] (-0.1,0) -- (-0.1, -2);
  \fill (-0.1,-1) node[left] {\footnotesize{$b-c$}};
  

  \draw[line width=0.5pt,black,dashed] (2,0)--(2,4.3);
    \draw[line width=0.5pt,black,dashed] (6.4,0)--(6.4,7.07);
    \fill (6.4,0) node[below] {\footnotesize{$\overline{\mu}$}};
        \fill (2,0) node[below] {\footnotesize{$\underline{\mu}$}};
\end{tikzpicture}
\captionsetup{font=footnotesize}
\end{center}
\end{minipage}
\caption{$F(\cdot)$ concave, $c>b$\label{conceff}}
\end{figure}

\newpage
 \begin{prop}(Financial Ratings, Concave $F$).\label{spenceRA} In the model of financial ratings, if $F(\cdot)$ is concave and $c>b$, then belief thresholds $\underline{\mu},\overline{\mu}$ exist, with $0\leq\underline{\mu}<\overline{\mu}\leq 1$, such that,  
\begin{itemize} 
	\item if $\mu_0\leq\underline{\mu}$, then the optimal rating policy always assigns the low rating, $\mu_L=\mu_0$.
 
	\item if $\underline{\mu}<\mu_0<\overline{\mu}$, then the optimal rating policy induces beliefs $\{\mu_L=\underline{\mu},\mu_H=\overline{\mu}\}$. If $\underline{\mu}>0$, then the optimal rating understates asset value with positive probability, $\pi(L|g)>0$; if $\overline{\mu}<1$, then it overstates with positive probability, $\pi(H|b)>0$.
 
  	\item if $\mu_0\geq\overline{\mu}$,  then the optimal rating policy always assigns the high rating, $\mu_H=\mu_0$. 
\end{itemize}

\end{prop}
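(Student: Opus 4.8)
The plan is to apply Proposition \ref{sigcom}: the analyst's optimal payoff at prior $\mu_0$ is the \jenvelope{} $V^{jo}(\mu_0)$ of the two interim payoff graphs $\widehat{v}_H,\widehat{v}_L$, and an optimizing belief system is read off from the point(s) of $\widehat{v}_H,\widehat{v}_L$ whose convex combination attains it. First I would record two structural facts. Because $F$ is concave and $\widehat{v}(\mu,H)-\widehat{v}(\mu,L)=b-c(1-\mu)$ is affine in $\mu$, both $\widehat{v}(\cdot,H)$ and $\widehat{v}(\cdot,L)$ are concave on $[0,1]$. Moreover $b-c(1-\mu)$ is strictly increasing, equal to $b-c<0$ at $\mu=0$ and $b>0$ at $\mu=1$; hence the graphs cross exactly once, from below, at $\mu^\ast=1-b/c\in(0,1)$, with $\widehat{v}_L$ on top for $\mu<\mu^\ast$ and $\widehat{v}_H$ on top for $\mu>\mu^\ast$.

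The next step is to reduce the \jenvelope{} to an ordinary concave envelope. Since each graph is concave, any segment joining two points of the \emph{same} graph lies weakly below that graph and never contributes to the upper boundary; consequently the only segments relevant to $V^{jo}$ join a point of $\widehat{v}_L$ to a point of $\widehat{v}_H$, and $V^{jo}$ coincides with the concave envelope $V^{co}$ of the union (the instance of Proposition \ref{infact2}(ii) flagged in the text). It then suffices to concavify $\max\{\widehat{v}(\cdot,L),\widehat{v}(\cdot,H)\}$, which equals the concave function $\widehat{v}(\cdot,L)$ on $[0,\mu^\ast]$ and the concave function $\widehat{v}(\cdot,H)$ on $[\mu^\ast,1]$, with an upward slope jump of size $c$ at $\mu^\ast$. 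A function that is concave on each of two adjacent pieces but has a single convex kink between them has a concave envelope that linearizes over exactly one interval $[\underline{\mu},\overline{\mu}]\ni\mu^\ast$ and agrees with the original function outside it. This yields the three regions: pooling on $L$ (staying on $\widehat{v}_L$) for $\mu_0\le\underline{\mu}$, the linear segment for $\mu_0\in(\underline{\mu},\overline{\mu})$, and pooling on $H$ (staying on $\widehat{v}_H$) for $\mu_0\ge\overline{\mu}$.

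I would then pin down $\underline{\mu},\overline{\mu}$ and the induced belief system. On the linear piece the envelope is the common supporting line touching $\widehat{v}_L$ at $\underline{\mu}$ and $\widehat{v}_H$ at $\overline{\mu}$; in the interior case these satisfy equal slopes $F'(\underline{\mu})=F'(\overline{\mu})+c$ together with the secant condition equating that slope to $(\widehat{v}(\overline{\mu},H)-\widehat{v}(\underline{\mu},L))/(\overline{\mu}-\underline{\mu})$, while in the corner cases tangency is replaced by an endpoint, $\underline{\mu}=0$ or $\overline{\mu}=1$. Because the linearization interval contains the single crossing $\mu^\ast$ and each piece is concave, one obtains $0\le\underline{\mu}<\overline{\mu}\le1$. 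By Proposition \ref{sigcom}(ii), a prior $\mu_0\in(\underline{\mu},\overline{\mu})$ is attained by the belief system placing $\mu_L=\underline{\mu}$ on $\widehat{v}_L$ and $\mu_H=\overline{\mu}$ on $\widehat{v}_H$ with weights whose segment passes through $\mu_0$.

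Finally, the over/understatement conclusions follow from Bayes' rule applied to the induced posteriors. Since $\mu_L=\Pr(\omega=g\mid L)=\underline{\mu}$, having $\underline{\mu}>0$ forces $\pi(L\mid g)>0$, i.e. some good assets are rated low (understatement); since $\mu_H=\Pr(\omega=g\mid H)=\overline{\mu}$, having $\overline{\mu}<1$ forces $\Pr(\omega=b\mid H)>0$ and hence $\pi(H\mid b)>0$ (overstatement). The main obstacle I anticipate is rigorously justifying that the concave envelope linearizes over a \emph{single} interval and that the tangency/endpoint system has a unique admissible solution with $\underline{\mu}<\overline{\mu}$; concavity of each piece together with the single upward slope jump at $\mu^\ast$ is what makes this clean, but the corner cases $\underline{\mu}=0$ and $\overline{\mu}=1$ must be handled separately to cover all priors and to match the conditional phrasing of the over/understatement claims.
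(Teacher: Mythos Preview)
Your proposal is correct and follows essentially the same geometric approach as the paper, which states only that ``Proposition \ref{spenceRA} follows from inspection of Figure \ref{conceff}.'' You supply the details the paper leaves implicit---concavity of both graphs, the single crossing at $1-b/c$, the appeal to Proposition \ref{infact2}(ii) to equate the \jenvelope{} with the concave envelope, and the tangency conditions (which match those appearing in the paper's proof of Remark \ref{underover})---so your write-up is a faithful elaboration of the intended argument.
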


 To see the intuition, note that when $c>b$, the analyst would like to avoid exaggeration in the bad state and boost naive investment in the good state, pushing him toward a separating strategy. The incentive to separate is tempered by the concavity of $F(\cdot)$---expected investment by the sophisticated types is maximized when the rating is uninformative. At moderate priors, both good state and bad state are relatively likely, giving the strongest incentives for separation. In this case, the optimal rating policy is informative, though it generally misstates asset value with positive probability. When the prior is high, avoiding exaggeration in the bad state is less of a concern because the bad state is unlikely. Thus, the analyst pools on $H$. Similarly, when the prior is low, boosting naive investment in the good state is less of a concern because the good state is unlikely, and the analyst pools on $L$. 
 
 The logic of the previous paragraph suggests that the analyst's incentive to separate is driven by the existence of naive investors. Here, we show that when the cost of manipulation is not too large, naive investors are critical for the rating system to be informative.

 \begin{cor}\label{naive}(Naive Investors and Informativeness). Suppose $k\in(0,1)$ and $F(\cdot)$ is concave. At each prior belief $\mu_0\in(0,1)$, there exist thresholds $0<\nu_L(\mu_0)<\nu_M(\mu_0)<\nu_H(\mu_0)\leq 1$ such that (i)  the optimal rating is uninformative if $\nu<\nu_L(\mu_0)$, and (ii) the optimal rating is informative if $\nu\in(\nu_M(\mu_0),\nu_H(\mu_0))$.
 \end{cor}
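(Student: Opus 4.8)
The plan is to work directly with the two interim payoff graphs, $\widehat v(\cdot,L)=F$ and $\widehat v(\cdot,H)=F+b-c(1-\cdot)$ with $b=\nu/(1-\nu)$ and $c=k/(1-\nu)$, and to track how their geometry moves as $\nu$ varies. Two preliminary observations organize everything. First, $c>b$ exactly when $\nu<k$, so by the discussion preceding Proposition \ref{spenceRA} the rating is uninformative (pooling on $H$) whenever $\nu\ge k$; hence only $\nu\in(0,k)$ can yield informative ratings, and I may take $\nu_H\le k\le 1$. Second, the two graphs cross once, at $\mu^\times(\nu)=1-\nu/k$, and since $\widehat v(\cdot,H)$ exceeds $\widehat v(\cdot,L)$ in slope by the constant $c>0$, the pointwise maximum $W\equiv\max(\widehat v(\cdot,L),\widehat v(\cdot,H))=F+(b-c(1-\cdot))^+$ is concave on each side of $\mu^\times$ but has a single \emph{upward} kink there. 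For two once-crossing concave graphs of this form, the join of Proposition \ref{sigcom} coincides with the concave envelope of $W$: every separating policy is a chord from the $L$-graph to the $H$-graph, and the concave envelope is attained by such a chord straddling $\mu^\times$. Thus $V^{jo}(\mu_0)$ equals the concave envelope of $W$ at $\mu_0$, and the optimal rating is \emph{informative} precisely when $V^{jo}(\mu_0)>W(\mu_0)$.

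For part (i) I would fix $\mu_0$ and let $\nu\to0$. Then $\mu^\times(\nu)\to1$ and the height of the bump, $\max_\mu(b-c(1-\mu))^+=b\to0$, so the non-concavity of $W$ collapses onto the corner $\mu=1$. Concretely, let $\ell$ be the tangent line to $F$ at $\mu_0$; by concavity $\ell\ge F=\widehat v(\cdot,L)$ everywhere. I would then show $\ell\ge\widehat v(\cdot,H)$ on all of $[0,1]$ once $\nu$ is small: on $[0,\mu^\times]$ this is immediate since $\widehat v(\cdot,H)\le F\le\ell$ there, and on the shrinking interval $(\mu^\times,1]$ the gap $\ell-F$ is bounded below by (approximately) $\Delta\equiv\ell(1)-F(1)>0$, whereas $b-c(1-\mu)\le b\to0$. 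Since $\ell$ is linear and dominates $W$, the concave envelope of $W$ at $\mu_0$ is at most $\ell(\mu_0)=F(\mu_0)=W(\mu_0)$, forcing $V^{jo}(\mu_0)=W(\mu_0)$, i.e.\ an uninformative rating. Taking $\nu_L(\mu_0)$ to be the threshold below which this tangent-domination holds delivers (i).

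For part (ii) I would exhibit one value of $\nu$ at which the rating is strictly informative, then open it up by continuity. Choose $\nu^\ast=k(1-\mu_0)\in(0,k)$, so $\mu^\times(\nu^\ast)=\mu_0$: the upward kink of $W$ sits exactly at $\mu_0$. Connecting the $L$-graph at $\mu_0-\delta_1$ to the $H$-graph at $\mu_0+\delta_2$ with weights chosen so the belief mean is $\mu_0$, a first-order expansion about the kink gives a payoff $W(\mu_0)+\tfrac{c\,\delta_1\delta_2}{\delta_1+\delta_2}+o(\delta)>W(\mu_0)$, since the slope jumps up by $c>0$ at $\mu_0$. Hence $V^{jo}(\mu_0;\nu^\ast)>W(\mu_0;\nu^\ast)$. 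Because $W(\mu_0;\cdot)$ is continuous in $\nu$ and the concave-envelope value $V^{jo}(\mu_0;\cdot)$ inherits continuity from the fact that the graphs vary continuously with $\nu$ through $b,c$, the set $\{\nu:V^{jo}(\mu_0;\nu)>W(\mu_0;\nu)\}$ is open and contains $\nu^\ast$; it therefore contains an interval $(\nu_M,\nu_H)\ni\nu^\ast$, establishing (ii).

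For the ordering, part (i) makes $(0,\nu_L)$ uninformative, so $\nu_L$ is at most the infimum of the open informative set, which lies strictly below $\nu^\ast$; shrinking the neighborhood from part (ii) so that $\nu_M>\nu_L$ gives $\nu_L<\nu_M<\nu^\ast<\nu_H$ with $\nu_H\le k\le1$. I expect the main obstacle to be part (i): one must rule out \emph{every} separating deviation for small $\nu$, not merely check that $H$ fails to be belief-optimal at $\mu_0$. The global tangent-line domination argument handles this, but it is exactly where strict concavity of $F$ (on $[\mu_0,1]$, ensuring $\Delta>0$) is used—for a linear $F$ the bump, though vanishing in height, still makes full separation optimal at every interior prior, and the uninformative conclusion fails. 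I would therefore carry out (i) under strict concavity of $F$, flagging this as the operative strengthening of the stated hypothesis.
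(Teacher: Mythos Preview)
Your proposal is correct and follows essentially the same approach as the paper. For part (i) you use the tangent line to $F$ at $\mu_0$ to dominate the $H$-graph for small $\nu$, which is exactly the paper's Step 1 (the paper packages it as showing the function $Q(\nu)\equiv\min_\mu\{\ell(\mu)-\widehat v(\mu,H)\}$ is continuous with $Q(0)>0$); for part (ii) you pick the same $\nu^\ast=k(1-\mu_0)$ so that the crossing sits at $\mu_0$, exhibit a local separating improvement, and extend by continuity. Two minor remarks: the paper's continuity step is slightly cleaner because it fixes a concrete symmetric policy $\{\mu_0-\delta^\ast,\mu_0+\delta^\ast\}$ and compares its payoff to the pooling payoff, both manifestly continuous in $\nu$, rather than invoking continuity of the envelope $V^{jo}(\mu_0;\cdot)$ itself; and your flag about strict concavity is well taken---the paper's own proof also invokes strict concavity (in the case $\mu^\ast(0)=1$) without stating it in the corollary.
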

 Thus, an increase in the fraction of naive investors from a low level ($\nu<\nu_L$) to a moderate level ($\nu_M<\nu<\nu_H$) increases the informativeness of the optimal rating system. 
 
 Finally, we point out that both understatement (arising when $\underline{\mu}>0$) and overstatement (arising when $\overline{\mu}<1$) are generic features of the optimal rating system. 
 
 \begin{remark}\label{underover}Consider a concave $F(\cdot)$ and any $(\mu_1,\mu_2)$ with $0<\mu_1<\mu_2<1$. There exist parameters $\{k,\nu\}$ such that $c>b$, and  $(\underline{\mu},\overline{\mu})=(\mu_1,\mu_2)$ in the optimal rating system described in Proposition \ref{spenceRA}.
 \end{remark}
 As we discuss further below, both of these findings (simultaneous over- and under-statement, Corollary \ref{naive}) are a direct consequence of the analyst's commitment power.

  Next, consider convex $F(\cdot)$. If $c>b$, then both  payoff graphs are convex, with $\widehat{v}_H$ crossing $\widehat{v}_L$ once from below. In this case, a truthful rating policy is optimal for the analyst, $\{\mu_H=1,\mu_L=0\}$. With convex $F(\cdot)$, full information maximizes aggregate investment from the sophisticated types. Furthermore, with $c>b$, it is too costly for the analyst to manipulate naive investors in the bad state. Both incentives reinforce each other, resulting in truthful ratings. With $c<b$, the incentives are more intricate. Here, the analyst would like to manipulate naive investors in the bad state, even if it means he pays the exaggeration cost. Thus, maximizing investment from the sophisticated types requires full separation, but from the naive types it requires full pooling on $H$. If a large fraction of investors are naive, pooling on $H$ may be better than separation (even though each payoff graph is convex). Furthermore, even full separation is not as straightforward as it may appear. With a fully separating strategy, the naive types invest in only one state, the one where the asset is rated $H$. Thus, it is advantageous for the analyst to commit to assign the high rating in the state that is (relatively) more likely. In other words, if the prior is high, then the analyst wants to commit to generate the high rating in the good state, resulting in a truthful rating. But, if the prior is low, then the analyst wants to commit to generate the high rating in the bad state, resulting in an \textit{inverted} rating.  Proposition \ref{prop:luie-1} follows from inspection of Figure \ref{conveff}.

  \begin{figure}
\begin{minipage}{0.5\textwidth}
\begin{tikzpicture}[scale=0.65]
\draw[-stealth] (0,0) -- (0,8); 
\draw[-stealth] (0,0) -- (8,0); 

\fill (8,0) node[right] {\footnotesize{$\mu$}};




\draw[line width=1pt,blue] (0,0) .. controls (1,0) and (5,1) .. (8,5);


\draw[line width=1pt,red] (0,2) .. controls (1,1.9) and (5,3.8) .. (8,8);





\fill (8.1,6.5) node[right] {\footnotesize{$b$}};
\fill (8.1,8.5) node[right] {\footnotesize{$\widehat{v}(\cdot,H)$}};
\fill (8.1,5) node[right] {\footnotesize{$\widehat{v}(\cdot,L)$}};

  \draw[decorate,decoration={calligraphic brace,mirror,amplitude=3pt}] (8.1,5.05) -- (8.1, 8);

  \draw[decorate,decoration={calligraphic brace,mirror,amplitude=3pt}] (-0.1, 2) -- (-0.1,0);
  \fill (-0.1,1) node[left] {\footnotesize{$b-c$}};







\end{tikzpicture}
\end{minipage}
\begin{minipage}{0.5\textwidth}
\begin{tikzpicture}[scale=0.65]
\draw[-stealth] (0,0) -- (0,8); 
\draw[-stealth] (0,0) -- (8,0); 

\fill (8,0) node[right] {\footnotesize{$\mu$}};

\fill (3.8,0) node[below] {\footnotesize{$\overline{\mu}$}};

\fill (2.45,0) node[below] {\footnotesize{$\underline{\mu}$}};


\draw[line width=1pt,blue] (0,0) .. controls (1,0) and (5,1) .. (8,5);


\draw[line width=1pt,red] (0,2) .. controls (1,1.9) and (5,3.8) .. (8,8);

\begin{scope}
\clip(2.45,2.9) rectangle (3.8,3.8);
\draw[line width=1pt,violet] (0,2) .. controls (1,1.9) and (5,3.8) .. (8,8);
\end{scope}




\fill (8.1,6.5) node[right] {\footnotesize{$b$}};
\fill (8.1,8.5) node[right] {\footnotesize{$\widehat{v}(\cdot,H)$}};
\fill (8.1,5) node[right] {\footnotesize{$\widehat{v}(\cdot,L)$}};

  \draw[decorate,decoration={calligraphic brace,mirror,amplitude=3pt}] (8.1,5.05) -- (8.1, 8);

  \draw[decorate,decoration={calligraphic brace,mirror,amplitude=3pt}] (-0.1, 2) -- (-0.1,0);
  \fill (-0.1,1) node[left] {\footnotesize{$b-c$}};
  \fill (-0.1,0) node[left] {\footnotesize{$\mu^T_L$}};

\fill (7.6,8) node[above] {\footnotesize{$\mu^T_H$}};


  \fill (0,2.2) node[left] {\footnotesize{$\mu^I_H$}};

\fill (8.1,5.5) node[left] {\footnotesize{$\mu^I_L$}};

\draw[line width=0.5pt, black, dashed] (2.55,2.95) -- (8,5); 
\draw[line width=1pt, violet] (0,2) -- (2.45,2.9); 

\draw[line width=0.5pt, black, dashed] (0,0) -- (3.8,3.75); 

 \draw[line width=0.5pt, black, dashed] (2.5,0) -- (2.5,2.9); 

\draw[line width=0.5pt, black, dashed] (3.8,0) -- (3.8,3.8); 
\draw[line width=1pt, violet] (3.8,3.75) -- (8,8); 



\filldraw[color=blue, fill=white] (0,0) circle (3pt);
\filldraw[color=red, fill=white] (0,2) circle (3pt);

\filldraw[color=blue, fill=white] (8,5) circle (3pt);
\filldraw[color=red, fill=white] (8,8) circle (3pt);

\end{tikzpicture}
\end{minipage}
\captionsetup{font=footnotesize}
\caption{$F(\cdot)$ convex, $c<b$ \label{conveff}}
\end{figure}
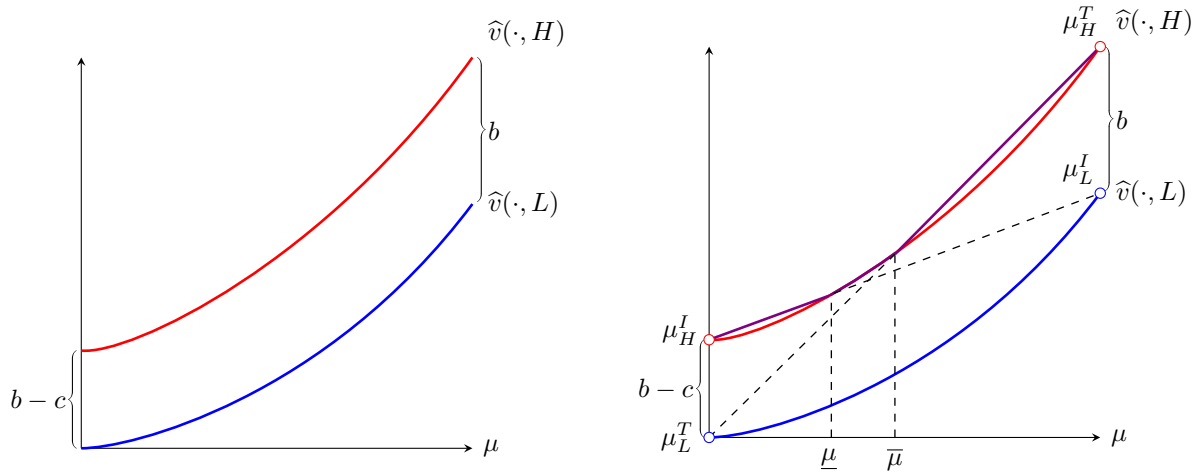

\begin{prop}\label{prop:luie-1}(Financial Ratings, Convex $F$). In the model of financial ratings, if $F(\cdot)$ is convex and $c<b$, then belief thresholds $\underline{\mu},\overline{\mu}$ exist with $0\leq\underline{\mu}\leq\overline{\mu}\leq 1$, such that
\begin{itemize}
\item if $\mu_0\leq\underline{\mu}$, then the optimal rating policy assigns the high rating to the bad asset and the low rating to the good asset, $\{\mu_L=1,\mu_H=0\}$.

\item if $\underline{\mu}< \mu_0< \overline{\mu}$, then the optimal rating policy always assigns the high rating, $\mu_H=\mu_0$.

\item if $\mu_0\geq \overline{\mu}$, then the optimal rating policy assigns the high rating to the good asset and the low rating to the bad asset, $\{\mu_L=0,\mu_H=1\}$.
\end{itemize}
\end{prop}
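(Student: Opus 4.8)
The plan is to apply the signaling-with-commitment characterization (Proposition \ref{sigcom}): the analyst's value at prior $\mu_0$ is the \jenvelope{} $V^{jo}(\mu_0)$ of the two interim payoff graphs $\widehat v_H$ and $\widehat v_L$. Write $g(\mu)\equiv\widehat v(\mu,H)=F(\mu)+b-c(1-\mu)$ and $f(\mu)\equiv\widehat v(\mu,L)=F(\mu)$, and record the structural facts I will use: $F$ is convex with $F(0)=0$ and $F(1)=1$, so both $g$ and $f$ are convex on $[0,1]$ with $g(0)=b-c$, $g(1)=1+b$, $f(0)=0$, $f(1)=1$; moreover $g-f=b-c(1-\mu)$ is affine and strictly positive because $c<b$. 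I would single out three feasible belief systems and their payoffs: the truthful system $\{\mu_H=1,\mu_L=0\}$ with payoff $T(\mu_0)=\mu_0(1+b)$; the inverted system $\{\mu_H=0,\mu_L=1\}$ with payoff $I(\mu_0)=\mu_0+(1-\mu_0)(b-c)$; and pooling on $H$, $\mu_H=\mu_0$, with payoff $P(\mu_0)=g(\mu_0)$. Pooling on $L$ yields $f(\mu_0)<g(\mu_0)=P(\mu_0)$ and is dominated. Since each of these belief systems lies in the join, $V^{jo}(\mu_0)\ge\max\{I(\mu_0),T(\mu_0),P(\mu_0)\}$ is immediate; the content is the reverse inequality.

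For the upper bound I would show that every Bayes-plausible two-belief system $\{(\mu_H,\lambda_H),(\mu_L,\lambda_L)\}$ is dominated by one of the three candidates, i.e.\ that the optimal posteriors can be taken in $\{0,\mu_0,1\}$. The argument proceeds in two reductions. First, fix the split $\lambda_H=\lambda\in(0,1)$ and parametrize the Bayes-plausibility line by $\mu_H=t$, $\mu_L=(\mu_0-\lambda t)/(1-\lambda)$; the payoff $\lambda g(t)+(1-\lambda)f((\mu_0-\lambda t)/(1-\lambda))$ is a sum of a convex function of $t$ and a convex function composed with an affine map, hence convex in $t$, so it is maximized at an endpoint of the feasible $t$-interval, where $\mu_H\in\{0,1\}$ or $\mu_L\in\{0,1\}$. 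Thus at an optimum at least one posterior sits at a corner. Second, suppose (the other case is symmetric) $\mu_L=0$; Bayes-plausibility gives $\lambda=\mu_0/\mu_H$ and payoff $\mu_0\,g(\mu_H)/\mu_H$ over $\mu_H\in[\mu_0,1]$. Here I would use that $g(\mu)/\mu$ is quasiconvex: its derivative has numerator $N(\mu)=\mu g'(\mu)-g(\mu)$ with $N(0)=-g(0)=-(b-c)<0$ and $N'(\mu)=\mu g''(\mu)\ge0$, so $N$ changes sign at most once (from $-$ to $+$) and $g(\mu)/\mu$ is first decreasing then increasing. Its maximum over $[\mu_0,1]$ is therefore attained at an endpoint, giving $P(\mu_0)$ (at $\mu_H=\mu_0$) or $T(\mu_0)$ (at $\mu_H=1$); the symmetric case $\mu_L=1$ yields $P(\mu_0)$ or $I(\mu_0)$ by the same quasiconvexity applied to $(g(1-\eta)-1)/\eta$ in $\eta=1-\mu_H$. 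Together with the degenerate (pooling) splits this establishes $V^{jo}(\mu_0)=\max\{I(\mu_0),T(\mu_0),P(\mu_0)\}$. I expect this reduction---ruling out interior posteriors for two \emph{convex} graphs, where Proposition \ref{infact2}(ii) does not apply---to be the main obstacle, with the quasiconvexity of $g(\mu)/\mu$ as its technical crux.

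It remains to read off the thresholds from the three-way maximum. The functions $I$ and $T$ are affine while $P$ is convex, and they share endpoints: $P(0)=I(0)=b-c$ and $P(1)=T(1)=1+b$, while $I(1)=1<1+b$ and $T(0)=0<b-c$. The difference $P-I$ is convex, vanishes at $\mu=0$, and is positive at $\mu=1$, so it is negative then positive with a single crossing $\underline{\mu}\in[0,1)$: $P\ge I\iff\mu\ge\underline{\mu}$. The difference $P-T$ is convex, vanishes at $\mu=1$, and is positive at $\mu=0$, so $P\ge T\iff\mu\le\overline{\mu}$ for a single threshold $\overline{\mu}\in(0,1]$. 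When $\underline{\mu}\le\overline{\mu}$ these orderings give exactly the three regions: on $[0,\underline{\mu}]$ one has $I\ge P\ge T$ (inverted optimal), on $[\underline{\mu},\overline{\mu}]$ one has $P\ge\max\{I,T\}$ (pool on $H$), and on $[\overline{\mu},1]$ one has $T\ge P\ge I$ (truthful). If the two crossings are in the opposite order, the convexity of $P$ forces $P<\max\{I,T\}$ throughout the overlap, the pooling region is empty, and setting $\underline{\mu}=\overline{\mu}$ equal to the unique crossing of the affine functions $I$ and $T$ preserves the statement with $0\le\underline{\mu}\le\overline{\mu}\le1$. Finally, translating the optimal belief systems back into rating policies---$\{\mu_L=1,\mu_H=0\}$, $\mu_H=\mu_0$, and $\{\mu_L=0,\mu_H=1\}$---and reading off the induced $\pi(\cdot|\cdot)$ gives the three bullet points.
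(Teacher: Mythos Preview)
The paper does not give a formal proof of this proposition; it says only that the result ``follows from inspection of Figure~\ref{conveff}.'' Your argument is a correct and much more explicit version of that figure-based reasoning: you reduce the join envelope to the three-way maximum $\max\{I,T,P\}$ and then read off the thresholds from the single crossings of $P$ with the two affine functions $I$ and $T$. The convexity-in-$t$ step and the quasiconvexity of $g(\mu)/\mu$ (via $N(\mu)=\mu g'(\mu)-g(\mu)$, $N(0)=-(b-c)<0$, $N'\ge 0$) are the right technical tools, and your treatment of the case $\underline{\mu}>\overline{\mu}$ by collapsing both thresholds to the $I$--$T$ crossing is the clean way to keep the statement intact.

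One small gap to close: after Step~1 you know one posterior is at a corner, but there are \emph{four} corner families, and you only carry out the second reduction for $\mu_L\in\{0,1\}$, deferring the $\mu_H\in\{0,1\}$ cases as ``symmetric.'' They are not literally symmetric---the remaining one-variable optimization now runs over $\mu_L$ and involves $f=F$ rather than $g$---but the same sign analysis works and is actually simpler. For $\mu_H=1$ the payoff is $(1+b)+(1-\mu_0)\,\dfrac{F(\mu_L)-(1+b)}{1-\mu_L}$, and with $S(\mu)=F'(\mu)(1-\mu)+F(\mu)-(1+b)$ you have $S(1)=-b<0$ and $S'\ge 0$, hence $S<0$ throughout and the expression is strictly decreasing in $\mu_L$; the maximum is at $\mu_L=0$, giving $T$. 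For $\mu_H=0$ the payoff is $(b-c)+\mu_0\,\dfrac{F(\mu_L)-(b-c)}{\mu_L}$, and with $R(\mu)=\mu F'(\mu)-F(\mu)+(b-c)$ you have $R(0)=b-c>0$ and $R'\ge 0$, so the expression is increasing in $\mu_L$ and maximized at $\mu_L=1$, giving $I$. With these two lines added, your reduction to $\{I,T,P\}$ is complete.
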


The inverted rating result is surprising. Intuitively,
with a convex $F(\cdot)$, sophisticated investors are likely to have good outside options. Provided that their fraction is sufficiently large, the analyst would like to maximize their aggregate investment by revealing the asset's quality. With a low prior belief, however, asset quality is likely to be low, and the sophisticated investors do not invest most of the time. To generate some investment when the asset is bad, the analyst can manipulate the naive types to invest. When the (expected) penalty for such manipulation is low, doing so is advantageous. Therefore, under these conditions, the optimal rating policy inverts the nominal meaning---sophisticated types infer that a highly rated asset is bad and avoid it, while naive types infer that a highly rated is good and invest.

The inverted rating result is consistent with 
empirical findings in the literature on financial analysts. Using data from the \textit{Global Research Analyst Settlement}, \citet{DeFranco/Lu/Vasvari:2007} find that in numerous instances analysts issued public guidance that was more positive than their private beliefs.
The response to these public announcements differed between individual investors (who are more likely to be naive) and (sophisticated) institutional investors. Crucially, they document that institutions  \textit{sold} the highly rated assets, while individual investors \textit{bought}. As we discuss further below, these findings are difficult to reconcile with a model in which analysts make public announcements without commitment. In such a model, sophisticated investors discount or ignore favorable announcements, but they do not interpret favorable announcements as bad news.

\paragraph{Formal Ratings vs. Informal Announcements} We conclude this section by highlighting the effects of the analyst's commitment power. In particular, suppose that instead of a formal rating conducted via a specific procedure that is directly observed by sophisticated investors, the analyst simply announces his rating, $s\in\{H,L\}$, after observing asset quality. In other words, consider the identical game without sender commitment power. Perfect Bayesian Equilibria of this game can be classified in the standard taxonomy. As is common in signaling games, in some cases pooling equilibria can be sustained by implausible off-path beliefs. We use the D1 refinement \citep{BanksS1987} to discipline the off-path beliefs, eliminating such equilibria.

\begin{remark}
 With the D1 refinement, the Perfect Bayesian Equilibrium without analyst commitment is unique. 
 \begin{itemize}
     \item If $b<c-1$, then the equilibrium is separating, with a high rating on the good asset and a low rating on the bad one, $\mu_H=1$ and $\mu_L=0$.
     \item If $b\in(c-1,c-F(\mu_0))$, then the equilibrium is semi-separating. The high rating is always assigned to the good asset, but both ratings are assigned with positive probability to the bad asset, $\mu_L=0$ and $c-F(\mu_H)=b$.
     \item If $b>c-F(\mu_0)$, then the equilibrium is pooling on $H$, with $\mu_L=0$ and $\mu_H=\mu_0$.

 \end{itemize}
 In all D1 equilibria, (i) a low rating fully reveals the bad state and a high rating is (weakly) good news, $\mu_L=0$ and $\mu_H\geq \mu_0$; (ii), the rating strategy is determined exclusively by $\{F(\mu_0),\nu,k\}$ (the overall shape of $F(\cdot)$ is irrelevant); (iii) rating informativeness is decreasing in the fraction of naive investors $\nu$ (or $b$).
 \end{remark}

  Intriguingly, all these three features are different when the analyst has commitment power. First, recall that with concave $F(\cdot)$, the optimal rating system not only overstates asset value, it may understate it as well. Indeed, whenever $\underline{\mu}>0$ and $\mu_0<\overline{\mu}$,  asset value is understated with positive probability ($\pi(L|g)>0$), and thus, the low rating does not fully reveal the bad state, $\mu_L>0$. Furthermore, with convex $F(\cdot)$, the analyst might use an inverted rating system in which the low rating reveals the good state and the high rating reveals the bad state $\mu_L=1>\mu_0>\mu_H=0$. Second, note that without commitment, the analyst's incentive to recommend $H$ in the bad state depends only on the level of investment he expects and on the exaggeration cost. With commitment, the global shape of $F(\cdot)$ determines whether the analyst has an incentive to reveal or conceal information, playing a critical role in the characterization. Third, recall from Corollary \ref{naive} that with concave $F(\cdot)$ and $c>b$, an increase in the fraction of naive investors can switch the optimal rating from uninformative to informative.

  \paragraph{Private Communication.} 

Consider the possibility that the analyst can privately communicate with sophisticated investors. Such private communications are heavily restricted by regulatory agencies. Nevertheless, understanding the consequences of such restrictions delivers useful insights. In particular, suppose that the analyst commits to a joint distribution of a rating $s\in\{H,L\}$ and a message $m\in M$, where $M$ is a large message space. Sophisticated investors observe the joint distribution, the rating, and the message and update their beliefs. Naive investors do not observe the message $m$: they invest if and only if the rating is high. Furthermore, assume that messages have no direct impact on the analyst's payoff: he incurs the cost only if he  exaggerates the asset's quality publicly, via the rating. By implication, the interim payoff graphs are unchanged, and this version of the model reduces to the extended commitment benchmark.

When $F(\cdot)$ is concave, both interim payoff graphs are concave. By Proposition \ref{infact2}, the analyst gains nothing from extended commitment. In contrast, when $F(\cdot)$ is convex, the extended commitment payoff is strictly above the \jenvelope{}, interpolating $(0,b-c)$ and $(1,1+b)$, both of which lie on $\widehat{v}(\cdot,H)$. In other words, the analyst always issues the high rating, and the naive investors always purchase the asset. Sophisticated investors learn the asset's true quality from the (private) message, and they invest if and only if the asset is good. Surprisingly, such private messages (extended commitment) may benefit the naive investors. In particular, when the prior belief is low $\mu_0\leq\underline{\mu}$, in the absence of such messages (signaling with commitment), the rating is inverted---naive investors buy when the asset is bad and do not buy when it is good. In contrast, with extended commitment, the naive investors buy \textit{both} when the asset is bad and when it is good. By investing in the good state, naive investors are better off.

\section{Conclusion}

In this paper, we study a standard signaling game, with the novel feature that the sender can commit to his strategy at the outset. We provide a geometric characterization of the sender's optimal payoff and strategy based on the topological join of the sender's interim payoff graphs. Extending our analysis to allow sender to commit to costless messages along with his action, we identify sender's gain and characterize environments in which it is zero. In addition, we characterize environments in which the sender must able to commit to actions in order to realize these gains. We provide novel insights in two applications: the design of adjudication procedures, and financial ratings.

Multiple directions for future research stand out. One direction is to extend the signaling with commitment environment 
to incorporate dynamics \citep{Kaya2009}, 
 imperfect observability of sender actions \citep{Jungbauer/Waldman:2023,Jungbauer/Waldman:2024}, receiver's private information \citep{HFT2002}, or some form of receiver commitment \citep{Whitmeyer:2024}.  Methodologically, the robustness literature uses the ``obedience approach'' based on Bayes Correlated Equilibrium, rather than the belief-based approach  that we use in our analysis. Understanding the connection between these approaches in signaling with commitment is a second direction for future research. This framework can also be applied to study questions in political economy, industrial organization, and finance.

\bibliography{fake}
\bibliographystyle{aer} 

\section{Appendix: Proofs}

\begin{proof}[Proof of Proposition \ref{BS1}] The proof follows \citet{KG2011}. It is presented only for completeness.

First, we show that a belief system induced by strategy $\Pi$ is Bayes-Plausible. 
\begin{align*}
	\sum_{s\in S}\tau(\mu_s)\mu_s(\omega)=\sum_{s\in S}\mu_0(\omega)\pi(s|\omega)=\mu_0(\omega)\sum_{s\in S}\pi(s|\omega)=\mu_0(\omega).
\end{align*}

Next,  we show that if a belief system is Bayes-Plausible, then it is induced by a strategy. Suppose belief system $\mathcal{B}\equiv\{\mu_s,\tau(\mu_s)\}_{s\in S}$ is Bayes-Plausible. Consider the following strategy:
\begin{align*}
	\pi(s|\omega)=\frac{\tau(\mu_s)\mu_s(\omega)}{\mu_0(\omega)},
\end{align*}
where the full support assumption ensures $\mu_0(\omega)>0$. Obviously, $\pi(s|\omega)\geq 0$. Moreover, Bayes-Plausibility implies $\sum_{s\in S}\tau(\mu_s)\mu_s(\omega)=\mu_0(\omega)$, and therefore $\sum_{s\in S}\pi(s|\omega)=1$. By construction, the proposed strategy induces belief system $\mathcal{B}$.
\end{proof}

\begin{proof}[Proof of Proposition \ref{sigcom}]\textit{Proof of part (i) ``if'' and part (ii)}. Suppose $(\mu_0,z)\in join(\widehat{v}_s)_{s\in S}$. By definition, $\{\lambda_s\}_{s\in S}$ exist such that $\lambda_s\geq 0$, $\sum_{s\in S}\lambda_s=1$, and $(\mu_0,z)=\sum_{s\in S}\lambda_s z_s$, where $z_s\in\widehat{v}_s$. From the definition of $\widehat{v}_s$, we have that for each $s\in S$, a $\mu_s\in\Delta(\Omega)$ exists such that $z_s=(\mu_s,\widehat{v}(\mu_s,s))$. Substituting, we have $(\mu_0,z)=\sum_{s\in S}\lambda_s(\mu_s,\widehat{v}(\mu_s,s))=(\sum_{s\in S}\lambda_s\mu_s,\sum_{s\in S}\lambda_s\widehat{v}(\mu_s,s))$. Thus,  belief system $\{\mu_s,\tau(\mu_s)=\lambda_s\}_{s\in S}$ is Bayes-Plausible ($\sum_{s\in S}\tau(\mu_s)\mu_s=\mu_0$) and attains payoff $z$ in sender's problem ($\sum_{s\in S}\tau(\mu_s)\widehat{v}(\mu_s,s)=z$).

\textit{Proof of part (i) ``only if.'' We show that if a Bayes-Plausible belief system $ \{\mu_s,\tau(\mu_s)\}_{s\in S}$ attains payoff $z$ in sender's problem, then $(\mu_0,z)\in join(\widehat{v}_s)_{s\in S}$.} Observe that $(\mu_0,z)=(\sum_{s\in S}\tau(\mu_s)\mu_s,\sum_{s\in S}\tau(\mu_s)\widehat{v}(\mu_s,s))=\sum_{s\in S}\tau(\mu_s)(\mu_s,\widehat{v}(\mu_s,s))$. Note that $(\mu_s,\widehat{v}(\mu_s,s))\in\widehat{v}_s$. Furthermore,  $\sum_{s\in S}\tau(\mu_s)=1$ and $\tau(\mu_s)\geq 0$. Therefore, with $\lambda_s=\tau(\mu_s)$, we have $(\mu_0,z)=\sum_{s\in S}\lambda_s z_s$, where $z_s=(\mu_s,\widehat{v}(\mu_s,s))\in\widehat{v}_s$. Therefore, $(\mu_0,z)\in join(\widehat{v}_s)_{s\in S}$.

\textit{Part (iii)} follows immediately from the definition of \jenvelope{}.\end{proof}

\subsection{Extended Commitment}
\begin{proof}[Proof of Proposition \ref{infodesign}] To prove the result, we show that in the information design benchmark, any combination of sender and receiver payoffs that can be attained  can also be attained with a public message, as in the extended commitment benchmark.

 Consider private communication. Sender's expected payoff is
\begin{align*}
\sum_{(m_\sigma,m_\rho,s,\omega)}\mu_0(\omega)\pi_{I}(m_\sigma,m_\rho,s|\omega)v(s,\widehat{a}(m_\rho,s),\omega),
\end{align*}
and receiver's expected payoff is 
\begin{align*}
\sum_{(m_\sigma,m_\rho,s,\omega)}\mu_0(\omega)\pi_{I}(m_\sigma,m_\rho,s|\omega)u(s,\widehat{a}(m_\rho,s),\omega),
\end{align*}
where the summation is over all $(m_\sigma,m_\rho,s,\omega)\in M_\sigma\times M_\rho\times S\times \Omega$ and $\widehat{a}(m_\rho,s)$ denotes the receiver's optimal response at $(m_\rho,s)$. Decompose the joint distribution $\pi_I(\cdot,\cdot,\cdot|\omega)$ to  into two parts: $\pi_I(m_\sigma,m_\rho,s|\omega)=\pi_\rho(m_\rho,s|\omega)\pi_\sigma(m_\sigma|m_\rho,s,\omega)$. Substituting, we have that sender's payoff is
\begin{align*}
\sum_{(m_\rho,s,\omega)}&\sum_{m_\sigma\in M_\sigma}\mu_0(\omega)\pi_\rho(m_\rho,s|\omega)\pi_\sigma(m_\sigma|m_\rho,s,\omega)v(s,\widehat{a}(m_\rho,s),\omega)=\\&\sum_{(m_\rho,s,\omega)}\mu_0(\omega)\pi_\rho(m_\rho,s|\omega)v(s,\widehat{a}(m_\rho,s),\omega),
\end{align*}
and receiver's is 
\begin{align*}
\sum_{(m_\rho,s,\omega)}\mu_0(\omega)\pi_\rho(m_\rho,s|\omega)u(s,\widehat{a}(m_\rho,s),\omega).
\end{align*}
Now, consider public communication with $M=M_\rho$ and a family of joint distributions $\Pi_E=\{\pi_\rho(m,s|\omega)|(m,s)\in M\times S,\omega\in\Omega\}$. Sender's expected payoff is $$\sum_{(m_\rho,s,\omega)}\mu_0(\omega)\pi_\rho(m_\rho,s|\omega)v(s,\widehat{a}(m_\rho,s),\omega)$$ and receiver's is $$\sum_{(m_\rho,s,\omega)}\mu_0(\omega)\pi_\rho(m_\rho,s|\omega)u(s,\widehat{a}(m_\rho,s),\omega),$$ which are evidently identical to the payoffs in the case of private communication.\end{proof}

\begin{proof}[Proof of Proposition \ref{inducible}] \noindent\textit{Part (i).} First, we show that if $\tau(\mu,s)$ is induced by a sender strategy, then the marginal distribution of the posterior belief is Bayes-Plausible.

Consider a sender strategy, $\overline{\Pi}=\{\pi(m,s|\omega)\,|\,m\in M,\,s\in S,\,\omega\in\Omega\}$. Any realization from this strategy $(m,s)$, generates a posterior belief about the state $\mu_{(m,s)}$ and arises with a certain probability $\hat{\tau}(m,s)$,
\begin{align*}
    \hat{\tau}(m,s)=\sum_{\omega\in \Omega}\pi(m,s|\omega)\mu_0(\omega)\quad\quad \Pr(\omega|m,s)=\mu_{(m,s)}(\omega)=\frac{\pi(m,s|\omega)\mu_0(\omega)}{\hat{\tau}(m,s)}.
\end{align*}
Consider a partition of the set of  realizations $(m,s)$ based on the posterior belief $\mu_{(m,s)}$ that they induce and the action $s$. Define a set $E(\mu,s)\equiv\{(m,s)|m\in M,\, \mu_{(m,s)}=\mu\}$, the set of all message/action pairs that generate posterior belief $\mu$ and have action $s$. By definition, joint probability $\tau(\mu,s)=\sum_{(m,s)\in E(\mu,s)}\hat{\tau}(m,s)$. Next, note that by the Law of Iterated Expectations (or direct substitution),
\begin{align}\label{m1}
    \sum_{(m,s)\in M\times S}\hat{\tau}(m,s)\mu_{(m,s)}=\mu_0.
\end{align}
Regrouping the sum according to the sets $E(\mu,s)$ yields
\begin{align}\label{m2}
   \sum_{(m,s)\in M\times S}\hat{\tau}(m,s)\mu_{(m,s)}&=\sum_{(\mu,s)\in\Delta(\Omega)\times S}\Big\{\sum_{(m,s)\in E(\mu,s)}\hat{\tau}(m,s)\mu_{(m,s)}\Big\}\notag \\
&=\sum_{(\mu,s)\in\Delta(\Omega)\times S}\Big\{\mu\ \sum_{(m,s)\in E(\mu,s)}\hat{\tau}(m,s)\Big\}\notag \\
&=\sum_{(\mu,s)\in\Delta(\Omega)\times S}\mu\ \tau(\mu,s).
\end{align}
Combining \eqref{m1} and \eqref{m2} yields
\begin{align*}
\sum_{(\mu,s)\in\Delta(\Omega)\times S}\tau(\mu,s)\mu=\mu_0.
\end{align*}
Decomposing $\tau(\mu,s)=\tau(\mu)\tau(s|\mu)$ in the above equation yields
\begin{align*}
    \sum_{(\mu,s)\in\Delta(\Omega)\times S}\tau(\mu)\tau(s|\mu)\mu=\sum_{\mu\in\Delta(\Omega)}\sum_{s\in S}\{\tau(\mu)\tau(s|\mu)\mu\}=\sum_{\mu\in\Delta(\Omega)}\tau(\mu)\mu=\mu_0.
\end{align*}

Next, we show that if $\tau(\mu)$ is Bayes-Plausible, then it can be induced by a strategy such that $\Pr(\omega|m,s)=\Pr(\omega|m)$.

Consider joint distribution $\tau(\mu,s)$ such that the marginal distribution $\tau(\mu)$ is Bayes-Plausible. From \citet{KG2011} Proposition 1, the sender can design a signal $\pi'(m|\omega)$ that induces $\tau(\mu)$ using the message space alone. Next, define a joint probability of messages and actions conditional on $\omega$ as $\pi'(m,s|\omega)=\Pr(m|\omega)\Pr(s|m,\omega)= \pi'(m|\omega)\tau(s|\mu=\mu_m)$, where $\mu_m$ is the posterior induced by message $m$ from signal $\pi'(m|\omega)$. Clearly, this construction induces $\tau(\mu,s)$.
Moreover, in this construction, $\Pr(s|m,\omega)=\tau(s|\mu=\mu_m)$, so that $\Pr(s|m,\omega)=Pr(s|m)$, and hence 
\begin{align*}
 \Pr(\omega|m,s)=\Pr(\omega|m)\frac{\Pr(s|m,\omega)}{\Pr(s|m)}=\Pr(\omega|m).
\end{align*}

\noindent\textit{Part (ii).} As part of the proof of part (i), we showed: if $\tau(\mu)$ is Bayes-Plausible, then it can be induced by a strategy such that $\Pr(\omega|m,s)=\Pr(\omega|m)$. This proves part (ii).
\end{proof}

\begin{lem}\label{infact4} The convex hull of the union of the sender's payoff graphs is equal to the convex hull of the join of the sender's payoff graphs, $con((\widehat{v}_s)_{s\in S})=con(join(\widehat{v}_s)_{s\in S})$. 
\end{lem}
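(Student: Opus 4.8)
The plan is to prove the set equality by sandwiching the join between the union and its convex hull, and then taking convex hulls of the two resulting inclusions. Concretely, I would first establish the elementary chain
\begin{align*}
\bigcup_{s\in S}\widehat{v}_s\ \subseteq\ join(\widehat{v}_s)_{s\in S}\ \subseteq\ con((\widehat{v}_s)_{s\in S}),
\end{align*}
and then apply the convex-hull operator to each inclusion.

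The right-hand inclusion is immediate from the definition of the join: any point of $join(\widehat{v}_s)_{s\in S}$ has the form $\sum_{s\in S}\lambda_s x_s$ with $x_s\in\widehat{v}_s$, $\lambda_s\geq 0$, and $\sum_s\lambda_s=1$. This is in particular a convex combination of points drawn from $\bigcup_{s\in S}\widehat{v}_s$, hence it lies in the convex hull of the union. The left-hand inclusion asserts that each individual graph sits inside the join. To see this, fix any $s'\in S$ and any $x_{s'}\in\widehat{v}_{s'}$; put $\lambda_{s'}=1$ and $\lambda_s=0$ for $s\neq s'$. The definition of the join still requires me to exhibit a point $x_s\in\widehat{v}_s$ in every coordinate, including the zero-weight ones, and this is possible precisely because every payoff graph is nonempty---indeed $\widehat{v}_s=\{(\mu,\widehat{v}(\mu,s))\,|\,\mu\in\Delta(\Omega)\}$ is the graph of a function over the nonempty simplex. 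With arbitrary such choices the combination collapses to $\sum_s\lambda_s x_s=x_{s'}$, so $x_{s'}\in join(\widehat{v}_s)_{s\in S}$, giving $\widehat{v}_{s'}\subseteq join$ and hence $\bigcup_s\widehat{v}_s\subseteq join$.

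Given the sandwich, I would close the argument by taking convex hulls. The right-hand inclusion and monotonicity plus idempotency of $con(\cdot)$ yield $con(join(\widehat{v}_s)_{s\in S})\subseteq con(con((\widehat{v}_s)_{s\in S}))=con((\widehat{v}_s)_{s\in S})$. The left-hand inclusion, together with the fact that the convex hull of the union already contains the union, gives $con((\widehat{v}_s)_{s\in S})\subseteq con(join(\widehat{v}_s)_{s\in S})$. The two inclusions combine to the asserted identity.

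There is no genuinely hard step here: the content is entirely the sandwich together with monotonicity and idempotency of the convex hull. The one place that demands care---and the step I would flag as the main obstacle to a fully rigorous write-up---is the inclusion $\widehat{v}_{s'}\subseteq join$, since the definition of the join forces a selection of a point from \emph{every} set, even those receiving weight zero; this is exactly why nonemptiness of all the payoff graphs is essential, and it is the only property of $\widehat{v}_s$ the argument actually uses.
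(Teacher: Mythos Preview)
Your proof is correct and follows essentially the same approach as the paper: both establish the sandwich $\bigcup_s\widehat{v}_s\subseteq join(\widehat{v}_s)_{s\in S}\subseteq con((\widehat{v}_s)_{s\in S})$ and then apply the convex-hull operator to obtain the two inclusions. Your write-up is slightly more explicit about nonemptiness of the graphs and about invoking monotonicity and idempotency of $con(\cdot)$, but the underlying argument is identical.
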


\begin{proof}[Proof of Lemma \ref{infact4}] First, we show that $con(\widehat{v}_s)_{s\in S}\subseteq con(join(\widehat{v}_s)_{s\in S})$. This is immediate, since the union of the payoff graphs is a weak subset of their join (by definition).
	
Next, we show that $con(join(\widehat{v}_s)_{s\in S})\subseteq con(\widehat{v}_s)_{s\in S}$. Any point in $join(\widehat{v}_s)_{s\in S}$ can be expressed as a convex combination of some points belonging to the sender's payoff graphs. Thus, any point in $join(\widehat{v}_s)_{s\in S}$ is a convex combination of points in $\cup_{s\in S}\widehat{v}_s$. Therefore, any convex combination of points in $join(\widehat{v}_s)_{s\in S}$, is also a convex combination of points in $\cup_{s\in S}\widehat{v}_s$. The result follows.\end{proof}

\subsection{Pre-Persuasion}
\begin{lem}\label{lem:preper}
    Let $\mathcal{B}=(\mu_s,\tau(\mu_s))_{s\in S}$ be the belief system associated with a Perfect Bayesian Equilibrium of the signaling game that exists when the prior belief is $\mu$. Choose any belief $\mu_s\in \mathcal{B}$, such that $\tau(\mu_s)>0$ (i.e., choose any on-path action $s$ and the associated belief $\mu_s$). A pooling equilibrium exists at $(\mu_s,s)$.
\end{lem}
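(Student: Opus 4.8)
The plan is to relabel $\mu_s$ as a new prior and recognize that the sender's incentive constraints from the original equilibrium are \emph{exactly} the pooling incentive constraints at $(\mu_s,s)$. Concretely, I would construct the candidate pooling equilibrium as follows: take $\mu_s$ as the prior, have every sender type play the action $s$ (so that $s$ induces belief $\mu_s$ and is chosen with probability one), and import verbatim from $\mathcal{B}$ the off-path beliefs $\mu_{s'}$ and the associated receiver responses $\widehat{a}(\mu_{s'},s')$ for every other action $s'\neq s$. The claim is then that this belief system satisfies conditions (1) and (2) of the definition of a pooling equilibrium.

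The substantive step is to verify condition (2), and this is where the on-path hypothesis $\tau(\mu_s)>0$ does the work. Since $\mathcal{B}$ is induced by some strategy $\Pi$ of the signaling game with $\mu_s(\omega)=\mu(\omega)\pi(s\mid\omega)/\tau(\mu_s)$, any state $\omega$ in the support of $\mu_s$ must satisfy $\mu(\omega)>0$ and $\pi(s\mid\omega)>0$; that is, $s$ is played with positive probability in each such state under the original equilibrium. Because the original play is a PBE, optimality of the sender's strategy in state $\omega$ then forces $\widehat{v}$-level incentive compatibility: for every such $\omega$ and every $s'\in S$, $v(s,\widehat{a}(\mu_s,s),\omega)\geq v(s',\widehat{a}(\mu_{s'},s'),\omega)$, where the responses are precisely those inherited from $\mathcal{B}$. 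Reading this inequality with $\mu_s$ in the role of the prior is exactly condition (2). Note also that $\mu$ need not have full support, but this causes no difficulty since the support of $\mu_s$ is contained in the support of $\mu$, so the original optimality conditions are available at every relevant $\omega$.

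Condition (1) holds by construction, and the off-path beliefs $\mu_{s'}$ for $s'\neq s$ are left unrestricted by the definition, so importing them from $\mathcal{B}$ keeps the receiver's responses $\widehat{a}(\mu_{s'},s')$ sequentially rational. The main obstacle is bookkeeping rather than conceptual: I must ensure that the receiver responses entering the pooling incentive constraint coincide with those that disciplined the sender in the original equilibrium. This alignment relies on sender-preferred selection (assumed throughout the pre-persuasion benchmark), which guarantees that the on-path response to $s$ is $\widehat{a}(\mu_s,s)$ and the responses to the remaining actions are $\widehat{a}(\mu_{s'},s')$; once the responses match across the two equilibria, the inequality transfers without modification, completing the argument.
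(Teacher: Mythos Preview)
Your proposal is correct and essentially identical to the paper's proof: both construct the pooling belief system by setting $\tau(\mu_s)=1$ and inheriting the off-path beliefs $\mu_{s'}$ from $\mathcal{B}$, then verify condition~(2) via the Bayes-rule observation that $\mu_s(\omega)>0$ together with $\tau(\mu_s)>0$ forces $\pi(s\mid\omega)>0$, so sender optimality in the original PBE delivers the required inequality state by state. Your explicit remark that the receiver responses entering the inequality must be $\widehat{a}(\cdot,\cdot)$ on both sides---secured by sender-preferred selection---is a point the paper's proof leaves implicit.
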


\begin{proof}[Proof of Lemma \ref{lem:preper}] To show existence of a pooling equilibrium on action $s$ at belief $\mu_s$, construct belief system $\mathcal{B}_{pool}$ from belief system $\mathcal{B}$ by setting the probability of each action $s'\neq s$ to 0 and setting the probability of action $s$ to 1, which obviously satisfies condition (1) of the definition. To verify condition (2), note that in the Perfect Bayesian Equilibrium represented by belief system $\mathcal{B}$, the probability of state $\omega$ given action $s$ is derived from Bayes' rule whenever $\tau(\mu_s)>0$. That is, if $\mu_s\in \mathcal{B}$ and $\tau(\mu_s)>0$, then
\begin{align*}
\mu_s(\omega)=\frac{\sum_{\omega\in\Omega}\mu(\omega)\pi(s|\omega)}{\tau(\mu_s)}.
\end{align*}
By implication, if $\mu_s(\omega)>0$ and $\tau(\mu_s)>0$, then $\mu(\omega)>0$ and $\pi(s|\omega)>0$. In other words, if state $\omega$ has strictly positive probability in the equilibrium belief associated with action $s$, $(\mu_s(\omega)>0)$, then $\omega$ must be in the support of the prior belief, and action $s$ must be chosen in state $\omega$ with strictly positive probability. It follows that action $s$ must be optimal for sender in state $\omega$, i.e., for any $s'\neq s$,
\begin{align}\label{poolIC}
    v(s,\widehat{a}(\mu_s,s),\omega)\geq v(s',\widehat{a}(\mu_{s'},s'),\omega).
\end{align}
This proves condition (2).    
\end{proof}

\begin{proof}[Proof of Proposition \ref{pre-per}]
For purposes of the proof, let 
\begin{align*}
    V^{pool}(\mu)&\equiv\sup\Big\{z\,|\,(\mu,z)\in con((\widehat{v}^P_{s})_{s\in S})\Big\}\\
    V^{pp}(\mu)&\equiv  \sup\Big\{z\,|\,(\mu,z)\in con(\widehat{v}_{eq})\Big\}.
\end{align*}
Our goal is to show that $V^{pool}(\mu)=V^{pp}(\mu)$. 

\noindent\textit{Step 1.} We show that $V^{pp}(\mu)\geq V^{pool}(\mu)$.  Consider any point $(\mu_0,z)\in con((\widehat{v}^P_s)_{s\in S})$. Because it lies in the convex hull of $\cup_{s\in S} \widehat{v}^P_s$, a joint distribution $\tau^{pool}\in \Delta(\Delta(\Omega)\times S)$ exists such that 
\begin{align*}
(\mu_0,z)=\sum_{(\mu,s)\in \text{sp}[\tau^{pool}]}\tau^{pool}(\mu,s)(\mu,\widehat{v}(\mu,s))
\end{align*}
and each $(\mu,\widehat{v}(\mu,s)\in \widehat{v}_s^P$. By definition, $(\mu,\widehat{v}(\mu,s))\in\widehat{v}_s^P$ if and only if $(\mu,s)\in P$, i.e., a pooling equilibrium exists at $(\mu,s)$. It follows that at each belief $\mu$ such that $(\mu,s)\in\text{sp}[\tau^{pool}]$, and therefore, $\widehat{v}(\mu,s)\leq \widehat{v}_{eq}(\mu)$. It follows that  
\begin{align*}
    z\leq z'\equiv\sum_{(\mu,s)\in \text{sp}[\tau^{pool}]}\tau^{pool}(\mu,s)\widehat{v}_{eq}(\mu).
\end{align*}
Furthermore, 
\begin{align*}
    (\mu_0,z')=\sum_{(\mu,s)\in \text{sp}[\tau^{pool}]}\tau^{pool}(\mu,s)(\mu_0,\widehat{v}_{eq}(\mu)).
\end{align*}
Thus, $(\mu_0,z')\in con(\widehat{v}_{eq})$. We have shown that, for each $(\mu,z)\in con((\widehat{v}^P_s)_{s\in S})$, there exists $(\mu,z')\in con(\widehat{v}_{eq})$ such that $z'\geq z$. By implication $V^{pp}(\mu)\geq V^{pool}(\mu)$.

\noindent\textit{Step 2.} We show that $V^{pool}(\mu)\geq V^{pp}(\mu)$. Consider any point in $(\mu,\widehat{v}_{eq}(\mu))\in\widehat{v}_{eq}$. The sender-preferred PBE that exists at $\mu$ induces a belief system $(\mu_s,\tau(\mu_s))_{s\in S}$, and $(\mu,\widehat{v}_{eq}(\mu))=\sum_{s\in S}\tau(\mu_s)(\mu_s,\widehat{v}(\mu_s,s))$. By Lemma \ref{lem:preper}, for each $s\in S$ such that $\tau(\mu_s)>0$, a pooling equilibrium exists at $(\mu_s,s)$, i.e., $(\mu_s,s)\in P$. Thus, for all such $s$, we have $(\mu_s,\widehat{v}(\mu_s,s))\in \widehat{v}_s^P$. We have shown that any point $(\mu,\widehat{v}_{eq}(\mu))$ in $\widehat{v}_{eq}$ can be written as a convex combination of points in the union of the pooling payoff graphs.  Therefore, $\widehat{v}_{eq}\subseteq con((\widehat{v}_s^P)_{s\in S})$. Taking the convex hull of both sets, we have $con(\widehat{v}_{eq})\subseteq con(con((\widehat{v}_s^P)_{s\in S}))=con((\widehat{v}_s^P)_{s\in S})$. By implication, $V^{pool}(\mu)\geq V^{pp}(\mu)$.\end{proof}

\subsection{Extended Commitment vs. Signaling With Commitment.}
\begin{proof}[Proof of Proposition \ref{infact2}] Part (i). From Lemma \ref{infact4}, we have that in the extended commitment benchmark, sender's optimal maximum payoff is the concave envelope of $V^{jo}(\cdot)$. The result follows from Jensen's inequality.

Part (ii). Suppose to the contrary that $\widehat{v}(\cdot,s)$ is concave for all $s\in S$, but $V^{jo}(\mu)\neq V^{co}(\mu)$ for some belief $\mu$. It must be that $V^{co}(\mu)$ places positive weight on more than one point from $\widehat v_s$ for some $s\in S$. In particular, suppose $(\mu,V^{co}(\mu))$ includes two points, $x_{1s}=(\mu_1,\widehat v(\mu_1,s))$ and $x_{2s}=(\mu_2,\widehat v(\mu_2,s))$ from the same graph $\widehat v_s$ in the convex combination with strictly positive weights $\lambda_1$ and $\lambda_2$, respectively. In the convex combination, replace these two points with $\hat \mu=\alpha \mu_1+(1-\alpha)\mu_2$, where $ \alpha=\lambda_1/(\lambda_1+\lambda_2)$, and the associated weight $\hat \lambda=\lambda_1+\lambda_2$. By concavity, $\hat\lambda\ \widehat v(\hat\mu,s)\geq \lambda_1\ \widehat v( \mu_1,s)+\lambda_2\ \widehat v( \mu_2,s)$. This replacement results in a new convex combination with a higher value of the sender's payoff, contradicting the definition on $V^{co}(\mu)$.
\end{proof}

\subsection{Extended Commitment vs. Pre-Persuasion}

\begin{proof}[Proof of Proposition \ref{prop:preper}]
 \noindent\textit{Part 1: ``if.''} 
 
 \noindent\textit{Step 1}. We show that $V^{pp}(\mu_0)\geq V^{co}(\mu_0)$. Consider a joint distribution that is optimal in the extended commitment benchmark for prior $\mu_0$, $\tau^{co}\in\Delta(G)$. By definition $(\mu_0,V^{co}(\mu_0))=\sum_{\text{sp}[\tau^{co}]}\tau^{co}(\mu,s)(\mu,\widehat{v}(\mu,s))$. By assumption, for each $(\mu,s)$ such that $\tau^{co}(\mu,s)>0$, a pooling equilibrium exists at $(\mu,s)$, i.e., $(\mu,s)\in P$. By definition, $(\mu,\widehat{v}(\mu,s))\in\widehat{v}_s^P$. Therefore, $(\mu_0,V^{co}(\mu_0))=\sum_{\text{sp}[\tau^{co}]}\tau^{co}(\mu,s)(\mu,\widehat{v}(\mu,s))\in con((\widehat{v}^P_s)_{s\in S})$. It follows that $V^{pp}(\mu_0)\geq V^{co}(\mu_0)$.  
 
 \noindent\textit{Step 2.} We show that $V^{co}(\mu_0)\geq V^{pp}(\mu_0)$. By definition, $\widehat{v}^P_s\subseteq \widehat{v}_s$, which implies $\cup_{s\in S}\widehat{v}^P_s\subseteq\cup_{s\in S}\widehat{v}_s$, and in turn, $con(\cup_{s\in S}\widehat{v}^P_s)\subseteq con(\cup_{s\in S}\widehat{v}_s)$.  The claim is immediate.

 \noindent Combining Steps 1 and 2, we have $V^{pp}(\mu_0)=V^{co}(\mu_0)$, completing the proof of Part 1.

\noindent\textit{Part 2: ``only if.''} Suppose that $V^{pp}(\mu_0)=V^{co}(\mu_0)$. Because $V^{pp}(\mu_0)$ is attained, a joint distribution $\tau^{pp}\in\Delta(G)$ exists, such that $(\mu_0,V^{pp}(\mu_0))=\sum_{\text{sp}[\tau^{pp}]}\tau^{pp}(\mu,s)(\mu,\widehat{v}(\mu,s))$ and $(\mu,\widehat{v}(\mu,s))\in \widehat{v}^P_s$, i.e., for each $(\mu,s)$ such that $\tau^{pp}(\mu,s)>0$ a pooling equilibrium exists at $(\mu,s)$. That is $\text{sp}[\tau^{pp}]\subseteq P$. Obviously, $(\mu,\widehat{v}(\mu,s))\in\widehat{v}_s$. Therefore, $(\mu_0,V^{pp}(\mu_0))\in con((\widehat{v}_s)_{s\in S})$. If, in addition $V^{pp}(\mu_0)=V^{co}(\mu_0)$, then $\tau^{pp}$ attains the extended commitment payoff. By implication $\tau^{pp}\in T^{co}$.\end{proof}

\subsection{Nominal Ratings}

\begin{proof}[Proof of Corollary \ref{naive}] 
\textit{Proof of Claim (i).} 

 \noindent \textit{Step 1. We show that there exists $\nu_L(\mu_0)>0$ such that, if $\nu<\nu_L(\mu_0)$, then for all $\mu\in[0,1]$,}
 \begin{align*}
     F(\mu)+\frac{\nu-k}{1-\nu}+\frac{k}{1-\nu}\mu<F'(\mu_0)(\mu-\mu_0)+F(\mu_0).
 \end{align*}
 Consider 
 \begin{align*}
 Q(\nu)&\equiv \min_{\mu\in[0,1]}F'(\mu_0)(\mu-\mu_0)+F(\mu_0)-F(\mu)-\Big(\frac{\nu-k}{1-\nu}+\frac{k}{1-\nu}\mu\Big),
 \end{align*}
 and let $\mu^*(\nu)$ be the argminimum. By the Maximum Theorem, $Q(\cdot)$ is continuous. Therefore, it is enough to show $Q(0)>0$.
\begin{align*}
    Q(0)=F'(\mu_0)(\mu^*(0)-\mu_0)+F(\mu_0)-F(\mu^*(0))+k(1-\mu^*(0)).
\end{align*}
First, suppose $\mu^*(0)<1$. Concavity of $F(\cdot)$ implies $F'(\mu_0)(\mu^*(0)-\mu_0)+F(\mu_0)-F(\mu^*(0))\geq 0$, and hence $Q(0)\geq k(1-\mu^*(0))>0$. Second, suppose $\mu^*(0)=1$. Strict concavity of $F(\cdot)$ and $\mu_0<1$ imply $F'(\mu_0)(\mu^*(0)-\mu_0)+F(\mu_0)-F(\mu^*(0))>0$, and hence $Q(0)>0$.

\noindent\textit{Step 2. We show that if $\nu<\nu_L(\mu_0)$, then any Bayes-Plausible belief system with $\tau(\mu_H)>0$ is worse for sender than pooling on $L$, i.e., $\mu_L=\mu_0$ and $\tau(\mu_L)=1$.}
Consider a prior belief $\mu_0\in(0,1)$ and a Bayes-Plausible distribution supported on posteriors $\{\mu_L,\mu_H\}$ with $\tau(\mu_H)>0$. 
 Consider the analyst's payoff of such a belief system,
 \begin{align*}
 G(\mu_L,\mu_H)\equiv\tau(\mu_L)\widehat{v}(\mu_L,L)+\tau(\mu_H)\widehat{v}(\mu_H,H).
 \end{align*}
 Using Step 1 and the concavity of $F(\cdot)$, we have 
 \begin{align*}
 &\widehat{v}(\mu_H,H)=F(\mu_H)+\frac{\nu-k}{1-\nu}+\frac{k}{1-\nu}\mu_H<F'(\mu_0)(\mu_H-\mu_0)+F(\mu_0),\\
 &\widehat{v}(\mu_L,L)=F(\mu_L)\leq F'(\mu_0)(\mu_L-\mu_0)+F(\mu_0).
 \end{align*}
 Using these bounds, along with $\tau(\mu_H)>0$, we have
 \begin{align*}
 G(\mu_L,\mu_H)<\tau(\mu_L)(F'(\mu_0)(\mu_L-\mu_0)+F(\mu_0))+\tau(\mu_H)(F'(\mu_0)(\mu_H-\mu_0)+F(\mu_0))=F(\mu_0),
 \end{align*}
 where the equality uses Bayes-Plausibility. Note that by pooling on $L$, the analyst's payoff is $\widehat{v}(\mu_0,L)=F(\mu_0)$. This completes the proof of Claim (i).
 
 \noindent\textit{Proof of Claim (ii).}

 Suppose $\mu_0\in(0,1)$. Consider a Bayes-Plausible belief system $\{\mu_L=\mu_0-\delta,\mu_H=\mu_0+\delta\}$, $\tau(\mu_H)=\tau(\mu_L)=1/2$. The payoff of such a belief system is 
 \begin{align*}
 S(\delta,\nu)\equiv G(\mu_0-\delta,\mu_0+\delta)=&\frac{1}{2}F(\mu_0-\delta)+\frac{1}{2}\left(F(\mu_0+\delta)+\frac{\nu-k+k\mu_0+k\delta}{1-\nu}\right).
 \end{align*}
The highest possible payoff from an uninformative belief system is 
\begin{align*}
 P(\nu)=\max\{\widehat{v}(\mu_0,L),\widehat{v}(\mu_0,H)\}=\max\left\{F(\mu_0),F(\mu_0)+\frac{\nu-k+k\mu_0}{1-\nu}\right\}.
\end{align*}
We show that there exists a $\nu^*(\mu_0)\in(0,1)$ and a $\delta^*>0$ such that $S(\delta^*,\nu^*(\mu_0))>P(\nu)$.  Let $\nu^*(\mu_0)=k(1-\mu_0)$. Note that $k<1\Rightarrow \nu^*(\mu_0)\in(0,1)$. Note further, 
 \begin{align*}
     \frac{\nu^*(\mu_0)-k+k\mu_0}{1-\nu}=0\Rightarrow P(\nu^*(\mu_0))=F(\mu_0).
 \end{align*}
It follows that
 \begin{align*}
 S(\delta,\nu^*(\mu_0))=&\frac{1}{2}F(\mu_0-\delta)+\frac{1}{2}\left(F(\mu_0+\delta)+\frac{\nu^*(\mu_0)-k+k(\mu_0+\delta)}{1-\nu^*(\mu_0)}\right)\\
 =&\frac{1}{2}F(\mu_0-\delta)+\frac{1}{2}\left(F(\mu_0+\delta)+\frac{k\delta}{1-\nu^*(\mu_0)}\right).
 \end{align*}
 Moreover, $S(0,\nu^*(\mu_0))=F(\mu_0)$, and
 \begin{align*}
     \frac{\partial S(\delta,\nu^*(\mu_0))}{\partial\delta}\Big|_{\delta=0}=\frac{k}{1-\nu^*(\mu_0))}>0
 \end{align*}
 Therefore, a $\delta^*>0$ exists, such that $S(\delta^*,\nu^*(\mu_0))>F(\mu_0)=P(\nu^*(\mu_0))$. Next, note that $S(\delta^*,\cdot)$ and $P(\cdot)$ are continuous at $\nu=\nu^*(\mu_0)<1$. By implication, a non-empty interval $(\nu_M(\mu_0),\nu_H(\mu_0))$ exists such that, if $\nu\in(\nu_M(\mu_0),\nu_H(\mu_0))$, then $S(\delta^*,\nu)>P(\nu)$. Thus, the proposed informative belief system improves on an uninformative one. Therefore, the optimal rating must be informative.\end{proof}
 \\
 \\
 \begin{proof}[Proof of Remark \ref{underover}] Consider $0<\mu_1<\mu_2<1$. From Figure \ref{conceff}, a sufficient condition for $\underline{\mu}=\mu_1$ and $\overline{\mu}=\mu_2$ consists of two parts. (1) The crossing of the payoff graphs, $\mu^\dag\equiv 1-b/c$, lies between $\mu_1$ and $\mu_2$, i.e., $\mu_1<\mu^\dag<\mu_2$. (2) the tangent line to $\widehat{v}(\cdot,L)$ at $\mu=\mu_1$ is also tangent to $\widehat{v}(\cdot,H)$ at $\mu=\mu_2$. Condition (2) is equivalent to
  \begin{align}\label{S2}
     \frac{d\widehat{v}(\mu,L)}{d\mu}\Big|_{\mu=\mu_1}=\frac{d\widehat{v}(\mu,H)}{d\mu}\Big|_{\mu=\mu_2}=\frac{\widehat{v}(\mu_2,H)-\widehat{v}(\mu_2,L)}{\mu_2-\mu_1}.
 \end{align}

 \textit{Step 1. We show that for any $0<\mu_1<\mu_2<1$, there exist parameters $b>0$ and $c>0$ such that (\ref{S2}) holds.} Substituting $\widehat{v}(\mu,L)$ and $\widehat{v}(\mu,H)$ into (\ref{S2}),
\begin{align*}
F'(\mu_1)=F'(\mu_2)+c=\frac{F(\mu_2)-F(\mu_1)+b-c(1-\mu_2)}{\mu_2-\mu_1}.         
\end{align*}
Solving, we have 
\begin{align}\label{S4}
    c&=F'(\mu_1)-F'(\mu_2)\\\nonumber
    b&=(F'(\mu_1)(1-\mu_1)+F(\mu_1))-(F'(\mu_2)(1-\mu_2)+F(\mu_2)).
    \end{align}

Because $F(\cdot)$ is concave, $c>0$. Furthermore, note that 
\begin{align*}
    \frac{d}{dx}\{F'(x)(1-x)+F(x)\}=F''(x)(1-x)<0,
\end{align*}
where the last inequality follows from concavity of $F(\cdot)$. By implication $b>0$. 

\noindent\textit{Step 2. We show that $\mu_1<\mu^\dag<\mu_2$ for $(b,c)$ in (\ref{S4}).} By routine simplification,
\begin{align*}
    1-\frac{b}{c}-\mu_1=\frac{\mu_2-\mu_1}{F'(\mu_1)-F'(\mu_2)}\left(\frac{F(\mu_2)-F(\mu_1)}{\mu_2-\mu_1}-F'(\mu_2)\right)>0,\\
    \mu_2-\left(1-\frac{b}{c}\right)=\frac{\mu_2-\mu_1}{F'(\mu_1)-F'(\mu_2)}\left(F'(\mu_1)-\frac{F(\mu_2)-F(\mu_1)}{\mu_2-\mu_1}\right)>0,
\end{align*}
where both inequalities follow from concavity of $F(\cdot)$. 

\noindent To complete the proof, note that the values of the parameters $(b,c)$ in (\ref{S4}) are equivalent to $k=c/(1+b)>0$ and $\nu=b/(1+b)\in(0,1)$.\end{proof}

\end{document}